\title{Between Treewidth and Clique-width}
\author{Sigve Hortemo S\ae{}ther}
\author{Jan Arne Telle}
\affil{Department of Informatics, University of Bergen, Norway\\
  \texttt{\{telle,sigve.sether\}@ii.uib.no}}
\theoremstyle{plain}
\newtheorem{theorem}{Theorem}
\newtheorem{lemma}[theorem]{Lemma}
\newtheorem{observation}[theorem]{Observation}
\newtheorem{corollary}[theorem]{Corollary}
\newtheorem{claim}[theorem]{Claim}
\newtheorem{proposition}[theorem]{Proposition}
\newenvironment{qedproof}{\begin{proof}}{
\end{proof}}
\newcommand{\better}{\preceq}%
\newcommand{\tabme}{\=xx\=xx\=xx\=xx\=xx\=xx\=xx\=xx\=\kill}
\newcommand{\edsc}{\mathcal{F}} %
\newcommand{\cutsize}{\delta}
\newcommand{\cert}{\operatorname{cert}}%
\newcommand{\conc}{\oplus}%
\DeclareMathOperator*{\argmax}{argmax}
\newcommand{\bigoh}{\mathcal{O}}
\newcommand{\mm}{\operatorname{mm}}
\newcommand{\mmw}{\operatorname{mmw}}
\newcommand{\mso}{\operatorname{MSO}}
\newcommand{\tw}{\operatorname{tw}}
\newcommand{\act}{\operatorname{act}}
\newcommand{\smw}{\operatorname{smw}}
\newcommand{\sm}{\operatorname{sm}}
\newcommand{\SMW}{sm-width}
\newcommand{\SM}{sm-value}
\newcommand{\card}[1]{\left\lvert #1 \right\rvert}
\newcommand{\weight}[1]{\card{\act(#1:P)}}
\newcommand{\weightin}[2]{\card{\act(#1:P) \cap #2}}
\newcommand{\tot}{\operatorname{tot}}
\newcommand{\compl}[1]{\overline{#1}}
\newcommand{\cut}[1]{\left(#1,\compl{#1}\right)}
\newcommand{\scut}[1]{({#1}, \compl {#1})}
\newcommand{\set}[1]{\left\{#1\right\}}
\newcommand{\OumTime}{\bigoh^{*}(8^k)}
\begin{document}
\maketitle

\begin{abstract}
  Many hard graph problems can be solved efficiently when restricted
  to graphs of bounded treewidth, and more generally to graphs of
  bounded clique-width. But there is a price to be paid for this
  generality, exemplified by the four problems {\sc MaxCut}, {\sc
    Graph Coloring}, {\sc Hamiltonian Cycle} and {\sc Edge Dominating
    Set} that are all FPT parameterized by treewidth but none of which
  can be FPT parameterized by clique-width unless \mbox{FPT = W[1]}, as shown
  by Fomin et al \cite{FGLS,fomin2010intractability}%
  \footnote{In \cite{fomin2010intractability} the problems {\sc Graph
      Coloring}, {\sc Hamiltonian Cycle} and {\sc Edge Dominating Set}
    are shown to be FPT only if \mbox{FPT = W[1]}, while in \cite{FGLS}
    the authors focus on showing that neither of {\sc MaxCut}, {\sc
      Hamiltonian Cycle} and {\sc Edge Dominating Set} can have a
    $f(k)n^{o(k)}$ algorithm parameterized by clique-width unless the
    Exponential Time Hypothethis fails. However, in \cite{FGLS} they
    also show that {\sc MaxCut} is FPT only if \mbox{FPT = W[1]}.}. %
  We therefore seek a structural graph parameter that shares some of
  the generality of clique-width without paying this price.

  Based on splits, branch decompositions and the work of Vatshelle
  \cite{Vatshelle} on Maximum Matching-width, we consider the graph
  parameter sm-width which lies between treewidth and
  clique-width. Some graph classes of unbounded treewidth, like
  distance-hereditary graphs, have bounded sm-width. We show that {\sc
    MaxCut}, {\sc Graph Coloring}, {\sc Hamiltonian Cycle} and {\sc
    Edge Dominating Set} are all FPT parameterized by sm-width.
\end{abstract}

\section{Introduction}

Many hard problems can be solved efficiently when restricted to graphs
of bounded treewidth or even graphs of bounded clique-width.  A
celebrated algorithmic metatheorem of Courcelle \cite{courcelle}
states that any problem expressible in monadic second-order logic
($\mso_2$) is fixed parameter tractable (FPT) when parameterized by
the treewidth of the input graph.  This includes many problems like
{\sc Dominating Set}, {\sc Graph Coloring}, and {\sc Hamiltonian
  Cycle}.  Likewise, Courcelle et al \cite{CMR} show that the subset
of $\mso_2$ problems expressible in $\mso_1$-logic, which does not
allow quantification over edge sets, is FPT parameterized by
clique-width. Originally this required a clique-width expression as
part of the input, but this restriction was removed when Oum and
Seymour \cite{OS} gave an algorithm that, in time FPT parameterized by
the clique-width $k$ of the input graph, finds a
$2^{O(k)}$-approximation of an optimal clique-width expression.

Clique-width is stronger than treewidth, in the sense that bounded
treewidth implies bounded clique-width \cite{CR} but not vice-versa,
as exemplified by the cliques. Can we hope to find a graph width
parameter lying between treewidth and clique-width for which all
$\mso_2$ problems are FPT? Alas no, under the minimal requirement that
cliques should have bounded width, Courcelle et al \cite{CMR} showed
that this would imply P=NP for unary languages.  There are some basic
problems belonging to $\mso_2$ but not $\mso_1$, like {\sc MaxCut},
{\sc Graph Coloring}, {\sc Hamiltonian Cycle} and {\sc Edge Dominating
  Set}. Fomin et al \cite{FGLS,fomin2010intractability} showed that
none of these four problems can be FPT parameterized by clique-width,
unless FPT~=~W[1].  Can we find a graph width parameter lying between
treewidth and clique-width for which at least these four problems are
FPT? Note that one can define trivial parameters having these
properties (e.g. value equal to clique-width if this is at most 3, and
otherwise equal to treewidth) but can we find one yielding new FPT
algorithms for certain natural graph classes?  This is the question
motivating the present paper, and the answer is yes.  We give a
parameter which is low when the graph has low treewidth in local
parts, and where each of these parts are connected together in a dense
manner.

Before explaining our results, let us mention some related work.  A
class of graphs can have bounded treewidth only if it is sparse.
Indeed, the introduction of clique-width was motivated by the desire
to extend algorithmic results for bounded treewidth also to some dense
graph classes. Let us say that a parameter x is weaker than parameter
y, and y stronger than x, if for any graph class, a bound on x implies
a bound on y. Alternatively, x and y are of the same strength, or
incomparable.  Thus, clique-width is stronger than treewidth.  As we
discussed above there are limitations inherent in clique-width and
there have been several suggestions for width parameters weaker than
clique-width but still bounded on some dense graph classes.  In
particular, let us mention four parameters: neighborhood diversity
introduced by Lampis in 2010 \cite{LampisESA2010}, twin-cover
introduced by Ganian in 2011 \cite{GanianIPEC2011}, shrub-depth
introduced by Ganian et al in 2012 \cite{GanianetalMFCS2012}, and
modular-width proposed by Gajarsk{\'y} et al in 2013 \cite{IPEC2013}.
All these parameters are bounded on some dense classes of graphs, all
of them are weaker than clique-width, but none of them are stronger
than treewidth. Modular-width is stronger than both neighborhood
diversity and twin-cover, but incomparable to shrub-depth
\cite{IPEC2013}. {\sc Graph Coloring} and {\sc Hamiltonian Cycle} are
W-hard parameterized by shrub-depth but FPT parameterized by
modular-width, as recently shown by Gajarsk{\'y} et al \cite{IPEC2013}
which also leaves as an open problem the complexity of {\sc MaxCut}
and {\sc Edge Dominating Set} parameterized by modular-width.

In our quest for a parameter stronger than treewidth and weaker than
clique-width, for which the four basic problems {\sc MaxCut}, {\sc
  Graph Coloring}, {\sc Hamiltonian Cycle} and {\sc Edge Dominating
  Set} become FPT, we are faced with two tasks when given a graph $G$
with parameter-value $k$: we need an FPT algorithm returning a
decomposition of width $f(k)$, and we need a dynamic programming
algorithm solving each of the four basic problems in FPT time when
parameterized by the width of this decomposition. The requirement that
the parameter be stronger than treewidth is a guarantee that it shares
this property with clique-width and will capture large tree-like
classes of graphs, also when some building blocks are dense. Arguably
the most natural way to hierarchically decompose a graph are the
so-called branch decompositions, originating in work of Robertson and
Seymour \cite{RSbrancwidth} and used in the definition of both
rank-width \cite{OS} and boolean-width \cite{BTV}, two parameters of
the same strength as clique-width. Branch decompositions over the
vertex set of a graph can be viewed as a recursive partition of the
vertices into two parts, giving a rooted binary tree where each edge
of the tree defines the cut given by the vertices in the subtree below
the edge.  Using any symmetric cut function defined on subsets of
vertices we can define a graph width parameter as the minimum, over
all branch decompositions, of the maximum cut-value over all edges of
the branch decomposition tree.  Recently, Vatshelle \cite{Vatshelle}
gave a cut-function based on the size of a maximum matching, whose
associated graph width parameter, called MM-width, has the same
strength as treewidth.

In Section 2, based on the work of Vatshelle, we define the parameter
split-matching-width, denoted sm-width, by a cut function based on
maximum matching unless the cut is a split, i.e. a complete bipartite
graph plus some isolated vertices.  The sm-width parameter is stronger
than treewidth and weaker than clique-width.  It is also stronger than
twin-cover but incomparable with neighborhood diversity, shrub-depth
and modular-width.  We finish Section 2 by showing that maximum
matching is a submodular cut function. In Section 3 this is used
together with an algorithm for split decompositions by Cunningham
\cite{Cunningham} and an algorithm for branch decompositions based on
submodular cut functions by Oum and Seymour \cite{OS} to design an
algorithm that given a graph $G$ with sm-width $k$ computes a branch
decomposition of sm-width $O(k^2)$, in time $O^*(8^k)$. To our
knowledge the use of split decompositions to compute a width parameter
is novel.

In Section 4, using a slightly non-standard framework for dynamic
programming, we are then able to solve the four basic problems {\sc
  MaxCut}, {\sc Graph Coloring}, {\sc Hamiltonian Cycle} and {\sc Edge
  Dominating Set}, by runtimes $\bigoh^*(8^k), \bigoh^*(k^{5k}),
\bigoh^*(2^{24k^2})$, and $\bigoh^*(3^{5k})$ respectively, when given
a branch decomposition of sm-width $k$.  In Section 5 we show that
some well-known graph classes of bounded clique-width also have
bounded sm-width, e.g. distance-hereditary graphs have clique-width at
most three and sm-width one.  We also show that a graph whose
twin-cover value is $k$ will have sm-width at most $k$, and discuss
classes of graphs where our results imply new FPT algorithms.  In
Section~\ref{sec:conclusion} we give a short summary of our results
and end the paper by some concluding remarks.

\section{Preliminaries}

We deal with finite, simple, undirected graphs $G=(V,E)$ and denote
also the vertex set by $V(G)$ and the edge set by $E(G)$.  For the
subgraph of $G$ induced by $S \subseteq V(G)$ we write $G[S]$, and for
disjoint sets $A, B \subseteq V(G)$ we denote the induced bipartite
subgraph having vertex set $A \cup B$ and edge set $\{uv: u \in A, v
\in B\}$ as $G[A,B]$.
For a set $E'$ of edges, we denote its endpoints by $V(E')$.  For two
graphs $G_1$ and $G_2$, we denote by $G_1 + G_2$ the graph with vertex
set $V(G_1) \cup V(G_2)$ and edge set $E(G_1) \cup E(G_1)$. If a set
of vertices $V'$ or a set of edges $E'$ is written where a graph is
expected (e.g., $G_1 + E'$ or $G_1 + V'$), we interpret $E'$ as the
graph $(V(E'), E')$ and $V'$ as the graph $(V', \emptyset)$.
For $v \in V(G)$ we write $N(v)$ or $N_G(v)$
for the neighbors of $v$ and for $S \subseteq V(G)$ we denote the
neighborhood of $S$ by $N(S) = \bigcup_{a \in S} N(a) \setminus S$ or
$N_G(S)$; note that $N(S) \cap S = \emptyset$.  A matching is a set of
edges having no endpoints in common.

A \emph{split} of a connected graph $G$ is a partition of $V(G)$ into
two sets $V_1, V_2$ such that $|V_1| \geq 2$, $|V_2| \geq 2$ and every
vertex in $V_1$ with a neighbor in $V_2$ has the same neighborhood in
$V_2$ 
(this also means every vertex in $V_2$ with a neighbour in $V_1$ has
the same neighbourhood in $V_1$). %
A graph $G$ with a split $(V_1, V_2)$ can be \emph{decomposed}
into a graph $G_1$ and a graph $G_2$ so that $G_1$ and $G_2$ is the
induced subgraph of $G$ on $V_1$ and $V_2$, respectively, except that
an extra vertex $v$, called a \emph{marker}, is added, and also some
extra edges are added to $G_1$ and $G_2$, so that $N_{G_1}(v) =
N_G(V_2)$ and $N_{G_2}(v) = N_G(V_1)$. If a graph $G$ can be
decomposed to the two graphs $G_1$ and $G_2$, then $G_1$ and $G_2$
\emph{compose} $G$. We denote this by $G = G_1 * G_2$.  A graph that
cannot be decomposed (i.e., a graph without a split) is called a
\emph{prime}. As all graphs of at most three vertices trivially is a
prime, when a prime graph has more than three vertices, it is called a
\emph{non-trivial} prime graph.  A \emph{split decomposition} of a
graph $G$ is a recursive decomposition of $G$ so that all of the
obtained graphs are prime.  For a split decomposition of $G$ into
$G_1, G_2, \ldots, G_k$, a \emph{split decomposition tree} is a tree
$T$ where each vertex corresponds to a prime graph and we have an edge
between two vertices if and only if the prime graphs they correspond
to share a marker. That is, the edge set of the tree is $E(T) =
\{v_iv_j : v_i,v_j \in V(T) \mbox{ and } V(G_i) \cap V(G_j) \neq
\emptyset\}$. To see that this is in fact a tree, we notice that $T$
is connected and that we have an edge for each marker introduced. As
there are exactly one less marker than there are prime graphs, $T$
must be a tree. See Figure \ref{fig:6} for an example.

Given a split decomposition of graph $G$ with prime graphs $G_1, G_2,
\ldots, G_k$, we define $\tot(v:G_i)$ recursively to be $\{v\}$ if
$v \in V(G)$, and otherwise to be $\bigcup_{u \in V(G_j) \setminus
  \{v\}} \tot(u:G_j)$ for the graph $G_j \not= G_i$ containing the marker $v$
in the split decomposition. 
Another way of saying this latter part by
the use of the split decomposition
tree $T$ is: if $v$ is not in $V(G)$, then $\tot(v: G_i)$ is defined
to be the vertices of $V(G)$ residing in the prime graphs of the
connected component in $T[V(T) - G_i]$ where $v$ is also
located.
  From this last definition, we observe that for a prime
graph $G_i$ in a split decomposition of $G$, the function $\tot$
on the vertices of $G_i$ partitions the vertices of $V(G)$. For a set
$V' \subseteq V(G_i)$, we define $\tot(V':G_i)$ to be the union of
$\tot(v:G_i)$ for all $v \in V'$.  For a set $S \subseteq V(G)$, the
inverse function $\tot^{-1}(S:G_i)$, is defined as the minimal set
of vertices $V' \subseteq V(G_i)$ so that $S \subseteq
\tot(V':G_i)$.  We define the \emph{active set} of a vertex $v \in
G_i$, denoted $\act(v:G_i)$ to be the vertices of $\tot(v:G_i)$
that are contributing to the neighborhood of $v$ in $G_i$. That is,
$\act(v:G_i)$ is defined as $N(V(G) \setminus \tot(v:G_i))$. 
See Figure \ref{fig:6} for an
example of $\tot()$ and $\act()$.
Note that if $G$ has a split decomposition into prime graphs $G_1, \ldots,
G_k$, then for any marker $v$ there are exactly two prime graphs $G_i$ and
  $G_j$ containing $v$, and we have $\tot(v:G_i) \cup \tot(v:G_j)
  = V(G)$.

  \begin{figure}[h]
    \begin{minipage}{0.4\linewidth}   
      \centering
      \includegraphics[scale=0.4]{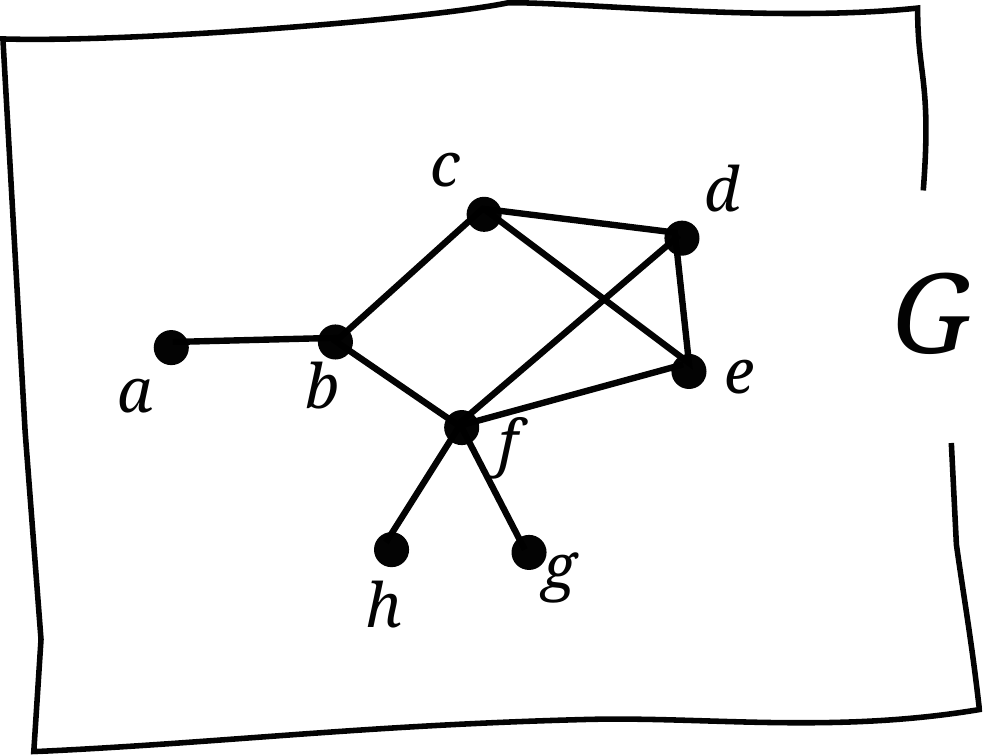}
    \end{minipage}
    \begin{minipage}{0.5\linewidth}
      \includegraphics[scale=0.45]{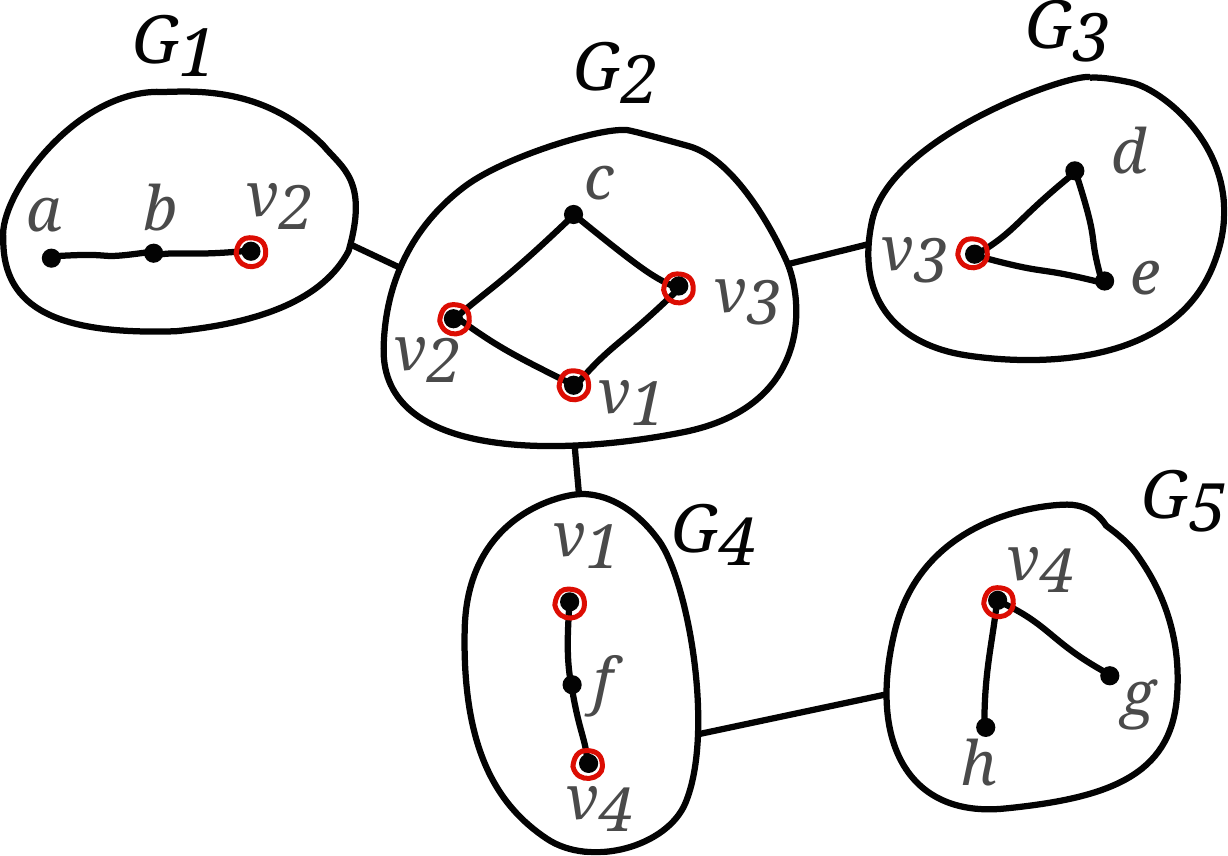}
      \end{minipage}
    \caption{Split decomposition tree of a graph $G$. The markers of each prime graph are circled in red. An
      example of a split decomposition resulting in this tree is: $((G_1
      * G_2) * G_3) * (G_4 * G_5)$. Note that $\tot(\{v_1,v_3\}:G_2) = \{d,e,f,g,h\}$, $\act(v_1:G_4) = \{b,d,e\}$ and
      $\tot^{-1}({a,f,g}:G_2) = \{v_1,v_2\}.$}
    \label{fig:6}
  \end{figure}

  A \emph{branch decomposition} $(T, \delta)$ of a graph $G$ consists
  of a subcubic tree $T$ (a tree of maximum degree $3$) and a
  bijective function $\delta$ from the leaves of $T$ to the vertices
  of $G$.  For a graph $G$ a \emph{cut} $(A, \compl{A})$ for $A
  \subseteq V(G)$ is a bipartition of verices of $G$. For a cut $(A, B)$ of $G$, we say the edges in $G$ with one enpoint in $A$ and the other in $B$ \emph{cross} the cut $(A, B)$. %
  In a branch decomposition $(T=(V_T, E_T), \delta)$ of a graph $G$,
  each edge $e \in E_T$ partitions $V(G)$ into two parts: the vertices
  mapped by $\delta$ from the leaves of one component of $T - e$, and
  the vertices mapped by $\delta$ from the leaves of the other
  component. Thus each edge of $T$ induces a cut in $G$, namely the
  cut corresponding to that edge's bipartition of $V(G)$.  For a graph
  $G$, a \emph{cut function} $f: 2^{V(G)} \rightarrow \mathbb{N}$ is a
  symmetric ($f(A) = f(\compl{A})$) function on subsets of $V(G)$.
For a branch decomposition $(T, \delta)$ of $G$ its
$f$-width, for a cut function $f$, is the maximum of $f(A)$ over all cuts $(A, \compl{A})$ of $G$ 
induced by the edges of $T$. 
For a graph $G$, its $f$-width, for a cut function $f$, is the minimum $f$-width over all branch
decompositions of $G$.

Vatshelle \cite{Vatshelle} defined the Maximum-Matching-width (MM-width) $\mmw(G)$ of a graph $G$ based
on the cut
function $\mm$ defined for any graph $G$ and $A \subseteq V(G)$ by letting
$\mm(A)$ be the cardinality of a maximum matching of the bipartite graph $G[A,\compl A]$.
In his work, Vatshelle shows that there is a linear dependency between
the treewidth of a graph and the Maximum-Matching-width of the graph.
\begin{theorem} [\cite{Vatshelle}] \label{vatshelle}
  Let $G$ be a graph, then $\frac{1}{3}(\tw(G) + 1) \leq \mmw(G) \leq \tw(G) + 1$
\end{theorem}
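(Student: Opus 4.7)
My plan is to establish the two inequalities in Theorem~\ref{vatshelle} separately by explicit conversions between tree decompositions and branch decompositions with the $\mm$ cut function.

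For the upper bound $\mmw(G)\leq\tw(G)+1$, I would take an optimal tree decomposition $(T,\{X_t\}_{t\in V(T)})$ of width $k=\tw(G)$ and convert it into a branch decomposition $(T',\delta)$ of $G$ in the standard way: refine $T$ so that each vertex of $G$ is attached as a pendant leaf of $T'$ at a node whose bag contains it, then contract the result into a subcubic tree. Any edge of $T'$ induces a cut $(A,\compl A)$ of $V(G)$, and the construction ensures that some bag $X_t$ of the original decomposition separates $A$ from $\compl A$ in $G$. Every crossing edge of $G[A,\compl A]$ then has an endpoint in $X_t$, so $X_t$ is a vertex cover of the bipartite graph $G[A,\compl A]$. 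By K\"onig's theorem, $\mm(A)\leq |X_t|\leq k+1$, and maximising over the edges of $T'$ yields $\mmw(G)\leq \tw(G)+1$.

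For the lower bound $\tfrac{1}{3}(\tw(G)+1)\leq \mmw(G)$, I would start from a branch decomposition $(T,\delta)$ of $\mm$-width $k$ and, for each edge $e$ of $T$ with cut $(A_e,\compl{A_e})$, use K\"onig's theorem to pick a minimum vertex cover $C_e$ of $G[A_e,\compl{A_e}]$ with $|C_e|\leq \mm(A_e)\leq k$. I would then build a tree decomposition on the tree $T$ itself by setting
\[
  X_v \;=\; \bigcup_{e\ni v} C_e
\]
at every internal node $v$, and adjoining $\delta(\ell)$ to the bag at each leaf $\ell$. Since internal nodes of a branch decomposition have degree three, $|X_v|\leq 3k$, so the width of the resulting decomposition is at most $3k-1$, proving $\tw(G)+1\leq 3\,\mmw(G)$.

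The main obstacle is verifying that the lower-bound construction really is a tree decomposition. Vertex coverage is immediate from the leaf bags. For edge coverage of $uv\in E(G)$, I would walk along the $\ell_u$--$\ell_v$ path in $T$: every edge on the path has $uv$ crossing its cut, so its cover contains $u$ or $v$; the two leaf edges force the labels $u$ and $v$ at the endpoints of the path, so at some point two consecutive edges on the path carry different labels, placing both $u$ and $v$ in the bag of the internal node between them. The delicate condition is the connectivity of each set $\{v : u\in X_v\}$, which can fail for arbitrarily chosen minimum covers since $u$ might appear in $C_{e_1}$ and $C_{e_3}$ but not in $C_{e_2}$. I would resolve this by choosing the covers coherently via a K\"onig-type exchange that, whenever freedom allows, prefers covers retaining a vertex once it has appeared along a path; this forces the set of edges $e$ with $u\in C_e$ to be a subpath of $T$, so the corresponding internal nodes form a subtree, completing the verification.
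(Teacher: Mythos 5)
You should first note that the paper itself gives no proof of Theorem~\ref{vatshelle}: it is imported verbatim from Vatshelle \cite{Vatshelle}, so your argument can only be judged on its own merits. Your upper-bound half is fine and is the standard argument: in the branch decomposition obtained from a tree decomposition, the adhesion set $X_{t_1}\cap X_{t_2}$ (or a bag containing it) meets every edge crossing the induced cut, so by K\"onig's theorem $\mm(A)\leq \tw(G)+1$.

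The lower-bound half has a genuine gap, and it is exactly at the point you flag. Defining $X_v=\bigcup_{e\ni v}C_e$ from per-edge minimum vertex covers gives vertex and edge coverage by the argument you sketch, but the connectivity condition is the entire difficulty, and your proposed fix --- a ``K\"onig-type exchange'' that ``prefers covers retaining a vertex once it has appeared'' so that $\{e: u\in C_e\}$ becomes a subpath --- is an assertion, not a proof. Note first that the relevant region for a vertex $u$ is the union of the $\ell_u$--$\ell_w$ paths over neighbours $w$ of $u$, which is a subtree of $T$, not a path, so even the target of the exchange is misstated. More seriously, the exchange must be performed \emph{simultaneously} for all vertices while keeping every $|C_e|\leq k$: retaining $u$ in $C_{e_2}$ may require evicting some other vertex $x$ that a different part of the tree needs retained, or may force $|C_{e_2}|>k$ if $u$ is not in any minimum cover of that cut (you cannot simply pad, since the $3k$ bound on bags is what yields the factor $3$). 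No argument is given that a globally consistent choice exists, and establishing this is essentially the whole content of the inequality. A standard way to close the gap without this consistency claim is to argue by contradiction against a large bramble (or via balanced separators): for each edge $e$ of $T$ a cover $C_e$ of size at most $k$ separates the two sides, so a bramble of order exceeding $3k$ lets you orient every edge of $T$ towards the side containing the elements avoiding $C_e$; at a sink node the union of the three covers has size at most $3k$, yet some bramble element avoids it and must lie in the intersection of the three ``near'' sides, which is empty. This bounds every bramble order by $3k$ and hence $\tw(G)+1\leq 3\,\mmw(G)$ by Seymour--Thomas duality, which is the kind of argument your sketch still needs to supply or replace.
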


In this paper we define the split-matching-width $\smw(G)$ of a graph $G$ based
on the cut function $\sm$
defined for any graph $G$ and $A \subseteq V(G)$ by:
\[\sm(A) = 
\begin{cases}
  1  \mbox{ if $\cut{A}$ is a split of $G$} \\
  \mm(A)=\max\{\card{M} : \mbox{$M$ is a matching of $G[A, \compl A]$}\}  \mbox{ otherwise}
\end{cases}\]%
A cut function $f: 2^{V(G)} \rightarrow
\mathbb{N}$ is said to be submodular if
  for any $A,B \subseteq V(G)$ we have $f(A) + f(B) \geq f(A \cup B) +
  f(A \cap B)$.
The following very general result of Oum and Seymour is central to the field of branch decompositions.

\begin{theorem} [\cite{OS}] \label{lemma:oumSeymour}
  For  symmetric  submodular cut-function $f$  and graph $G$  of
  optimal $f$-width $k$, a branch decomposition of $f$-width at most $3k+1$ can be
  found in $\mathcal{O}^*(2^{3k+1})$ time.
\end{theorem}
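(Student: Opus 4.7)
The plan is to follow the strategy of building a branch decomposition in a top-down fashion, at each step finding a suitable bipartition of the current leaf-set via submodular function minimisation. I would split the argument into a structural lemma and an algorithmic procedure that relies on it.

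The structural lemma I would aim to establish is: if $G$ has $f$-width at most $k$, then for every $A \subseteq V(G)$ with $f(A) \leq 3k+1$ and $|A| \geq 2$ there exists a bipartition $(A_1, A_2)$ of $A$ into non-empty parts with $f(A_1), f(A_2) \leq 3k+1$. To prove it, fix an optimal branch decomposition $(T^*, \delta^*)$ of $f$-width at most $k$ and choose an edge of $T^*$ whose induced cut $(B, \compl{B})$ splits $A$ in a sufficiently balanced way; existence of such an edge follows from a centroid/averaging argument applied to $T^*$. Taking $A_1 = A \cap B$ and $A_2 = A \cap \compl{B}$, the desired bound is obtained from a careful combination of submodularity applied to pairs such as $(A, B)$ and $(A, \compl{B})$, using the symmetry $f(B) = f(\compl{B}) \leq k$ together with $f(A) \leq 3k+1$.

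For the algorithmic part, I would construct the decomposition top-down: initialise with $A = V(G)$ (which has $f(A) \leq 1$ by symmetry with $f(\emptyset)$), and at each internal node apply the structural lemma to obtain a bipartition, then recurse on both sides. Finding such a bipartition reduces to enumerating candidate cuts of $f$-value at most $3k+1$; by standard results on level sets of symmetric submodular functions, the collection of such cuts has size $\mathcal{O}^*(2^{3k+1})$, and each candidate can be evaluated in polynomial time via submodular function minimisation. Since the resulting tree has linear size in $|V(G)|$, the overall running time is $\mathcal{O}^*(2^{3k+1})$.

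The main obstacle is the structural lemma with the precise bound $3k+1$. A naive two-term submodularity calculation only gives something like $f(A_1) \leq f(A) + f(B) - f(A \cup B)$, which in the worst case evaluates to $4k+1$; the proof needs the extra slack provided by a genuinely balanced choice of the splitting edge in $T^*$ and a simultaneous symmetric argument for $A_2$, so that both sides stay under $3k+1$ at once. Arranging this uniformly over all sets $A$ encountered during the recursion, and verifying that the enumeration of small-valued cuts actually fits within the claimed $\mathcal{O}^*(2^{3k+1})$ budget, is the delicate technical core of the argument.
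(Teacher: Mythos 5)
First, note that the paper does not prove this statement at all: it is quoted verbatim from Oum and Seymour \cite{OS}, so there is no internal proof for your argument to be measured against; what can be assessed is whether your sketch would constitute a proof of the cited result, and it would not. The two load-bearing steps are both missing. Your structural lemma (every $A$ with $f(A)\leq 3k+1$ splits into two parts of $f$-value at most $3k+1$, assuming $f$-width at most $k$) is exactly the crux, and you concede that the only inequality you can actually derive is $f(A\cap B)\leq f(A)+f(B)\leq 4k+1$; the hoped-for ``extra slack from a balanced edge of $T^*$'' is not an argument, and there is no obvious way a balance condition on $|A\cap B|$ improves an $f$-value bound, since $f$ need not correlate with cardinality. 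The actual Oum--Seymour proof does not maintain this invariant and split along a single edge of an optimal decomposition; it proceeds by a different, more delicate analysis (iterative refinement of a decomposition, with a structural analysis of sets that cannot be split --- their ``titanic'' sets), which is precisely the part your outline leaves blank.

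Second, the algorithmic step rests on a false claim. You assert that the cuts of $f$-value at most $3k+1$ can be enumerated because ``the collection of such cuts has size $\mathcal{O}^*(2^{3k+1})$.'' For a general symmetric submodular cut function this is simply untrue: already for the maximum-matching cut function $\mm$ on a star $K_{1,n}$, every one of the $2^{n-1}$ bipartitions of the vertex set has value at most $1$, so the number of cuts below any fixed threshold can be exponential in $n$ and no such enumeration is possible. The genuine source of the $2^{3k}$-type factor in Oum and Seymour's running time is not enumeration of low-value cuts but branching over subsets/partitions of small sets combined with submodular function minimization to find minimum cuts separating prescribed sets. As it stands, then, both the structural core and the claimed running time are unsupported, so the proposal does not establish the theorem.
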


There is no abundance of submodular cut functions, but this result
 will be useful to us.

\begin{theorem}\label{theorem:mm_submod}
  The cut function $\mm$ is submodular.
\end{theorem}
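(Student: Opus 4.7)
The plan is to establish submodularity via an exchange argument on maximum matchings. Let $A, B \subseteq V(G)$, and let $M_1$ and $M_2$ be maximum matchings of $G[A \cup B, \overline{A \cup B}]$ and $G[A \cap B, \overline{A \cap B}]$, realizing $\mm(A \cup B)$ and $\mm(A \cap B)$ respectively. The aim is to partition the edge multiset of the multigraph $H := M_1 \cup M_2$ on vertex set $V(G)$ into two matchings $M_A' \subseteq E(G[A, \overline{A}])$ and $M_B' \subseteq E(G[B, \overline{B}])$; then $\mm(A) + \mm(B) \geq |M_A'| + |M_B'| = |M_1| + |M_2| = \mm(A \cup B) + \mm(A \cap B)$ gives submodularity.

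Partition $V(G)$ into the four regions $R_1 = A \cap B$, $R_2 = A \setminus B$, $R_3 = B \setminus A$, and $R_4 = V(G) \setminus (A \cup B)$. Every edge of $M_1$ has exactly one endpoint in $R_4$ and every edge of $M_2$ has exactly one endpoint in $R_1$, so the edges of $H$ come in five types classified by the regions of their endpoints: $R_1R_2$, $R_1R_3$, $R_1R_4$, $R_2R_4$, and $R_3R_4$. Of these, $R_1R_3$- and $R_2R_4$-edges cross only the cut $(A, \overline{A})$ and so must go to $M_A'$; $R_1R_2$- and $R_3R_4$-edges cross only $(B, \overline{B})$ and must go to $M_B'$; $R_1R_4$-edges cross both cuts and may be assigned freely. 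As every vertex has at most one edge from each matching, $H$ has maximum degree $2$ and decomposes into paths, cycles, and isolated vertices, with $M_1$- and $M_2$-edges alternating along each path or cycle. The matching constraint at degree-$2$ vertices then forces a component-wide choice: either send all $M_1$-edges to $M_A'$ and all $M_2$-edges to $M_B'$ (valid iff the component contains no $R_3$-vertex), or do the reverse (valid iff no $R_2$-vertex is present).

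The main obstacle is thus the structural lemma that no component of $H$ contains both an $R_2$- and an $R_3$-vertex. To prove it, assume for contradiction that $u \in R_2$ and $w \in R_3$ lie in the same component, and take a path $(u = v_0, v_1, \ldots, v_k = w)$ in $H$. Since $M_1$-edges need an endpoint in $R_4$ and $M_2$-edges one in $R_1$, no $H$-edge runs directly between $R_2$ and $R_3$, so $k \geq 2$ and $v_1 \in R_1 \cup R_4$. Along the alternating path the matching type of each successive edge is determined by parity, and a short induction shows that at each interior $v_i \in R_1$ the outgoing edge is $v_i$'s $M_1$-edge (other endpoint in $R_4$), while at each interior $v_i \in R_4$ the outgoing edge is $v_i$'s $M_2$-edge (other endpoint in $R_1$). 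Thus $v_i \in R_1 \cup R_4$ for all $i \geq 1$, contradicting $v_k = w \in R_3$. With the lemma in hand, the component-by-component partition yields matchings $M_A'$ and $M_B'$ of total size $|M_1| + |M_2|$, completing the proof.
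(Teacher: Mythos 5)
Your proof is correct and takes essentially the same route as the paper's: both form the (multiset) union of the two matchings, decompose it into alternating paths and cycles, show that no such component can meet both $A \setminus B$ and $B \setminus A$, and then reassign each component's edges wholesale (one matching's edges to the cut $(A,\overline{A})$, the other's to $(B,\overline{B})$) to get the submodularity inequality. Your explicit induction along the path is simply a more detailed rendering of the paper's observation that each component contains at most one vertex of $(A \setminus B) \cup (B \setminus A)$.
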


\begin{qedproof}
  Let $G$ be a graph and $S \subseteq V(G)$. We will say that a
  matching $M \subseteq E(G)$ is a matching of $S$ if each edge of $M$
  has exactly one endpoint in $S$, i.e. $M$ is a matching of the
  bipartite graph $G[S, \compl{S}]$.  To prove that $\mm$ is
  submodular, we will show that for any $A,B \subseteq V(G)$ and any
  matching $M_{A \cup B}$ of $A \cup B$ and $M_{A \cap B}$ of $A \cap
  B$, there must exist two matchings $M_A$ of $A$ and $M_B$ of $B$ so
  that the multiset of edges $M_A \uplus M_B$ is equal to the multiset
  $M_{A \cup B} \uplus M_{A \cap B}$.  First notice that each edge of
  $M_{A \cup B}$ and $M_{A \cap B}$ is a matching of either $A$ or $B$
  (or both).  As the vertices in a matching have degree one, the
  multiset $M_{A \cup B} \uplus M_{A \cap B}$ of edges can be regarded
  as a set of vertex disjoint paths and cycles (note though, we might
  have cycles of size two, as the same edge might be in both of the
  matchings).  We will show that for every such path or cycle $P$
  there exist matchings $M_A^P$ for $A$ and $M_B^P$ for $B$ so that
  $E(P) = E(M_A^P) \cup E(M_B^P)$.  Note that this suffices to prove
  the statement, as there will then also exist matchings $M_A$ of $A$
  and $M_B$ of $B$ so that $E(M_A) \uplus E(M_B) = E(M_{A \cup B})
  \uplus E(M_{A \cap B})$, by taking $M_A$ and $M_B$ as the disjoint
  union of each of the smaller matchings, for $A$ and $B$
  respectively, that exist for each path or cycle $P$ in $M_{A \cup B}
  \uplus M_{A \cap B}$. Since these paths and cycles are
  vertex-disjoint $M_A$ and $M_B$ will be matchings.
  
  Thus, let $P$ be a path or a cycle from $M_{A \cup B} \uplus M_{A \cap B}$. If
  $P$ only contains vertices of $A \cap B$ and $\compl{A \cup B}$,
  each edge of $P$ is a matching of both $A$ and $B$, so we have the
  matchings by setting $M_A = P \cap M_{A \cap B}$ and $M_B = P \cap
  M_{A \cup B}$. %
  Since the edges of
  $E(P)$ alternate between $M_{A \cup B}$ and $M_{A \cap B}$, and
  since all edges from $M_{A \cup B}$ has an endpoint in $\compl{A
    \cup B}$ and all edges from $M_{A \cap B}$ has an endpoint in $A
  \cap B$, there can be at most one vertex $v$ in $P$ belonging to
  $(B \setminus A) \cup (A \setminus B)$
  (it may help to look at 
  Figure \ref{fig:3} where it is clear that no path
  alternating between blue and red edges can touch $(B \setminus A) \cup (A \setminus B)$ twice).
  If there exists such a vertex $v$, assume without loss of generality that $v \in B
  \setminus A$. As each edge in $M_{A \cap B} \cap P$ has exactly one endpoint in
  $A \cap B$, and $P$ contains vertices only of $A \cap B, B \setminus
  A$ and $\compl{A \cup B}$, all the edges of $M_{A \cap B} \cap P$
  has one endpoint in $A$ and one endpoint in $(B \setminus A) \cup
  \compl{A \cup B} = \compl{A}$. So, $M_{A \cap B} \cap P$ is a
  matching of $A$.
  For $M_{A \cup B}$, by the same arguments, each edge in $M_{A \cup
    B} \cap P$ must have one endpoint in $\compl{A \cup B}$ and one
  endpoint in $(B \setminus A) \cup (A \cap B) = B$, making $M_{A \cup
    B} \cap P$ a matching of $B$.
\end{qedproof}

\begin{figure}[h]  
  \centering
  \includegraphics[scale=0.8]
  {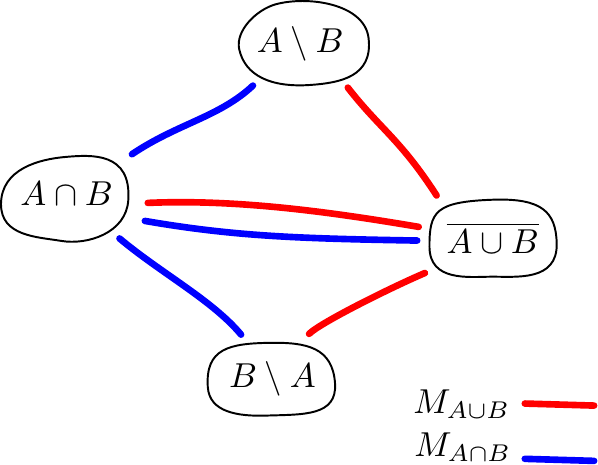}
  \caption{The edges of $M_{A \cap B}$ and $M_{A \cup B}$}
  \label{fig:3}
\end{figure}

\section{Computing an approximate sm-width decomposition}

In this section we design an algorithm that given a graph $G$ finds 
a branch
decomposition of $G$ having sm-width $O(\smw(G)^2)$, in time FPT parameterized by $\smw(G)$.
The algorithm has four main steps:
\begin{enumerate}
 \item Find a split decomposition of $G$ into prime graphs $G_1, G_2,...,G_q$.
 \item For each $G_i$ find a branch decomposition $(T_i, \delta_i)$ of sm-width $O(\smw(G_i))$.
 \item For each $G_i$ restructure $(T_i, \delta_i)$ into $(T_i', \delta_i')$ having the property that any cut of $G_i$,
 induced by an edge of $(T_i', \delta_i')$ and having split-matching value $k$, is lifted, by the split decomposition of
 $G$, to a cut of $G$ having split-matching value $O(k^2)$.
 \item Combine all the decompositions $(T_i', \delta_i')$  into a branch decomposition of $G$ of sm-width $O(\smw(G)^2)$.
\end{enumerate}

For step 1 there exists a well-known polynomial-time algorithm by Cunningham
\cite{Cunningham} and even linear-time ones, see e.g. \cite{Charbit} and see also \cite{Rao}
for the use of split decompositions in general.
For step 2 we are dealing with a prime graph $G_i$, which by definition has
no non-trivial splits and hence $\sm(V_i) = \mm(V_i)$ for all $V_i \subseteq
V(G_i)$ meaning that $\mmw(G_i)=\smw(G_i)$. Furthermore, by Theorem~\ref{theorem:mm_submod} 
the cut function defining $\mmw$ is submodular so we can apply the algorithm of
Oum and Seymour from Theorem~\ref{lemma:oumSeymour} to accomplish the task of step 2. 
Step 3 will require more work. Let us first give a sketch of step 4.
Suppose for each prime graph $G_i$ of a split decomposition of $G$ we
have calculated a branch decomposition $(T_i', \delta_i')$ for $G_i$.
If for every cut
$(X, V(G_i) \setminus X)$ of $G_i$ induced by an edge of  $(T_i', \delta_i')$ we have
$\sm(\tot(X:G_i)) \leq t$ for some value $t$, then we can generate a
branch decomposition of $G$ of \SMW\ at most $t$ by for each pair of prime graphs
$G_i, G_j$ sharing a marker, 
identifying the two leaves of respectively $(T_i', \delta_i')$ and $(T_j', \delta_j')$
mapped to this marker (see figure \ref{fig:2}).

\begin{figure}[h]                                                           
  \begin{minipage}{0.4\linewidth}                                              
    \centering                                                                 
    \fbox{\includegraphics[width=0.95\textwidth,] 
    {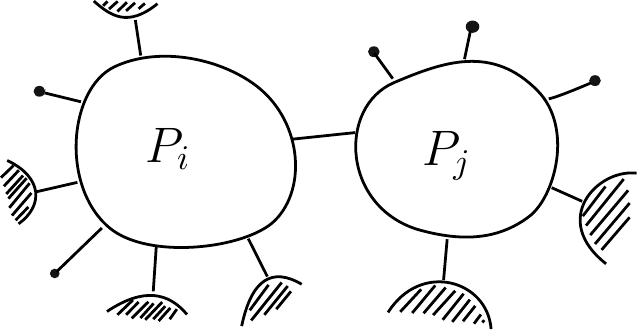}}                                                          
\end{minipage}                                                               
  \begin{minipage}{0.6\linewidth}                                              
    \centering                                                                 
    \fbox{\includegraphics[width=0.90\textwidth] 
    {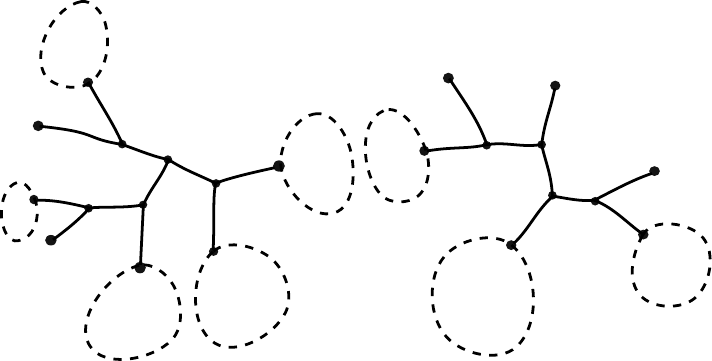}}\\[7pt]                                                   
    $\Downarrow$\\[7pt]                                                        
    \fbox{\includegraphics[width=0.90\textwidth]   
    {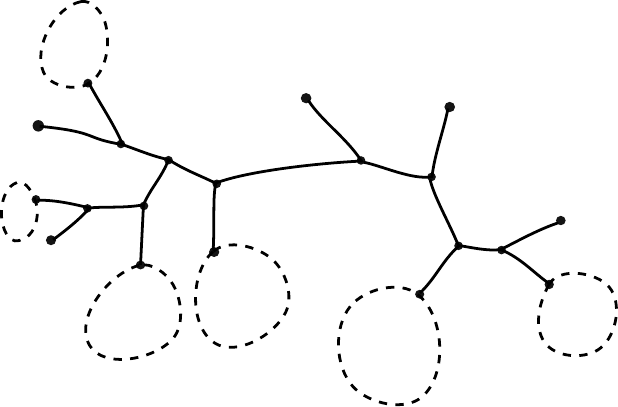}}                                                          
  \end{minipage}                                                               
  \caption{Combining the decomposition of prime graphs sharing a
    marker.  The prime graphs in a split decomposition tree to the
    left and their branch decomposition trees -- before and after
    combining them -- to the right.}
  \label{fig:2}                                                                
\end{figure}  

What remains is step 3, covered by Theorem \ref{thm:PrimBranch}. We
need to relate $\sm(A)$ of a cut $(A, V(G_i) \setminus A)$ in prime
graph $G_i$, induced by an edge of $(T_i', \delta_i')$, to
$\sm(\tot(A:G_i))$ of the associated cut $\cut{\tot(A:G_i)}$ in
$G$. This we do by the series of lemmas from
Lemma~\ref{lemma:1JoinsAreSingleSidedInPrimeGraph} to
Lemma~\ref{lemma:mmRelasjonMellomPogG}. The latter of these lemmas use
the notion of a \emph{heavy} pair of vertices: In a graph $G$ with
$smw(G) < k$ and split decomposition into prime graphs $G_1,
G_2,...,G_q$ we say that adjacent vertices $a,b \in V(G_i)$ are
\emph{heavy} if $\card{\act(a:G_i)} \geq 3k$ and $\card{\act(b:G_i)}
\geq 3k$. The edge connecting a heavy pair is called a \emph{heavy}
edge.

\begin{lemma} \label{lemma:1JoinsAreSingleSidedInPrimeGraph} Let $G$
  be a graph and $P$ a non-trivial prime graph in a split decomposition
  of $G$.  For any split $(X,Y)$ of $G$, there exist a vertex $v \in
  V(P)$ such that either $X \subseteq \tot(v:P)$ or $Y \subseteq
  \tot(v:P)$.
\end{lemma}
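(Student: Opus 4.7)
The plan is to argue by contradiction: assume no vertex $v \in V(P)$ satisfies $X \subseteq \tot(v:P)$ or $Y \subseteq \tot(v:P)$, and derive a contradiction by exhibiting a non-trivial split of $P$. Partition $V(P) = A \cup B \cup M$ where $A = \{v : \tot(v:P) \subseteq X\}$, $B = \{v : \tot(v:P) \subseteq Y\}$, and $M = V(P) \setminus (A \cup B)$ is the set of mixed vertices. Since the sets $\tot(v:P)$ partition $V(G)$, the assumption forces $|A \cup M| \geq 2$ and $|B \cup M| \geq 2$: otherwise $X$ or $Y$ would be contained in a single $\tot(v:P)$.

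I first treat the case $M = \emptyset$, where $(A, B)$ partitions $V(P)$ with $|A|, |B| \geq 2$; I claim this is a split of $P$. The three ingredients are: (i) $u \sim_P w$ iff some edge of $G$ joins $\tot(u:P)$ with $\tot(w:P)$, and in that case the edges between them form the complete bipartite subgraph on $\act(u:P) \times \act(w:P)$; (ii) all vertices of $\act(u:P)$ share a common neighborhood outside $\tot(u:P)$; and (iii) the edges of $G$ crossing $(X,Y)$ form a complete bipartite graph $G[X', Y']$ where $X' = X \cap N_G(Y)$ and $Y' = Y \cap N_G(X)$. For $u \in A$, property (ii) combined with $\tot(u:P) \subseteq X$ forces $\act(u:P) \subseteq X'$ or $\act(u:P) \cap X' = \emptyset$. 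Combining (i) and (iii), for $u \in A$ and $w \in B$, we get $u \sim_P w$ iff $\act(u:P) \subseteq X'$ and $\act(w:P) \subseteq Y'$. Consequently, all $u \in A$ with a neighbor in $B$ share the same $B$-neighborhood $\{w \in B : \act(w:P) \subseteq Y'\}$, establishing the split and contradicting $P$ being prime.

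The main obstacle is the case $M \neq \emptyset$: a mixed vertex $v^* \in M$ has $\act(v^*:P)$ possibly containing both $X$- and $Y$-vertices, so the simple crossing-edge argument no longer directly pins down $v^*$'s side. My plan is to induct on the number of prime graphs in the split decomposition of $G$. The tree edge at $v^*$ induces the split $(\tot(v^*:P), V(G) \setminus \tot(v^*:P))$ of $G$; decomposing $G = G' * G''$ along this edge places $P$ in the decomposition of $G'$, which has strictly fewer prime graphs than $G$. Using (ii) at $v^*$, one verifies that $(X, Y)$ restricts to a split of $G'$ by placing the marker $v^*$ on the appropriate side according to where the common external neighborhood $N_{G'}(v^*)$ meets $X$ and $Y$. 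The inductive hypothesis applied to $G'$ and $P$ then yields a vertex $v \in V(P)$ with $X \subseteq \tot(v:P)_{G'}$ or $Y \subseteq \tot(v:P)_{G'}$, which lifts to the same containment in $G$ (possibly after noting that $\tot(v:P)_G = (\tot(v:P)_{G'} \setminus \{v^*\}) \cup \tot(v^*:P)$ when $v^* \in \tot(v:P)_{G'}$). The base case, a single-prime decomposition, is vacuous since non-trivial primes admit no splits.
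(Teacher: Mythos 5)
Your analysis of the case $M=\emptyset$ is essentially sound, but the inductive step for $M\neq\emptyset$ has a genuine gap at the lifting stage, and it fails exactly in the configuration the lemma must rule out. Write $T=\tot(v^*:P)$ (computed in $G$). Under your standing contradiction hypothesis neither $X$ nor $Y$ lies in a single class, so $X\setminus T$ and $Y\setminus T$ are both nonempty, while $X\cap T$ and $Y\cap T$ are both nonempty because $v^*$ is mixed. In $G'$ the marker $v^*$ is a vertex of $G'$ itself, so in the split decomposition of $G'$ we have $\tot_{G'}(v^*:P)=\{v^*\}$, and $v^*$ belongs to no class $\tot_{G'}(v:P)$ with $v\neq v^*$. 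Hence, when you apply the inductive hypothesis to the restricted cut of $G'$, the side containing the marker can never be the contained side (that would force $v=v^*$ and that side to be the single vertex $v^*$, impossible for a split); the hypothesis can only return containment of the marker-free side, say $Y\setminus T\subseteq\tot_{G'}(v:P)=\tot_G(v:P)$ for some $v\neq v^*$. This does not lift to $Y\subseteq\tot_G(v:P)$, because the nonempty set $Y\cap T$ lies in the different class $\tot_G(v^*:P)$; your parenthetical lift formula applies only when $v^*\in\tot_{G'}(v:P)$, which cannot happen for the vertex the induction returns. So in the mixed case the recursion yields no contradiction, and the entire content of that case --- that a split $(X,Y)$ of $G$ cannot cross the decomposition split $(T,\compl{T})$ at a non-trivial prime node --- is left unproved. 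A secondary issue: the restricted cut need not even be a split of $G'$ when $|X\setminus T|\leq 1$ or $|Y\setminus T|\leq 1$, so the induction is not always applicable.

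For comparison, the paper never recurses. It sets $X_P=\tot^{-1}(X:P)$ and $Y_P=\tot^{-1}(Y:P)$, both of size at least two under the contradiction hypothesis, and, since $X_P\cup Y_P=V(P)$ and $|V(P)|\geq 4$, assigns each mixed vertex to one of the two sides so as to obtain a partition $X_P'\subseteq X_P$, $Y_P'\subseteq Y_P$ of $V(P)$ with both parts of size at least two; it then argues, via representatives of the classes chosen in $X$ and in $Y$ respectively, that $(X_P',Y_P')$ inherits the split property from $(X,Y)$ and is therefore a split of $P$, contradicting primality. Absorbing the mixed vertices into one side and producing a split of $P$ directly --- essentially extending your $M=\emptyset$ argument to an arbitrary assignment of $M$ --- is the idea missing from your treatment of the mixed case.
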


\begin{qedproof} Let $X_P = \tot^{-1}(X:P)$ and $Y_P =
  \tot^{-1}(Y:P)$.  Assume for contradiction that both $\card{X_P}
  \geq 2$ and $\card{Y_P} \geq 2$.  Since $X \cup Y = V(G)$, we have $X_P
  \cup Y_P = V(P)$.
  The fact that $X_P \cup Y_P = V(P)$ and $\card{X_P} \geq 2$,
  $\card{Y_P} \geq 2$, $\card{V(P)} \geq 4$ implies that 
  $V(P)$ has a partition into $X_P', Y_P'$ with $X_P' \subseteq X_P$
  and $Y_P' \subseteq Y_P$, and $\card{X_P'} \geq 2$ and $\card{Y_P'}
  \geq 2$.

  As $P[X_P', Y_P']$ is isomorphic to an induced subgraph of $G[X,Y]$,
  and $(X,Y)$ is a split of $G$, $(X_P', Y_P')$ must also be a split
  in $P$.  However, as both $X_P'$ and $Y_P'$ have cardinality at
  least $2$, this contradicts the fact that a prime graph does not
  have any splits.
\end{qedproof}

\begin{lemma} \label{lemma:kkMeans1join} Let $G$ be a graph, $P$ a
  non-trivial prime graph in a split decomposition of $G$ and
  $(T,\delta)$ a branch decomposition of \SMW\ less than $k$.  For $a
  \in V(P)$ and cut $(X,Y)$ in $(T,\delta)$, if $\card{ X \cap
    \act(a:P) } \geq k$ and $\card{ Y \cap \act(N(a):P) } \geq
  k$, then $(X, Y)$ is a split.
\end{lemma}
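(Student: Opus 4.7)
I would argue by contradiction. Suppose $(X, Y)$ is not a split of $G$. Then $\sm(X) = \mm(X)$, and since $(T, \delta)$ has sm-width less than $k$, the cut $(X, Y)$ induced by that edge of $T$ satisfies $\mm(X) < k$. The plan is to exhibit a matching of size $k$ in $G[X, Y]$ and so contradict this bound.

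The key structural ingredient is the following property of split decompositions: for any two adjacent vertices $a, b$ of a prime graph $P$, the bipartite subgraph $G[\act(a:P), \act(b:P)]$ is complete bipartite. This is built in to the definition of composition — each split places the marker $v$ precisely so that $N_{G_1}(v)$ and $N_{G_2}(v)$ form a complete bipartite subgraph in $G$ — and iterating composition along the split decomposition lifts this to $\act(a:P)$ and $\act(b:P)$ for every adjacent pair $a,b$ in $P$. Once this is granted, the rest of the argument is short.

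Applying this property to the given $a$ and each $b \in N(a)$ (with $N(a)$ the neighborhood of $a$ in $P$), every vertex of $\act(a:P)$ is adjacent in $G$ to every vertex of $\act(b:P)$. Since the sets $\tot(u:P)$ for $u \in V(P)$ partition $V(G)$, the sets $\act(b:P)$ are pairwise disjoint and all disjoint from $\act(a:P)$; hence $G[\act(a:P),\act(N(a):P)]$ is complete bipartite. Choosing any $k$ vertices of $X \cap \act(a:P)$ and any $k$ vertices of $Y \cap \act(N(a):P)$ therefore yields a $K_{k,k}$ sitting inside $G[X,Y]$, and in particular a matching of size $k$ in $G[X,Y]$.

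This contradicts $\mm(X) < k$, so $(X,Y)$ must in fact be a split. The main obstacle is articulating the complete bipartite property of active sets for adjacent vertices of a prime graph cleanly; once that fact is in hand, picking an arbitrary $k$ vertices on each side and reading off a matching is immediate.
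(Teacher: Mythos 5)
Your proposal is correct and follows essentially the same route as the paper: the paper likewise observes that $X \cap \act(a:P)$ and $Y \cap \act(N(a):P)$ induce a complete bipartite subgraph of $G[X,Y]$, so the mm-value of the cut is at least $k$, which together with sm-width less than $k$ forces $(X,Y)$ to be a split. The only difference is that you spell out (via the marker/composition structure of the split decomposition) why the active sets of adjacent vertices of $P$ are completely joined in $G$, a fact the paper asserts without elaboration.
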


\begin{qedproof}
  Since $X \cap \act(a:P)$ and $Y \cap \act(N(a):P)$ form a complete bipartite
  graph with at least $k$ vertices on each side, the $mm$-value of the
  cut $(X, Y)$ must be at least $k$.  And since $(X, Y)$ is a cut in a
  branch decomposition of \SMW\ less than $k$, we conclude that $(X,Y)$
  must be a split.
\end{qedproof}

\begin{lemma} \label{lemma:branch3divide} For any two (not necessarily
  disjoint) vertex subsets $A$ and $B$ of $V(G)$, and in any branch
  decomposition $(T, \delta)$ of $G$, there must exist a cut $(X,Y)$
  in $(T, \delta)$ so that $\card{X \cap A} \geq
  \left\lfloor\frac{\card{A}}{3}\right\rfloor$ and $\card{Y \cap B}
  \geq \left\lfloor\frac{\card{B}}{3}\right\rfloor$.
\end{lemma}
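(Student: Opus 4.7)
The plan is to use a centroid argument on the subcubic tree $T$. I would first dispose of the case where one of $|A|$ or $|B|$ is at most $2$: then the corresponding floor is $0$, so any edge of $T$ yields an acceptable cut. For the main case $|A|, |B| \geq 3$, I would invoke the standard fact that for any non-negative leaf-weight $w$ on $T$ summing to $W$, there is a \emph{centroid} vertex $v$ — one for which every component of $T - v$ has weight at most $W/2$ — and since leaves of $T$ carry weight $0$ or $1$ while $|A|/2 \geq 3/2$, the centroid $v_A$ for the weight $a(\ell) = [\delta(\ell) \in A]$ and the centroid $v_B$ for $b(\ell) = [\delta(\ell) \in B]$ must both be internal vertices of $T$.

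The core argument would then split into two cases. If $v_A \neq v_B$, I would pick the edge $e$ incident to $v_A$ on the unique path from $v_A$ to $v_B$, orienting the cut $(X,Y)$ so that $v_A \in X$ and $v_B \in Y$. Because $Y$ coincides with one component of $T - v_A$, the centroid bound yields $|Y \cap A| \leq |A|/2$, hence $|X \cap A| \geq |A|/2 \geq \lfloor |A|/3 \rfloor$; symmetrically, $X$ is one component of $T - v_B$, so $|X \cap B| \leq |B|/2$ and $|Y \cap B| \geq \lfloor |B|/3 \rfloor$. If $v_A = v_B = v$, the internal vertex $v$ has $d \leq 3$ subtrees $T_1,\ldots,T_d$ whose $A$-weights $a_i$ sum to $|A|$ and $B$-weights $b_i$ sum to $|B|$, with each $a_i \leq |A|/2$ and each $b_i \leq |B|/2$. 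An averaging argument picks some $i$ with $a_i \geq \lfloor |A|/3 \rfloor$ and some $j$ with $b_j \geq \lfloor |B|/3 \rfloor$; cutting at the edge between $v$ and $T_i$ with $X = V(T_i)$ handles both sub-situations. If $i \neq j$, then $T_j \subseteq Y$, so $|Y \cap B| \geq b_j \geq \lfloor |B|/3 \rfloor$; if $i = j$, then the centroid bound on $b_i$ gives $|Y \cap B| \geq |B|/2 \geq \lfloor |B|/3 \rfloor$.

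I do not anticipate any serious obstacle. The only care needed is the bookkeeping of floors versus halves (all handled by the inequality $|A|/2 \geq \lfloor |A|/3 \rfloor$ for every non-negative integer) and the handling of low-degree centroids: $v$ of degree $2$ forces the two subtree weights to be exactly $|A|/2$ and $|B|/2$, which is strictly easier than the $d = 3$ subcase, while the degenerate possibility that the centroid is a leaf is ruled out by the reduction to $|A|, |B| \geq 3$ at the start.
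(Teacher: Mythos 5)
Your argument is correct, but it is not the proof the paper gives. The paper reduces to the ``well known'' single-set fact that a subcubic tree with leaf set $V(G)$ has an edge whose induced cut $(X_S,Y_S)$ satisfies $\card{X_S\cap S}\ge\lfloor\card{S}/3\rfloor$ and $\card{Y_S\cap S}\ge\lfloor\card{S}/3\rfloor$, applies it once to $A$ and once to $B$, and then observes that \emph{every} edge on the tree-path between the two balanced edges induces a cut satisfying the lemma (one side of such an edge contains a whole side of the $A$-balanced cut, the other a whole side of the $B$-balanced cut). You instead work with weighted centroids: an internal vertex $v_A$ (resp.\ $v_B$) all of whose subtrees carry at most half of the $A$-weight (resp.\ $B$-weight), followed by a case analysis ($v_A\ne v_B$, or $v_A=v_B$ with at most three subtrees and an averaging step) that exhibits one concrete edge. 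Your route is more self-contained -- it effectively re-proves the single-set balanced-edge fact rather than citing it -- at the cost of more case bookkeeping; the paper's path argument is shorter once that fact is granted. Two small imprecisions in your write-up, neither fatal: in the case $v_A\ne v_B$ the set $X$ is \emph{contained in} (not equal to) one component of $T-v_B$ whenever the $v_A$--$v_B$ path has length at least two, but containment already gives $\card{X\cap B}\le\card{B}/2$, which is all you use; and in the reduction for $\card{A}\le 2$, ``any edge works'' requires choosing which side of the cut is named $Y$ (the $B$-heavier side), a freedom the lemma statement indeed permits.
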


\begin{qedproof}
  For a single $S \subseteq V(G)$ it is well known that since $T$ is a
  ternary tree with leaf set $V(G)$ there exists a cut $(X_S,Y_S)$ in
  $(T, \delta)$ associated with an edge $(x_S,y_S) \in E(T)$ so that
  $|X_S \cap S| \geq \lfloor \frac{|S|}{3} \rfloor$ and $|Y_S \cap S|
  \geq \lfloor \frac{|S|}{3} \rfloor$.  Consider the path in $T$
  starting in edge $(x_A,y_A)$ and ending in edge $(x_B,y_B)$.  The cut
  associated with any edge on this path will satisfy the statement in
  the lemma.
\end{qedproof}

\begin{lemma} \label{lemma:9kLemma} Let $G$ be a graph, $P$ a
  non-trivial prime graph in a split decomposition of $G$ and $(T,
  \delta)$ a branch decomposition of \SMW\ less than $k$.  If $P$ has
  vertex $b$ such that $\weight{b} \geq 3k$ and $\weight{N(b)} \geq
  9k$, then there must exist vertex $a \in N(b)$ such that $a$ and $b$
  form a heavy pair.
\end{lemma}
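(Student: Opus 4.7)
The plan is to apply Lemma~\ref{lemma:branch3divide} once in a \emph{symmetric} way so as to produce a cut that is balanced on $\act(N(b):P)$, and then push it through the chain Lemma~\ref{lemma:kkMeans1join}~$\to$~Lemma~\ref{lemma:1JoinsAreSingleSidedInPrimeGraph} to locate the desired neighbor. Concretely, I would invoke Lemma~\ref{lemma:branch3divide} with $A = B = \act(N(b):P)$; since $\weight{N(b)} \geq 9k$, this yields a cut $(X,Y)$ of $(T,\delta)$ with $\card{X \cap \act(N(b):P)} \geq 3k$ and $\card{Y \cap \act(N(b):P)} \geq 3k$.

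Next I would verify that $(X, Y)$ is a split. Observing that $\act(b:P)$ is partitioned between the two sides and $\weight{b} \geq 3k$, pigeonhole forces, after swapping the labels of $X$ and $Y$ if necessary, $\card{X \cap \act(b:P)} \geq \lceil 3k/2 \rceil \geq k$. Together with $\card{Y \cap \act(N(b):P)} \geq 3k \geq k$, Lemma~\ref{lemma:kkMeans1join} (applied with $a := b$) then certifies that $(X,Y)$ is a split of $G$. This use of the hypothesis $\weight{b} \geq 3k$, together with the symmetric choice $A = B$, is what I expect to be the crux; the main obstacle it sidesteps is the more natural asymmetric choice $A = \act(b:P)$, $B = \act(N(b):P)$, which can leave the cut with $X \subseteq \tot(b:P)$ and deliver no useful neighbor.

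Finally I would apply Lemma~\ref{lemma:1JoinsAreSingleSidedInPrimeGraph} to the split $(X,Y)$ to obtain $v \in V(P)$ with $X \subseteq \tot(v:P)$ or $Y \subseteq \tot(v:P)$. The first case will fail: $X$ meets $\act(b:P) \subseteq \tot(b:P)$ and also meets $\act(N(b):P) \subseteq \tot(N(b):P) = \bigcup_{a \in N(b)} \tot(a:P)$, two disjoint pieces of the partition of $V(G)$ induced by $\tot(\cdot:P)$ on $V(P)$, so $X$ cannot sit inside any single $\tot(v:P)$. Thus $Y \subseteq \tot(v:P)$, and since $Y \cap \act(N(b):P)$ is nonempty and lies inside $\bigcup_{a \in N(b)} \tot(a:P)$, the vertex $v$ is forced to belong to $N(b)$. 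Setting $a := v$, the containment $Y \cap \act(N(b):P) \subseteq \tot(a:P) \cap \act(N(b):P) \subseteq \act(a:P)$ (using $\act(a':P) \subseteq \tot(a':P)$ and the disjointness of the $\tot$'s over $V(P)$) will give $\weight{a} \geq 3k$, establishing that $\{a,b\}$ is a heavy pair as required.
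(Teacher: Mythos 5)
Your proof is correct, but it takes a genuinely different and leaner route than the paper's. The paper argues by contradiction (assuming every $v \in N(b)$ has $\weight{v} < 3k$) and invokes Lemma~\ref{lemma:branch3divide} twice: first asymmetrically with $A=\act(b:P)$, $B=\act(N(b):P)$ to get a split $(X,Y)$, then uses the contradiction hypothesis together with Lemma~\ref{lemma:1JoinsAreSingleSidedInPrimeGraph} to force $\act(N(b):P) \subseteq Y$, applies Lemma~\ref{lemma:branch3divide} a second time symmetrically to $\act(N(b):P)$, and finally exploits the laminar structure of branch-decomposition cuts ($X \subseteq X'$ or $X \subseteq Y'$) to reach a contradiction via Lemmas~\ref{lemma:kkMeans1join} and \ref{lemma:1JoinsAreSingleSidedInPrimeGraph}. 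You instead give a direct, constructive argument: a single symmetric application of Lemma~\ref{lemma:branch3divide} to $A=B=\act(N(b):P)$ (here the $9k$ hypothesis yields $3k$ on each side), a pigeonhole step placing at least $\lceil 3k/2\rceil \geq k$ vertices of $\act(b:P)$ on one side so that Lemma~\ref{lemma:kkMeans1join} certifies a split, and then Lemma~\ref{lemma:1JoinsAreSingleSidedInPrimeGraph}, whose "$X$-side" alternative is killed because $X$ meets both $\tot(b:P)$ and the disjoint set $\tot(N(b):P)$; the surviving alternative $Y \subseteq \tot(a:P)$ directly identifies the heavy partner, since $Y \cap \act(N(b):P) \subseteq \act(a:P)$ gives $\weight{a} \geq 3k$. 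Your version uses the same three supporting lemmas and the same constants, but avoids the contradiction framework, the second invocation of Lemma~\ref{lemma:branch3divide}, and the comparison of two cuts along a path in $T$; the paper's extra machinery is exactly the price of its asymmetric first cut, which, as you note, may end up with $X \subseteq \tot(b:P)$ and pin down no neighbor. The one point to state explicitly if you write this up is that the $\tot(\cdot:P)$ sets over $V(P)$ partition $V(G)$ (the paper records this), which is what justifies both the disjointness argument ruling out $X \subseteq \tot(v:P)$ and the containment $\tot(a:P) \cap \act(N(b):P) \subseteq \act(a:P)$.
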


\begin{qedproof}
  Assume for contradiction that this is not the case, and all vertices
  $v \in N(b)$ have $\weight{v} < 3k$.  By
  Lemma~\ref{lemma:branch3divide} applied to $A=\act(b:P)$ and $B=\act(N(b):P)$,
  there must be a cut $(X,Y)$ in $(T,\delta)$ so that $\card{\act(b:P) \cap
    X} \geq k$ and $\card{\act(N(b):P) \cap Y} \geq 3k$.  We first show
  that $\act(N(b):P) \subseteq Y$ and thus $\card{\act(N(b):P) \cap Y} \geq 9k$.

  By Lemma~\ref{lemma:kkMeans1join}, $(X,Y)$ is a split.  Since no $v
  \in N(b)$ has $\weight{v} \geq 3k$ the fact that $\card{Y \cap
    \act(N(b):P)} \geq 3k$ means that $\card{\tot^{-1}(Y:P)} \geq 2$.  By
  Lemma~\ref{lemma:1JoinsAreSingleSidedInPrimeGraph}, this means $X
  \subseteq \tot(b:P)$ and thus $\tot(N(b):P) \subseteq V \setminus X = Y$
  and thus $\act(N(b):P) \subseteq Y$.

  Again, by Lemma~\ref{lemma:branch3divide}, applied to $\act(N(b):P)=A=B$,
  there must exist a cut $(X',Y')$ in $(T, \delta)$ so that $\card{X'
    \cap \act(N(b):P)} \geq 3k$ and $\card{Y' \cap \act(N(b):P)} \geq 3k$.  Since
  we have already shown that $\act(N(b):P) \subseteq Y$ and both $(X, Y)$
  and $(X',Y')$ are cuts in $(T, \delta)$, either $X \subseteq X'$ or
  $X \subseteq Y'$.  Without loss of generality, let us assume $X
  \subseteq X'$.  Since there are at least $3k$ vertices of $\act(N(b):P)$
  in $Y'$ and at least $k$ vertices of $\act(b:P)$ in $X \subset X'$, by
  Lemma~\ref{lemma:kkMeans1join} $(X',Y')$ must be a split.  The
  assumption that all $v \in N(b)$ have $\weight{v} < 3k$ means that
  $\tot^{-1}(X':P)$ and $\tot^{-1}(Y':P)$ contain at least two vertices
  from $\act(N(b):P)$, which together with the fact that $(X', Y')$ is a
  split is a contradiction of
  Lemma~\ref{lemma:1JoinsAreSingleSidedInPrimeGraph}.  Therefore, our
  assumption was wrong, and there must exist vertex $a \in N(b)$ such
  that $\weight{a} \geq 3k$.
\end{qedproof}

\begin{lemma} \label{lemma:abSubtree} Let $G$ be a graph, $P$ a
  non-trivial prime graph in a split decomposition of $G$ and $(T,
  \delta)$ a branch decomposition of \SMW\ less than $k$.  For any
  heavy pair $a,b$ in $P$ we have that there must exist a cut $(X,Y)$ in $(T,
  \delta)$ where $\tot^{-1}(X:P) = \set{a,b}$ and %
  $\card{N(X)} < k$.
\end{lemma}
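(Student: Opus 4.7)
The plan is to mimic the proof of Lemma \ref{lemma:9kLemma}. Applying Lemma \ref{lemma:branch3divide} to the sets $\act(a:P)$ and $\act(b:P)$, both of cardinality at least $3k$ by the heavy pair assumption, produces a cut $(X_0, Y_0)$ in $(T, \delta)$ with $\card{X_0 \cap \act(a:P)} \geq k$ and $\card{Y_0 \cap \act(b:P)} \geq k$. Since $b \in N_P(a)$ gives $\act(b:P) \subseteq \act(N(a):P)$, Lemma \ref{lemma:kkMeans1join} makes $(X_0, Y_0)$ a split of $G$, and Lemma \ref{lemma:1JoinsAreSingleSidedInPrimeGraph} combined with the disjointness of the totals forces either $X_0 \subseteq \tot(a:P)$ or $Y_0 \subseteq \tot(b:P)$. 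By symmetry I assume $X_0 \subseteq \tot(a:P)$, so $\tot^{-1}(X_0:P) = \set{a}$.

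To promote $\set{a}$ to $\set{a,b}$, I will run the symmetric argument to obtain a cut $(X_1, Y_1)$ with $Y_1 \subseteq \tot(b:P)$, and then trace the unique path in $T$ between the edges realizing $(X_0, Y_0)$ and $(X_1, Y_1)$. Along this path $\tot^{-1}(X_e:P)$ evolves one branch subtree at a time, starting at $\set{a}$ and ending at something contained in $\set{b}$. The key claim, whose verification is the combinatorial heart of the proof, is that somewhere on this path an edge $e^{*}$ arises with $\tot^{-1}(X_{e^{*}}:P) = \set{a,b}$; otherwise every intermediate edge would attach some foreign $\tot(v:P)$-branch with $v \notin \set{a,b}$ to the evolving $X$-side while still preserving the substantial $\act(a:P)$- or $\act(b:P)$-content inherited along the walk, and iterating Lemmas \ref{lemma:kkMeans1join} and \ref{lemma:1JoinsAreSingleSidedInPrimeGraph} would pin some cut along the path inside a single $\tot(v:P)$ in a way incompatible with $P$ being a non-trivial prime.

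Let $(X, Y)$ denote the cut at $e^{*}$. Since $X$ meets two distinct totals it is not contained in any single $\tot(v:P)$, and $\card{V(P)} \geq 4$ forces $Y$ to contain $\tot(v:P)$ for at least two $v \in V(P) \setminus \set{a,b}$ so $Y$ is not inside any single $\tot(v:P)$ either. By Lemma \ref{lemma:1JoinsAreSingleSidedInPrimeGraph} the cut $(X, Y)$ is therefore not a split, and the \SMW\ hypothesis on $(T, \delta)$ gives $\mm(X) < k$.

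The main obstacle is strengthening $\mm(X) < k$ to $\card{N(X)} < k$. The key structural lever is that at each adjacent pair of markers in $P$, the edges of $G$ between the corresponding totals form a complete bipartite graph between the relevant active sets. Since the $\act(a:P)$- and $\act(b:P)$-content of $X$ is substantial, every vertex of $N(X)$ lying in $\tot(v:P)$ for $v \in (N_P(a) \cup N_P(b)) \setminus \set{a,b}$ sits in such a biclique attached to $X$; if $\card{N(X)} \geq k$, then picking a distinct $X$-endpoint from each of these bicliques (together with a parallel accounting for $N(X) \cap (\tot(a:P) \cup \tot(b:P))$ using the internal structure of $X$ inside each total) builds a matching of size $\geq k$ in $G[X, Y]$, contradicting $\mm(X) < k$. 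Hence $\card{N(X)} < k$, completing the proof.
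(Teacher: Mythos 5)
Your opening (Lemma~\ref{lemma:branch3divide} on $\act(a:P),\act(b:P)$, then Lemma~\ref{lemma:kkMeans1join} and Lemma~\ref{lemma:1JoinsAreSingleSidedInPrimeGraph}) coincides with the paper, but the two steps you yourself flag as the heart of the argument are exactly the ones you do not supply, and neither goes through as sketched. First, ``running the symmetric argument'' does not give you a second cut with $Y_1 \subseteq \tot(b:P)$: Lemma~\ref{lemma:branch3divide} applied to the pair $\act(a:P),\act(b:P)$ is a single statement yielding one cut and the single disjunction ``$X_0 \subseteq \tot(a:P)$ or $Y_0 \subseteq \tot(b:P)$'', and you cannot choose which disjunct holds. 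The paper obtains its second cut by applying Lemma~\ref{lemma:branch3divide} to $\act(a:P)$ alone and then using $\act(a:P) \subseteq \compl{X_1}$ together with the laminarity of decomposition cuts to force, via Lemmas~\ref{lemma:kkMeans1join} and \ref{lemma:1JoinsAreSingleSidedInPrimeGraph}, one side of the new cut inside $\tot(a:P)$. Second, your key claim --- that some edge on the path between the two cuts has $\tot^{-1}(X_{e^*}:P)=\set{a,b}$ exactly --- is asserted, not proved, and the contradiction you gesture at (``a cut pinned inside a single $\tot(v:P)$, incompatible with primality'') is not a contradiction: cuts confined to a single total are entirely consistent with $P$ being prime; your own starting cut is such a cut. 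Nothing prevents the first subtree hanging off your path from contributing vertices of some $\tot(c:P)$ with $c \notin \set{a,b}$, after which $c \in \tot^{-1}(X_e:P)$ for every later edge of the path, so no edge of the path need work. The paper does not scan this path: it takes the \emph{maximal} sets $A \subseteq \tot(a:P)$ and $B \subseteq \tot(b:P)$ (containing the act-heavy pieces $\compl{X_2}$ and $X_1$) realized by edges of $T$, examines the tripartition $(R_1,R_2,B)$ at the node incident to the $\cut{B}$-edge, and uses maximality plus Lemmas~\ref{lemma:kkMeans1join} and \ref{lemma:1JoinsAreSingleSidedInPrimeGraph} to conclude $R_2=A$; the desired cut is then $(R_1, A\cup B)$, which has the $a$- and $b$-material on the \emph{same} side and so is in general an edge hanging off the path, not on it. That maximality/tripartition argument is the missing content of your proof.

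The final step is also incomplete. Even granting a cut $(X,Y)$ with $\tot^{-1}(X:P)=\set{a,b}$ and $\mm(X)<k$, your argument tacitly assumes $X$ contains at least $k$ vertices of each of $\act(a:P)$ and $\act(b:P)$ --- which your construction never secures, whereas the paper's does because $\compl{X_2}\subseteq A$ and $X_1\subseteq B$. More importantly, the matching you propose only exists for the neighbours of $X$ lying outside $\tot(a:P)\cup\tot(b:P)$, where the split structure of the decomposition makes each such neighbour adjacent to all of $\act(a:P)\cap X$ or all of $\act(b:P)\cap X$, so distinct endpoints are easy to find. For neighbours inside $\tot(a:P)\setminus X$ or $\tot(b:P)\setminus X$ there is no biclique structure: arbitrarily many of them may share a single neighbour in $X$, in which case $\card{N(X)}\geq k$ does not produce a matching of size $k$, and your parenthetical ``parallel accounting using the internal structure of $X$'' is precisely the part that needs a proof. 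The paper instead argues through K\"onig's theorem: a minimum vertex cover of the crossing bipartite graph has size less than $k$, and since every crossing edge meeting $\act(a:P)\cap A$ or $\act(b:P)\cap B$ lies in a biclique with at least $k$ vertices on the $X$-side, a cover of size less than $k$ must take the corresponding neighbours on the $R_1$-side. You would need a covering argument of this type (or to restrict the claim to the neighbours outside the two totals, which is what the later application in Lemma~\ref{lemma:mmRelasjonMellomPogG} actually uses) rather than a direct matching construction.
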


\begin{qedproof} By the definition of a heavy pair, $|\act(a:P)| \geq
  3k$ and $|\act(b:P)| \geq 3k$. We know, by
  Lemma~\ref{lemma:branch3divide} on $\act(a:P)$ and $\act(b:P)$ that
  there must exist a cut $\cut{X_1}$ in $(T, \delta)$ where
  $\weightin{b}{X_1} \geq k$ and $\weightin{a}{\compl{X_1}} \geq k$.
  By Lemma~\ref{lemma:kkMeans1join}, this cut must be a split, and by
  Lemma~\ref{lemma:1JoinsAreSingleSidedInPrimeGraph} either $X_1
  \subseteq tot(b)$ or $\compl{X_1} \subseteq tot(a)$.  Without loss
  of generality, let $X_1 \subseteq tot(b)$.  This means that
  $\act(a:P) \subseteq \tot(a:P) \subseteq \compl{X_1}$.  By
  Lemma~\ref{lemma:branch3divide} on $\act(a:P)$, we know there is a
  cut $\cut{X_2}$ in $(T, \delta)$ so that $\weightin{a}{X_2} \geq k$
  and $\weightin{a}{\compl{X_2}} \geq k$.  Since $\act(a:P) \subseteq
  \compl{X_1}$ and both $\cut{X_1}$ and $\cut{X_2}$ are cuts in $(T,
  \delta)$, either $X_1 \subseteq X_2$ or $X_1 \subseteq \compl{X_2}$.
  Without loss of generality, let $X_1 \subseteq X_2$.  This means
  $\compl{X_2}$ contains vertices of $\act(a:P)$ and of $\act(b:P)$,
  and thus $\card{\tot^{-1}(X_2:P)} \geq 2$.  However, $\cut{X_2}$ is
  a split and $\compl{X_2} \cap \tot(a:P) \neq \emptyset$, so by
  Lemma~\ref{lemma:1JoinsAreSingleSidedInPrimeGraph}, $\compl{X_2}
  \subseteq \tot(a:P)$.

  Now, let  $A$ and $B$ be  the largest sets  $\compl{X_2} \subseteq A
  \subseteq \tot(a:P)$  and $X_1 \subseteq B  \subseteq \tot(b:P)$, so
  that there exist  cuts $\cut{A}$ and $\cut{B}$ in  $(T, \delta)$ (an
  equivalent  way  of  saying  this  is  that  $\cut{B}$  is  the  cut
  associated with the last edge along the path in $T$ from $\cut{X_1}$
  to $\cut{X_2}$ so that $B \subseteq \tot(b:P)$, and $\cut{A}$ is the
  cut associated  with the last  edge from $\cut{X_2}$  to $\cut{X_1}$
  where $A \subseteq \tot(a:P)$).  By the above, such sets must exist.
  Since $T$ is a cubic tree, the edge in $E(T)$ representing $\cut{B}$
  must be adjacent to two  other edges that represent two cuts ${(R_1,
    B \cup  R_2)}$ and  ${(R_2, B \cup  R_1)}$ where $\compl{B}  = R_1
  \cup  R_2$, as depicted  in Figure  \ref{fig:1}.  As  $B \cap  A$ is
  empty, and  there is  a cut $\cut{A}$,  we have either  $A \subseteq
  R_1$ or $A \subseteq R_2$.   Without loss of generality $A \subseteq
  R_2$.   By  maximality  of  $B  \subseteq \tot(b:P)$  we  have  $R_1
  \setminus \tot(b:P) \neq \emptyset$ and thus $\card{\tot^{-1}(B \cup
    R_1:P)} \geq 2$.  However, $(B \cup  R_1, R_2)$ must be a split by
  Lemma~\ref{lemma:kkMeans1join},  since   there  are  at   least  $k$
  vertices  of $\act(a:P)$  in  $A  \subseteq R_2$  and  at least  $k$
  vertices  of  $\act(b:P)$  in   $(B  \cup  R_1)$.   Furthermore,  by
  Lemma~\ref{lemma:1JoinsAreSingleSidedInPrimeGraph},  $R_2  \subseteq
  \tot(a:P)$.   However,  as $A  \subseteq  R_2$  is  the largest  set
  $\compl{X_2} \subseteq  A \subseteq \tot(a:P)$ so  that $\cut{A}$ is
  in $(T,  \delta)$, we have $A =  R_2$.  This means the  cut $(R_1, B
  \cup R_2)$ is  in fact $(R_1, B \cup A)$  and thus $\tot^{-1}(B \cup
  A:P_G) = \set{a,b}$.

  Furthermore, since $\card{\tot^{-1}(B \cup A:P)} = 2$ and
  $\card{V(P)} \geq 4$, by
  Lemma~\ref{lemma:1JoinsAreSingleSidedInPrimeGraph} $(R_1, B \cup A)$
  cannot be a split.  The MM-value of a cut is the same as the
  smallest vertex cover of the bipartite graph associated with cut.
  As all minimal vertex covers of $G[R_1, A \cup B]$ contain
  either at least $k$ vertices of $\act(b:P) \cap B$, at least $k$
  vertices of $\act(a:P) \cap A$, or all the neighbors of $A$ and $B$
  in $R_1$, we conclude that $N(A \cup B) < k$.
\end{qedproof}

\begin{figure}[h]
  \centering
  \includegraphics[scale=0.8]{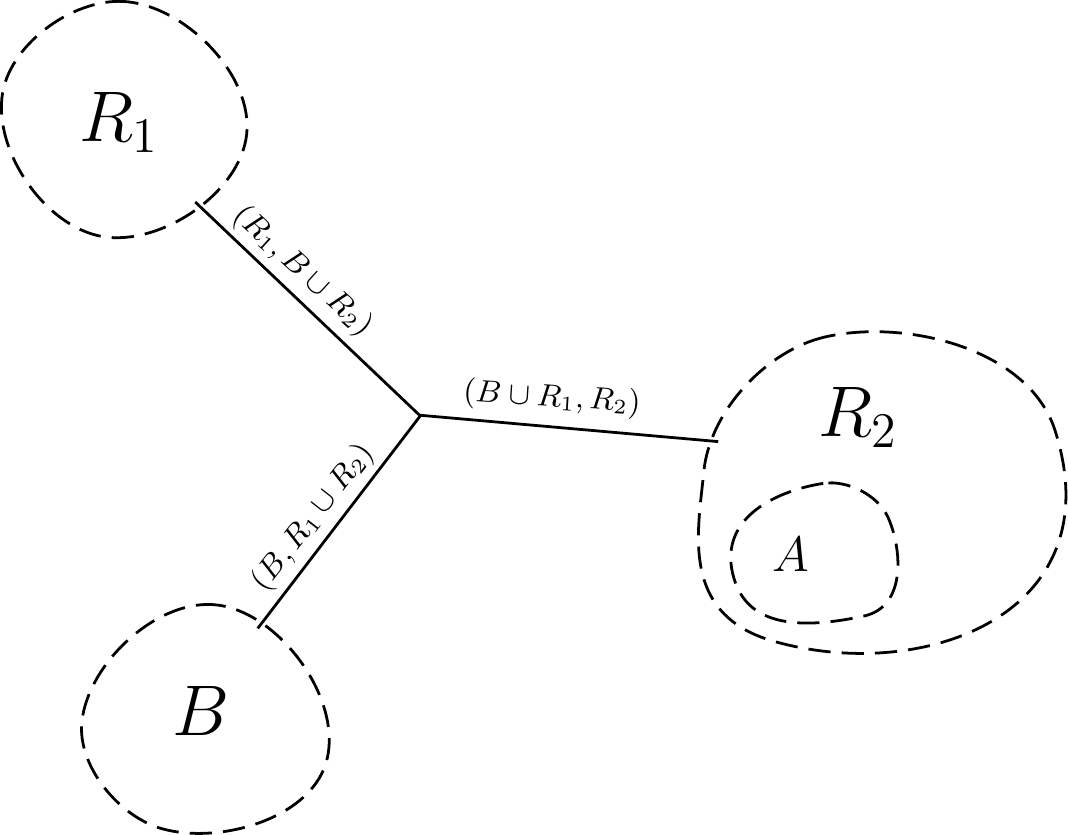}
  \caption{The two cuts incident to $\cut{B}$ in the
    decomposition.}
  \label{fig:1}
\end{figure}

From Lemma~\ref{lemma:abSubtree} we see that each vertex is incident
to at most one heavy edge. From this we deduce the following Corollary.

\begin{corollary}\label{obs:heavyEdgesFormMatching}
  For a non-trivial prime graph, its heavy edges form a matching.
\end{corollary}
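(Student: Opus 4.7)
The plan is to argue by contradiction that no vertex $a \in V(P)$ can lie in two distinct heavy pairs. Suppose $a$ is incident to heavy edges $ab$ and $ac$ with $b \neq c$, and fix a branch decomposition $(T, \delta)$ of sm-width less than $k$. Applying Lemma~\ref{lemma:abSubtree} to each heavy pair produces cuts $(X_1, Y_1)$ and $(X_2, Y_2)$ in $(T, \delta)$ with $\tot^{-1}(X_1:P) = \{a, b\}$ and $\tot^{-1}(X_2:P) = \{a, c\}$; the specific cuts built in its proof additionally satisfy $\card{X_1 \cap \act(a:P)} \geq k$, $\card{X_1 \cap \act(b:P)} \geq k$, and symmetrically for $X_2$ with respect to $\act(a:P)$ and $\act(c:P)$. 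Since any two cuts induced by edges of the same branch decomposition are laminar, exactly one of the four intersections $X_1 \cap X_2$, $X_1 \cap Y_2$, $Y_1 \cap X_2$, $Y_1 \cap Y_2$ is empty, and I would rule out each case in turn.

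Three of the four cases are immediate from the $\tot^{-1}$ data together with the fact that $\tot(v:P) \neq \emptyset$ for every $v \in V(P)$ and that $|V(P)| \geq 4$ since $P$ is non-trivial. If $X_1 \subseteq X_2$ then $\tot(b:P) \cap X_1 \subseteq \tot(b:P) \cap X_2 = \emptyset$, contradicting $b \in \tot^{-1}(X_1:P)$; the case $X_2 \subseteq X_1$ is handled analogously with $c$ in place of $b$. If $Y_1 \cap Y_2 = \emptyset$, so that $V(G) = X_1 \cup X_2$, then any $v \in V(P) \setminus \{a, b, c\}$---which exists because $a, b, c$ are distinct and $|V(P)| \geq 4$---would have $\tot(v:P)$ disjoint from both $X_1$ and $X_2$, forcing $\tot(v:P) = \emptyset$, a contradiction.

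The main obstacle is the remaining case $X_1 \cap X_2 = \emptyset$, which is not precluded by laminarity alone and which requires the extra $\act$-size information. Here $X_2 \subseteq Y_1$, so $\card{Y_1 \cap \act(c:P)} \geq \card{X_2 \cap \act(c:P)} \geq k$, and since $c \in N(a)$ in $P$, this yields $\card{Y_1 \cap \act(N(a):P)} \geq k$. Combined with $\card{X_1 \cap \act(a:P)} \geq k$, Lemma~\ref{lemma:kkMeans1join} forces $(X_1, Y_1)$ to be a split of $G$. However, $\tot^{-1}(X_1:P) = \{a, b\}$ together with $|V(P)| \geq 4$ rules out $(X_1, Y_1)$ being a split via Lemma~\ref{lemma:1JoinsAreSingleSidedInPrimeGraph} (exactly the argument already used inside the proof of Lemma~\ref{lemma:abSubtree}), yielding the final contradiction. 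Since every case fails, $a$ is incident to at most one heavy edge, so the heavy edges form a matching.
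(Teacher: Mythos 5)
Your argument is correct, but note that the paper itself gives essentially no proof: the corollary is presented as an immediate consequence of Lemma~\ref{lemma:abSubtree} (the sentence preceding it is the entire justification), so the comparison is with the intended one-line deduction rather than a written argument. Two observations. First, as you yourself acknowledge, the \emph{statement} of Lemma~\ref{lemma:abSubtree} alone does not suffice; you import from its proof the strengthened conclusion that the side $X$ with $\tot^{-1}(X:P)=\set{a,b}$ contains at least $k$ vertices of each of $\act(a:P)$ and $\act(b:P)$. This import is legitimate (the proof does establish it, and the paper's own later use of the lemma inside Lemma~\ref{lemma:mmRelasjonMellomPogG} leans on the same unstated extra information), but strictly speaking the lemma ought to be restated with that stronger conclusion. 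Second, your route is heavier than necessary: you apply the lemma to both pairs $ab$ and $ac$ and then run a laminarity case analysis on the two cuts, and all cases are handled correctly (minor quibble: laminarity gives that \emph{at least} one, not exactly one, of the four intersections is empty, which is all you actually use). A single application suffices and is presumably what the authors had in mind: take the cut $(X,Y)$ for the pair $a,b$ with $\card{N(X)}<k$; the constructed $X$ meets $\act(a:P)$, while $\act(c:P)\subseteq\tot(c:P)\subseteq Y$ because $\tot^{-1}(X:P)=\set{a,b}$ and $c\notin\set{a,b}$, and since $ac\in E(P)$ every vertex of $\act(c:P)$ is adjacent in $G$ to every vertex of $\act(a:P)$; hence $\act(c:P)\subseteq N(X)$, giving $\card{N(X)}\geq 3k$, a contradiction. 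Your detour through Lemma~\ref{lemma:kkMeans1join} and Lemma~\ref{lemma:1JoinsAreSingleSidedInPrimeGraph} in the case $X_1\cap X_2=\emptyset$ reaches the same contradiction by re-running the split argument from the proof of Lemma~\ref{lemma:abSubtree}; it is sound, just longer, and it routes the complete-bipartite adjacency between active sets through Lemma~\ref{lemma:kkMeans1join} instead of invoking it directly.
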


\begin{lemma}\label{lemma:mmRelasjonMellomPogG}
  Let $smw(G) < k$ and let $P$ be a prime graph in a split
  decomposition of $G$ and let $A \subseteq V(P)$ with $2 \leq |A|
  \leq |V(P)| - 2$.  If no heavy edges cross the cut $(A, V(P)
  \setminus A)$ in $P$ and $\sm(A) < t$ with respect to $P$, then
  $\sm(\tot(A:P))$ with respect to $G$ is less than $9tk$.
\end{lemma}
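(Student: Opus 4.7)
The plan is to reduce bounding $\sm(\tot(A:P))$ with respect to $G$ to constructing a small vertex cover of the bipartite graph $G[\tot(A:P), V(G)\setminus \tot(A:P)]$ by lifting a small vertex cover of $P[A,V(P)\setminus A]$ through active sets of the split decomposition.

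Since $P$ is a non-trivial prime graph and $2 \le \card{A} \le \card{V(P)}-2$, the cut $(A,V(P)\setminus A)$ is not a split of $P$, so $\sm(A) = \mm(A) < t$ with respect to $P$. By K\"onig's theorem applied to $P[A, V(P)\setminus A]$, there is a vertex cover $C_P$ of that bipartite graph with $\card{C_P} < t$. If $(\tot(A:P), V(G)\setminus\tot(A:P))$ happens to be a split of $G$, we immediately obtain $\sm(\tot(A:P)) = 1 < 9tk$, so assume otherwise; then $\sm(\tot(A:P)) = \mm(\tot(A:P))$ and it suffices to exhibit a vertex cover $C_G$ of $G[\tot(A:P), V(G)\setminus\tot(A:P)]$ of size less than $9tk$.

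The next step exploits the structural fact that for any adjacent $u,v \in V(P)$ the graph $G[\act(u:P), \act(v:P)]$ is complete bipartite and accounts for every edge of $G$ between $\tot(u:P)$ and $\tot(v:P)$. Hence the edges of $G$ crossing $(\tot(A:P), V(G)\setminus\tot(A:P))$ decompose, over the edges $uv$ of $P$ crossing $(A, V(P)\setminus A)$, as complete bipartite graphs $G[\act(u:P), \act(v:P)]$, and to cover such a piece it suffices to put either all of $\act(u:P)$ or all of $\act(v:P)$ into $C_G$. For each $c \in C_P$, if $c$ is non-heavy we place $\act(c:P)$ into $C_G$, paying fewer than $3k$ vertices; if $c$ is heavy we place instead the active sets of $c$'s $P$-neighbors on the opposite side of $(A, V(P)\setminus A)$, all of which are non-heavy since no heavy edge crosses the cut.

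The crux is to bound the contribution of each heavy $c \in C_P$ by less than $9k$. For this I would establish the following refinement of Lemma~\ref{lemma:9kLemma}: for any heavy $b \in V(P)$, letting $N_{nh}(b)$ denote the non-heavy neighbors of $b$ in $P$, we have $\card{\bigcup_{v \in N_{nh}(b)} \act(v:P)} < 9k$. The proof mirrors that of Lemma~\ref{lemma:9kLemma} almost verbatim: apply Lemma~\ref{lemma:branch3divide} to $\act(b:P)$ and $\bigcup_{v \in N_{nh}(b)} \act(v:P)$ to locate a cut in $(T, \delta)$, upgrade it to a split via Lemma~\ref{lemma:kkMeans1join}, use Lemma~\ref{lemma:1JoinsAreSingleSidedInPrimeGraph} to force the small side to lie inside $\tot(b:P)$, and then produce a further split whose two $\tot^{-1}$-sides each meet $N_{nh}(b)$ in at least two vertices, contradicting Lemma~\ref{lemma:1JoinsAreSingleSidedInPrimeGraph}; only the non-heavyness of the $v$'s (each having $\card{\act(v:P)} < 3k$) is used. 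Granted this refinement, $\card{C_G} < 9k \cdot \card{C_P} < 9tk$. That $C_G$ is indeed a vertex cover is routine: any $xy \in E(G)$ crossing $(\tot(A:P), V(G)\setminus\tot(A:P))$ arises from some $uv \in E(P)$ crossing $(A, V(P)\setminus A)$ with $x \in \act(u:P)$, $y \in \act(v:P)$, and since $u$ or $v$ lies in $C_P$, the construction places $x$ or $y$ in $C_G$. K\"onig's theorem now yields $\mm(\tot(A:P)) \le \card{C_G} < 9tk$. The main obstacle is the refinement of Lemma~\ref{lemma:9kLemma}; the remainder is a direct vertex-cover lifting argument exploiting the complete bipartite structure imposed by split decompositions on active sets.
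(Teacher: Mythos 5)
Your proposal is correct, and its skeleton coincides with the paper's: apply K\"onig's theorem to get a vertex cover $C_P$ of $P[A,V(P)\setminus A]$ of size less than $t$, lift it through the split decomposition by charging each $c\in C_P$ at most $9k$ vertices of $G$ (using the complete-bipartite structure between active sets of $P$-adjacent vertices), and conclude $\mm(\tot(A:P))<9tk$. Where you diverge is in how the $9k$ per-vertex budget is justified. The paper splits on whether $c$ belongs to a heavy \emph{pair}: if not, the contrapositive of Lemma~\ref{lemma:9kLemma} gives $\card{\act(c:P)}<3k$ or $\card{\act(N(c):P)}<9k$ and one adds the smaller set; if so, the partner lies on the same side of the cut (this is where the no-crossing-heavy-edge hypothesis enters) and Lemma~\ref{lemma:abSubtree} supplies a covering set of size less than $k$. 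You instead split on whether $\card{\act(c:P)}\geq 3k$ (note your ``heavy vertex'' differs from the paper's heavy-pair terminology, but your parenthetical makes the intended meaning clear and consistent), handle the light case with $\act(c:P)$ directly, and for the big case add the active sets of the opposite-side neighbours, all of which are light precisely because no heavy edge crosses the cut; the required bound is your strengthened form of Lemma~\ref{lemma:9kLemma} restricted to light neighbours, which is not stated in the paper but does hold, and your sketch of its proof is accurate -- the lightness of the neighbours is used exactly where the original proof uses the contradiction hypothesis that all neighbours are light, so the argument goes through verbatim with $N(b)$ replaced by $N_{nh}(b)$. The trade-off: the paper's route reuses Lemma~\ref{lemma:abSubtree} as a black box (and even gets a $<k$ charge for heavy-pair vertices, though this does not improve the final constant), while your route avoids Lemma~\ref{lemma:abSubtree} and the same-side observation entirely in this lemma at the cost of proving one additional lemma; both yield the stated $9tk$ bound.
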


\begin{qedproof}
  By K\"onig's theorem the size of a maximum matching and size of a minimum vertex
  cover in a bipartite graph is the same. Since $P$ is a prime graph it has no non-trivial splits and
  thus for $P$ we have $mm(V_i)=sm(V_i)$ for any $V_i \subseteq V(P)$. Thus if $\mm(A) < t$ in $P$, there must
  exist a vertex cover $C \subseteq V(P)$ for $P[A, V(P) \setminus A]$ of cardinality less than
  $t$. Based on the vertices of $C$ we create a
  vertex cover $C' \subseteq V(G)$ for the subgraph $G[{\tot(A:P)}, \compl {\tot(A:P)}]$ of $G$ having cardinality less than $9tk$,
  proving that $\mm(\tot(A:P)) < 9tk$. 
  
  We start with $C'=\emptyset$ and add for each $v \in C$ a set $C_v$ to $C'$.
  For any vertex $v \in C$ not part of any heavy pair in $P$, it follows from 
  Lemma~\ref{lemma:9kLemma} and Lemma~\ref{lemma:abSubtree} that either $\weight{v} <
  3k$ or $\weight{N(v)} < 9k$. In both cases, there is a set
  $C_v \subseteq V(G)$ of size at most $9k$ that we add to $C'$ so that each edge of $G$
  incident with $\tot(v:P)$ (and in particular those crossing the cut $\cut {\tot(A:P)}$) has an endpoint in $C'$. If on the other hand $v$ is
  part of a heavy pair $uv$ in $P$, we note, by the assumption in the lemma, that $u$ must be on the same side
  as $v$ in the cut $(A, V(P) \setminus A)$ of $P$. Again, it follows from 
  Lemma~\ref{lemma:abSubtree} that $\weight{N(\set{u,v})} < k$, so there is a set $C_v \subseteq V(G)$ of 
  size at most $k$ that we add to $C'$ so that all edges in the
  subgraph $G[{\tot(A:P)}, \compl {\tot(A:P)}]$ incident to $\tot(v:P)$ are covered by $C'$. Doing this for every vertex $v \in C$ 
  will lead to a set $C'$ with $|C'| \leq 9|C|k < 9tk$. Furthermore,
  by the definition of a vertex cover, and $\tot()$, the set $C'$
  covers all the edges of $E(G[{\tot(A:P)},{\tot(A:P)}])$.
\end{qedproof}

When deleting a vertex any cut that was a split remains a split or results in a cut with a single vertex on one side, and no new matchings are introduced.

\begin{observation}\label{obs:inducedSubgraph}
  The sm-width of a graph $G$ is at least as big as the sm-width of any
  induced subgraph of $G$.
\end{observation}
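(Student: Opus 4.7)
The plan is to reduce the statement to the single-vertex case and show $\smw(G-v)\leq\smw(G)$ for every $v\in V(G)$; iterating over the vertices not in the target induced subgraph then finishes the proof. Starting from an optimal branch decomposition $(T,\delta)$ of $G$, I would construct a branch decomposition $(T',\delta')$ of $G-v$ by deleting the leaf $\delta^{-1}(v)$ and suppressing the resulting degree-two vertex. Every edge of $T'$ corresponds to some edge of $T$, and the cut it induces in $G-v$ is exactly $(A\setminus\set{v},\compl{A}\setminus\set{v})$ for the corresponding $T$-cut $\cut{A}$. It therefore suffices to prove the pointwise inequality $\sm_{G-v}(A\setminus\set{v})\leq\sm_G(A)$ for every such cut with both parts non-empty.

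I would split the argument into three cases. First, if $\cut{A}$ is a split in $G$, then vertex deletion preserves the property that all vertices on one side with a neighbor across share the same neighborhood; either $(A\setminus\set{v},\compl{A}\setminus\set{v})$ is therefore still a split in $G-v$ (sm-value $1$), or one of its sides has fewer than two vertices (not a split, but with mm-value at most $1$). Either way $\sm_{G-v}(A\setminus\set{v})\leq 1=\sm_G(A)$. Second, if neither cut is a split, then any matching of $G[A\setminus\set{v},\compl{A}\setminus\set{v}]$ is also a matching of $G[A,\compl{A}]$, so $\mm_{G-v}(A\setminus\set{v})\leq\mm_G(A)$ and the inequality is immediate.

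The only subtle case, which I expect to be the main obstacle, is when $\cut{A}$ is not a split in $G$ but $(A\setminus\set{v},\compl{A}\setminus\set{v})$ becomes a split in $G-v$. Here I would argue that the latter being a split forces $|A\setminus\set{v}|\geq 2$ and $|\compl{A}\setminus\set{v}|\geq 2$, so both parts of $\cut{A}$ already have size at least $2$ in $G$. A cut with both sides of size at least two and no crossing edges is vacuously a split, so the hypothesis that $\cut{A}$ is not a split forces at least one edge to cross it in $G$, giving $\mm_G(A)\geq 1=\sm_{G-v}(A\setminus\set{v})$. Combining the three cases yields the desired pointwise inequality, so $(T',\delta')$ has sm-width at most that of $(T,\delta)$, establishing $\smw(G-v)\leq\smw(G)$ and hence the observation.
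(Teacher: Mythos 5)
Your proposal is correct and takes essentially the same route as the paper, which justifies the observation by the one-line remark that deleting a vertex keeps every split a split (or leaves a single vertex on one side) and introduces no new matchings; your leaf-deletion construction and case analysis simply spell this out vertex by vertex. The extra care you take in the third case (a non-split cut of $G$ becoming a split after deletion) is a fine elaboration but does not change the argument.
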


\begin{theorem}\label{thm:PrimBranch}
Let $smw(G) < k$  and let $P$ be a prime graph in a split decomposition of $G$
   We can in $\bigoh^{*}(8^k)$-time
  construct a branch decomposition $(T_P', \delta_P')$ of $P$ so that
  for each cut $(X, Y)$ of $P$ induced by an edge of $(T_P', \delta_P')$, the cut $(\tot(X:P),
  \tot(Y:P))$ of $G$ has \SM\ less than $54k^2$.
\end{theorem}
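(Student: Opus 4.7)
Since $P$ is prime, every non-trivial cut of $P$ satisfies $\sm=\mm$, so $\mmw(P)=\smw(P)$; combined with $\smw(G)<k$ and a monotonicity argument for prime factors in split decompositions, this gives $\mmw(P)<k$. By Theorem~\ref{theorem:mm_submod}, $\mm$ is submodular, so Theorem~\ref{lemma:oumSeymour} produces in $\bigoh^{*}(8^k)$-time a branch decomposition $(T_P,\delta_P)$ of $P$ of $\mm$-width at most $3k$.

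I would next restructure $(T_P,\delta_P)$ so that each heavy pair $\set{a,b}$ becomes a pair of sibling leaves. By Corollary~\ref{obs:heavyEdgesFormMatching} the heavy edges form a matching $M$, so I can handle one heavy pair at a time: detach the two leaves (contracting the resulting degree-$2$ nodes), then splice a new internal node whose children are the two relocated leaves into an edge of the remaining tree. Call the result $(T_P',\delta_P')$. Because heavy pairs are vertex-disjoint, the individual reroutings do not interfere; each only perturbs the $\mm$-value of cuts along its rerouting path by a small additive constant, so $(T_P',\delta_P')$ keeps $\mm$-width $\bigoh(k)$, with a constant that is small enough for the final bound.

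For the target bound, consider any cut $(X,Y)$ induced by an edge of $(T_P',\delta_P')$. If the edge is one of the sibling edges created during restructuring, then $\tot^{-1}(X:P)=\set{a,b}$ for the corresponding heavy pair. Applying Lemma~\ref{lemma:abSubtree} to a branch decomposition of $G$ of sm-width less than $k$ (which exists by $\smw(G)<k$) yields $\card{N_G(\tot(\set{a,b}:P))}<k$, so by the vertex-cover view of matching the lifted cut of $G$ has $\sm$-value less than $k\le 54k^2$. Otherwise no heavy edge crosses $(X,Y)$, so Lemma~\ref{lemma:mmRelasjonMellomPogG} gives $\sm(\tot(X:P)) < 9 \cdot \sm(X) \cdot k$, which with $\sm(X)=\bigoh(k)$ is below $54k^2$ when the constants are tracked carefully.

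The main obstacle is the restructuring step: controlling the $\mm$-width after all reroutings. The matching property of heavy edges (Corollary~\ref{obs:heavyEdgesFormMatching}) is essential here, since each vertex is touched by at most one reroute, so the perturbations compose additively rather than multiplicatively. Some additional care is required in choosing splicing locations so that no single cut ends up accumulating too many sibling pairs on the same side; since there are at most $\card{V(P)}/2$ heavy pairs but each individual splicing adds only two vertices to its chosen cut, a simple ``spread'' policy suffices.
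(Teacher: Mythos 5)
Your setup (prime implies $\sm=\mm$, Observation~\ref{obs:inducedSubgraph} to get $\mmw(P)=\smw(P)<k$, then Oum--Seymour via submodularity of $\mm$ to get a decomposition of $\mm$-width less than $3k$ in $\bigoh^*(8^k)$ time) matches the paper, and so does the overall plan of making heavy pairs into sibling leaves and finishing with Lemma~\ref{lemma:mmRelasjonMellomPogG}. The gap is in the restructuring step, which is exactly the step you flag as the ``main obstacle'' and then wave away. You detach \emph{both} leaves of a heavy pair and splice them into some edge chosen by a ``spread'' policy, claiming the perturbations ``compose additively'' because heavy pairs are vertex-disjoint. Vertex-disjointness only bounds the contribution of a single pair (at most $2$ per cut); it says nothing about how many rerouting paths pass through a given tree edge. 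A relocation of a pair perturbs every cut on the path between the old and new leaf positions, so a single cut can accumulate $\Theta(|V(P)|)$ moved vertices, and no argument is given that any splicing policy keeps this at $\bigoh(k)$ -- let alone at the concrete bound ($<6k$) you would need to invoke Lemma~\ref{lemma:mmRelasjonMellomPogG} with $t=6k$ and land under $54k^2$. As written, ``with a constant that is small enough'' is asserted, not proved.

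The paper's fix is both simpler and quantitative: for each heavy pair $a,b$ crossing a non-leaf cut, move only the leaf of $b$ and reattach it as a sibling of the leaf of $a$. Then the cuts affected by this move are precisely the cuts separating $a$ from $b$, i.e.\ the cuts crossed by the heavy edge $ab$; hence for any cut $A$ of the original tree the number of vertices moved across it is $h(A)$, the number of heavy edges crossing $A$. Since heavy edges form a matching (Corollary~\ref{obs:heavyEdgesFormMatching}), those crossing edges are themselves a matching of $G[A,\compl A]$, so $h(A)\leq \mm(A)$ and $\mm(A')\leq \mm(A)+h(A)\leq 2\mm(A)<6k$; Lemma~\ref{lemma:mmRelasjonMellomPogG} with $t=6k$ then gives the $54k^2$ bound, and the new sibling-pair cuts have no heavy edge crossing them so they are covered by the same lemma (your separate appeal to Lemma~\ref{lemma:abSubtree} for those cuts is in the right spirit but unnecessary). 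Two smaller omissions: you should handle trivial primes ($|V(P)|\leq 3$), where every cut of $P$ is a split and lifts to a split of $G$, and the leaf/singleton cuts, which lift either to a single vertex of $G$ or to a split at a marker.
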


\begin{qedproof}
  If $P$ is a trivial prime graph, i.e. $|V(P) \leq 3|$, every cut $(X, Y)$ of $P$ is a
  split.  This implies by the definition of a split decomposition that
  $(\tot(X:P), \tot(Y:P))$ in $G$ also is a split of $G$. Hence,
  $\sm(\tot(X:P))$ of $G$ equals one.

  We now consider the case when $P$ is non-trivial.  Since $P$ is
  isomorphic to an induced subgraph of $G$ (this follows directly from definition of split decompositions)
  and $\smw(G) < k$, by
  Observation~\ref{obs:inducedSubgraph}, the \SMW\ of $P$ is less than
  $k$.  Also, since $P$ by definition has no splits, we
  have $\mmw(P) = \smw(P) < k$. By Theorem~\ref{theorem:mm_submod} and
  Lemma~\ref{lemma:oumSeymour}, we can compute a branch decomposition
  $(T_P, \delta_P)$ of $P$ with MM-width less than $3k$ in
  $\OumTime$-time. By a non-leaf edge of $T_P$ we mean an edge with both 
  endpoints an inner node of $T_P$. The cut in $P$ induced by a non-leaf edge of $(T_P, \delta_P)$ will
  have at least two vertices on each side. We call such cuts non-leaf cuts of $P$ induced by $(T_P, \delta_P)$.
  Note that cuts having one side containing a singleton $X=\{v\}$ are easy to deal with, 
  either the singleton is a vertex of
  $V(G)$ and then $\tot(X:P)=\{v\}$, or $v$ is a marker and the cut $(\tot(X:P),
  \tot(Y:P))$ of $G$ is a split, and thus in both cases $\sm(\tot(X:P))=1$.
  For the remainder we consider only non-leaf cuts.
  
  Denote by $h(A)$ the number of heavy edges crossing the non-leaf cut
  $(A, V(P) \setminus A)$.  If none of the non-leaf cuts of $P$
  induced by $(T_P, \delta_P)$ have heavy edges crossing them,
  i.e. $h(A)=0$ for all non-leaf cuts, we apply
  Lemma~\ref{lemma:mmRelasjonMellomPogG} with $t=3k$ and are done,
  getting for any cut $(X, Y)$ of $P$ induced by an edge of $(T_P,
  \delta_P)$ a bound of $\sm(\tot(X:P)) \leq 3k9k=27k^2$.  On the
  other hand, if some non-leaf cuts of $P$ induced by $(T_P,
  \delta_P)$ do have heavy edges crossing them, we restructure the
  decomposition $(T_P, \delta_P)$ to a decomposition $(T_P',
  \delta_P')$ as follows: for each heavy pair $a,b$ in $V(P)$ crossing
  such a non-leaf cut we remove the leaf in $T_P$ mapping to $b$ and
  make a new leaf mapping to $b$ as sibling of the leaf mapping to
  $a$.  By Corollary \ref{obs:heavyEdgesFormMatching} the heavy edges
  in $P$ form a matching, so this is easily done for all heavy edges
  of $P$ crossing non-leaf cuts, without conflicts.  Since all such
  heavy pairs are now mapped to leaves that are siblings of $T_P'$
  none of the non-leaf cuts of $P$ induced by $(T_P', \delta_P')$ will
  have a heavy edge crossing them.
  
  Let us look at how the restructuring altered the sm-value of non-leaf cuts.
  Note that for each non-leaf cut $(A', V(P) \setminus A')$ in $(T_P',
  \delta_P')$ there is an associated non-leaf cut $(A, V(P) \setminus A)$ in $(T_P, \delta_P)$ with $h(A)$ heavy edges crossing this cut, such that we move between the two cuts by moving $h(A)$ vertices across the cut.
  We have
  $\mm(A') \leq \mm(A) + h(A)$, as the maximum matching of a cut can increase by at most
  one for each vertex moved over the cut.  Moreover, by Corollary \ref{obs:heavyEdgesFormMatching} the heavy edges
  in $P$ form a matching, which means that $h(A) \leq \mm(A)$, implying
  $\mm(A') \leq 2\mm(A) \leq 2 \times 3k$.  We can therefore apply Lemma~\ref{lemma:mmRelasjonMellomPogG}
  with $t=6k$ and
  this means we have $\sm(\tot(A:P)) \leq 6k9k = 54k^2$. 
\end{qedproof}

\begin{theorem}\label{thm:compBranchDec}
  Given a  graph $G$ with $smw(G)<k$, we can compute  a branch
  decomposition  $(T, \delta)$  of $G$  of  \SMW\ less  than $54k^2$  in
  $\OumTime$-time.
\end{theorem}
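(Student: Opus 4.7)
The plan is to execute the four-step procedure sketched at the start of Section~3, with Theorem~\ref{thm:PrimBranch} carrying all the real work; what remains is a gluing argument together with a runtime bound. First I would invoke Cunningham's algorithm to compute in polynomial time a split decomposition of $G$ into prime graphs $G_1,\ldots,G_q$, keeping the associated split decomposition tree so that for every marker $v$ we know which two prime graphs share it. Since each $G_i$ is isomorphic to an induced subgraph of $G$, Observation~\ref{obs:inducedSubgraph} yields $\smw(G_i)<k$, and $q\le n-1$ so this first step fits comfortably within the stated runtime.

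Second, for every prime graph $G_i$ I would apply Theorem~\ref{thm:PrimBranch} to obtain in $\OumTime$ time a branch decomposition $(T_i',\delta_i')$ of $G_i$ such that every cut $(X,V(G_i)\setminus X)$ induced by an edge of $T_i'$ satisfies $\sm(\tot(X:G_i))<54k^2$ measured in $G$. Summing over the $O(n)$ prime graphs the total time is still $\OumTime$. Third, as illustrated in Figure~\ref{fig:2}, I would combine these into a single decomposition of $G$: for every marker $v$ shared by prime graphs $G_i$ and $G_j$, identify the leaf of $T_i'$ labeled $v$ with the leaf of $T_j'$ labeled $v$ and suppress the resulting degree-two vertex by merging its two incident edges. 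Because the split decomposition tree is a tree, the outcome $T$ is itself a subcubic tree, and its leaves are in natural bijection with $V(G)$, giving the required map $\delta$.

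The main obstacle, and the concluding step, is to verify that every cut of $V(G)$ induced by an edge $e$ of $T$ has \SM\ less than $54k^2$. Each such $e$ descends from an edge $e'$ of exactly one tree $T_i'$ (if $e$ arose from suppressing a marker identification, the two merged edges induce the same bipartition of $V(G)$, so either choice works). Using the identity $\tot(v:G_i)\cup\tot(v:G_j)=V(G)$ for shared markers together with the recursive definition of $\tot$, a straightforward induction along the split decomposition tree shows that the cut of $V(G)$ induced by $e$ coincides exactly with $(\tot(X:G_i),\tot(V(G_i)\setminus X:G_i))$, where $(X,V(G_i)\setminus X)$ is the cut of $G_i$ induced by $e'$. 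Theorem~\ref{thm:PrimBranch} then furnishes $\sm<54k^2$, and the overall runtime is dominated by the applications of Theorem~\ref{thm:PrimBranch}, yielding the claimed $\OumTime$ bound.
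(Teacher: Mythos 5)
Your proposal is correct and follows essentially the same route as the paper: apply Theorem~\ref{thm:PrimBranch} to each prime graph of a split decomposition and glue the resulting branch decompositions at the marker leaves, observing that every cut induced in $G$ by the combined tree is exactly a lifted cut $(\tot(X:G_i),\tot(V(G_i)\setminus X:G_i))$ from some prime graph, so the $54k^2$ bound and the $\OumTime$ runtime carry over. The only (cosmetic) difference is that the paper organizes the gluing as an induction on the number of splits, merging two decompositions at a time, while you identify all marker leaves at once and argue the cut correspondence directly.
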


\begin{qedproof} 
  For any $G'$ in a split decomposition of $G$, be it a prime graph or
  a composition of prime graphs, we claim the following: We can create
  a branch  decomposition $(T_{G'}, \delta_{G'})$ of $G'$  so that for
  each cut $\cut{A}$  in $(T_{G'}, \delta_{G'})$, the \SM\  of the cut
  $(\tot(A:G'), \tot(\compl{A}:G'))$ in $G$  is less than $54k^2$.  We
  call this latter  cut the cut induced in $G$.  We  will give a proof
  of this by induction on the number of splits in $G'$:

  In the  case that $G'$  does not  have a split,  it must be  a prime
  graph and  by Theorem \ref{thm:PrimBranch} the  hypothesis holds. If
  on the other  hand $G'$ does have a split, it  must be decomposed by
  two  subgraphs $G_1$  and $G_2$  in the  split decomposition  of $G$
  sharing a  single marker $v$.   By induction, these two  graphs have
  such branch decompositions  $(T_{G_1}, \delta_{G_1})$ and $(T_{G_2},
  \delta_{G_2})$  inducing cuts  in $G$  where the  \SM\ is  less than
  $54k^2$.   We  will  now  merge  these  two  decompositions  into  a
  decomposition  $(T_{G'}, \delta_{G'})$ for  $G'$. What  we do  is to
  identify  the   vertex  $v$  in  $T_{G_1}$  and   $v$  in  $T_{G_2}$
  ($V(T_{G'})   =  V(T_{G_1})  \cup   V(T_{G_2})$  and   $E(T_{G'})  =
  E(T_{G_1}) \cup E(T_{G_2})$). By  also joining the mapping functions
  $\delta_{G_1}$  and $\delta_{G_2}$  in  the natural  way,  we get  a
  branch decomposition $(T_{G'}, \delta_{G'})$  of $G'$ where each cut
  induced in $G$ is already a  cut induced in $G$ by either $(T_{G_1},
  \delta_{G_1})$ or $(T_{G_2}, \delta_{G_2})$. By induction $(T_{G_1},
  \delta_{G_1})$  and $(T_{G_2}, \delta_{G_2})$  only have  cuts where
  the \SM\ of all induced cuts is less than $54k^2$, so the same holds
  for $(T_{G'}, \delta_{G'})$.

  The  recursive algorithm resulting  from the  above induction  has a
  runtime of  $\OumTime$ on each prime  graph and no  more than linear
  time of work (finding the  marker) in all other partially decomposed
  graphs in  the decomposition, totalling  to a runtime  of $\OumTime$
  since  the  number of  prime  graphs  in  a split  decomposition  is
  polynomial in $n$.
\end{qedproof}

\section{Dynamic programming parameterized by \SMW} %

In this section we solve {\sc MaxCut}, {\sc Graph Coloring}, {\sc
  Hamiltonian Cycle} and {\sc Edge Dominating Set} on a graph $G$ by a
bottom-up traversal of a rooted branch decomposition $(T, \delta)$ of
$G$, in time FPT parameterized by the sm-width of
$(T,\delta)$. Previously, we did not define the tree $T$ to be rooted,
but this will help guide the algorithm by introducing parent-child
relationships. To root $T$, we first pick any edge of $T$ and
subdivide it. We then root the tree in the newly introduced vertex,
resulting in a rooted binary tree consisting of the exact same cuts as
the original decomposition.

In the bottom-up traversal of the rooted tree we encounter two
disjoint subsets of vertices $A,B \subseteq V(G)$, as leaves of two
already processed subtrees, and need to process the subtree on leaves
$A \cup B$.  There are three cuts of $G$ involved: $\cut{A}, \cut{B},
\cut{A \cup B}$, and each of them can be of type split, or of type
non-split (also called type mm for maximum-matching).  This gives six
cases that need to be considered, at least if we use the standard
framework of table-based dynamic programming. We instead use an
algorithmic framework for decision problems where we {\sc join} sets
of certificates while ensuring that the result preserves witnesses for
a 'yes' instance.  Under this framework, the algorithm for {\sc
  MaxCut} becomes particularly simple, and only two cases need to be
handled in the {\sc join}, depending on whether the 'parent cut'
$\cut{A \cup B}$ is a split or not. For the other three problems we
must distinguish between the two types of 'children cuts' in order to
achieve FPT runtime, and the algorithms are more complicated.

Let us describe the algorithmic framework. As usual, e.g. for problems
in NP, a verifier is an algorithm that given a problem instance $G$
and a certificate $c$, will verify if the instance is a
'yes'-instance, and if so we call $c$ a witness.  For our algorithms
we will use a commutative and associative function $\conc(x,y)$, that
takes two certificates $x,y$ and creates a set of certificates. This
is extended to sets of certificates $X_A,X_B$ by $\conc(X_A, X_B)$
which creates the set of certificates $\bigcup_{x_A \in X_A, x_B \in
  X_B} \conc(x_A,x_B)$.  For a graph decision problem, an input graph
$G$, and any $X \subseteq V(G)$ we define $\cert(X)$ to be %
a set of certificates on only a restricted part of $G$, which must be
subject to the following constraints:
  \begin{itemize}
  \item If $G$ is a 'yes'-instance, then $\cert(V(G))$ contains a
    witness. 
  \item For disjoint $X,Y \subseteq V(G)$ we have
    $\conc(\cert(X),\cert(Y)) = \cert(X \cup Y)$.
  \end{itemize}
  For FPT runtime we need to restrict the size of a set of
  certificates, and the following will be useful.  For $X \subseteq
  V(G)$ and certificates $x, y \in \cert(X)$, we say that $x$
  \emph{preserves} $y$ if for all $z \in \cert(\compl X)$ so that
  $\conc(y,z)$ contains a witness, the set $\conc(x,z)$ also contains
  a witness.  We denote this as $x \better_X y$.  A set $S$
  preserves $S' \subseteq \cert(X)$, denoted $S \better_X S'$, if for
  every $x' \in S'$ there exists a $x \in S$ so that $x \better_X
  x'$. A certificate $x \in \cert(X)$ so that there exists a $y \in
  \cert(\compl X)$ where $\conc(x,y)$ contains a witness, is called an
  \emph{important} certificate.

For a rooted branch decomposition $(T, \delta)$ of a graph $G$ and
vertex $v \in V(T)$, we denote by $V_v$ the set of vertices of $V(G)$
mapped by $\delta$ from the leaves of the subtree in $T$ rooted at
$v$.
With these definitions we give a generic recursive (or bottom-up)
algorithm called {\sc Recursive} that takes $(T, \delta)$ and a vertex $w$ of $T$ as input
and returns a set $S \better_{V_w} \cert(V_w)$, as follows:
\begin{itemize}
 \item at a leaf $w$ of $T$ {\sc initialize} and return the set $\cert(\{\delta(w)\})$
 \item at an inner node $w$ first call {\sc Recursive} on each of the  children nodes $a$ and $b$ and then 
 run procedure {\sc Join} on the returned input sets $S_1,S_2$ of certificates, with $S_1 \better_{V_A} \cert(V_a)$
 and $S_2 \better_{V_b} \cert(V_b)$, and return a set $S \better_{V_a \cup V_b} \conc(S_1,S_2)$
 \item at the root we will have a set of certificates $S \better_{V(G)} \cert(V(G))$ 
\end{itemize}
Calling {\sc Recursive} on the root $r$ of $T$ and running a verifier on the output
solves any graph decision problem in NP.
Correctness of this procedure follows from the definitions.
The extra time spent by the verifier is going to be
$\bigoh^*(\card{S})$, and for an FPT algorithm we will require that all
$\card{S}$ be $\bigoh^*(f(k))$, i.e.\ FPT in the sm-width $k$ of $(T, \delta)$.

In the following subsections we show how to solve each of the
respective four problems in FPT time. A rough sketch of the idea of
how this can be achieved for each of the problems is shown below. A
formal definition of each of the problems is given in each of their
respective subsections.

\paragraph{Maximum Cut.}

\textsc{MaxCut} is the one out of the four problems which has the most
simple algorithm. To compute a maximum cut, we will give an algorithm
to solve $t$-{\sc MaxCut}, which instead of maximizing a cut asks for
a cut of size at least $t$. Running $t$-{\sc MaxCut} for increasing
values of $t$, will determine the size of a maximum cut. The
certificates for this problem is subsets of vertices and a witness is
a subset $S$ so that the number of edges with one endpoint in $S$ and
one in $V(G) \setminus S$ is at least $t$ (i.e., witnesses are cuts of
size at least $t$). We show that for a cut $(A, \compl A)$ and subsets
$S_1$ and $S_2$ in $\cert(A)$, if the neighbourhood of $S_1$ and $S_2$
in $\compl A$ are the same, then one of the sets preserves the other
in $A$. As this number is bounded by $2$ and $2^{\mm(A)}$, for split
and non-split cuts, respectively, we will be able to give an FPT
algorithm for solving {\sc MaxCut}.

\paragraph{Hamiltonian Cycle.}

For {\sc Hamiltonian Cycle}, certificates are disjoint paths or
cycles, and a witness is a Hamiltonian cycle.  The important
information is what neighbourhood the endpoints of each path has over
the cut. For each certificate we keep track of the number of
\emph{path classes}, which are sets of paths with the same
neighborhood over the cut, and the size of each such path class.  The
total number of path classes over all certificates is also important.
For a split cut, the size of a class might be anything from 1 to $n$,
but there will be only one class in total. For a non-split cut of
sm-value $k$, the total number of path classes is bounded by $2^{2k}$
and since each path is vertex disjoint the number of paths in any
important certificate is bounded by $k$. Based on this the {\sc Join}
operation will be able to find a FPT-sized set of certificates
preserving a full set.

\paragraph{Chromatic Number.}

For {\sc Chromatic Number}, we will actually solve {\sc $t$-Coloring},
which asks whether the input graph can be colored by at most $t$
colors, and from this conclude that {\sc Chromatic Number} can be
solved in the same time when excluding polynomial factors. We note
that a graph of sm-width $k$, unlike graphs of treewidth $k$, may need
more than $k+1$ colors. We let all partitions into $t$ parts where the
parts induce independent sets be our certificates.  What matters for a
certificate is what kind of certificates it can be combined with to
yield a new certificate, i.e. inducing an independent set also across
the cut.  For non-split cuts, this means the number of important
certificates is bounded by the number of ways to $t$-partition the
vertices in the $k$-vertex cover of the cut, which is a function of
$k$. For a split cut, what is important is the number of parts of a
partition/certificate that have neighbors across the cut. The
certificate minimizing this number will preserve all other
certificates.  Based on this the {\sc Join} operation will be able to
find a preserving set of certificates of FPT-size.

\paragraph{Edge Dominating Set.}

For {\sc Edge Dominating Set} (or {\sc $t$-Edge Dominating Set} which
is what we actually solve) the certificates are subgraphs of $G$ and a
witness is a graph $G' = (V',E')$ so that each vertex in $V'$ is
incident with an edge in $E'$, and $E'$ is an edge dominating set of
$G$ of size at most $t$.  %
The idea of how to make an FPT {\sc Join}-procedure is that for a
vertex cover $C$ of a cut, the number of ways a certificate can
project to $C$ is limited by a function of the size of $C$. Based on
this we find a preserving set of FPT cardinality when $|C|$ is at most
$k$. When $|C|$ is not bounded by $k$, we have a split. For splits we
limit the max number of certificates needed for a preserving set by a
polynomial of $n$. This is because almost all edges on one side of the
cut affect the rest of the edges uniformly, and the other way around.

\subsection{Maximum Cut}

\subsubsection{The Problem.}

The problem \textsc{$t$-MaxCut} asks, for a graph $G$, whether there
exists a set $W \subseteq V(G)$ so that the number of edges in $G[W,
\compl W]$ is at least $t$. For a set $X$, we denote by
$\cutsize_G(X)$ the number of edges in $G[V(G) \cap X, V(G)
\setminus X]$ (note that $X$ does not need to be a subset of $V(G)$).

\subsubsection{The certificates and $\conc$.}

For \textsc{$t$-MaxCut}, we define $\cert(X)$ for $X \subseteq V(G)$
to be all the subsets of $X$, and we define $\conc(x,y)$ to be the union
function; $\conc(x,y) = \{x \cup y\}$.  
We  solve \textsc{$t$-MaxCut} by use of 
\textsc{Recursive} and the below procedure {\sc Join$_{maxcut}$} with input specification as described above. 

\subsubsection{The \textsc{Join$_{maxcut}$} function.}

\begin{tabbing}
  {\bf Procedure {\sc Join$_{maxcut}$}} \\
   xx\={\bf Output:} \=\kill
   \>{\bf Input:} \> $S_1 \better_{V_a} cert(V_a)$ and $S_2 \better_{V_b} cert(V_b)$ with  $A =V_a \cup V_b$\\
   \>{\bf Output:}\> $S \better_{A} \conc(\cert(V_a), \cert(V_b)) = \cert(A)$\\[-0.5em]
  \rule{\textwidth}{0.25pt}\\
  x\tabme{}
  \> $S' \leftarrow \set{s_1 \cup s_2 : s_1 \in S_1, s_2 \in S_2}$  /* note \emph{$S' = \conc(S_1,S_2)$} */\\
  \> $S \leftarrow \emptyset$\\
  \> $C \leftarrow$ a minimum vertex cover of $G[A, \compl A]$\\
  \>\>\>  $S $\=$\leftarrow S' \cup \{c'\}$\kill
  \>\textbf{if} $\cut{A}$ is a split \textbf{then} 
  \textbf{for} $z = 0, \ldots, n$ \textbf{do} \\
  \>\>\>  $c'$\>$\leftarrow \argmax_{c \in S'}\{\cutsize_{G[A]}(c) : 
  \card{N(\compl{A}) \cap c} = z\}$\\
  \>\>\>  $S$\>$\leftarrow S \cup \{c'\}$\\
  \>\textbf{else}
  \textbf{for} all subsets $S_{C} \subseteq C$ \textbf{do} \\  
  \>\>\>  $c'$\>$\leftarrow \argmax_{c \in S'}\left\{\cutsize_{G[A]}(c) :
    S_{C} \cap A = c \cap A \right\}$\\
  \>\>\>  $S$\>$\leftarrow S \cup \{c'\}$
\\  \>\textbf{return} $S$
\\[-0.5em]
  \rule{\textwidth}{0.25pt}
\end{tabbing}

\begin{lemma} \label{trim:cut}
  Procedure {\sc Join$_{maxcut}$} is
  correct and runs in time $\bigoh^*((\card{S_1} \card{S_2})2^k)$,
  producing a set $S$ of cardinality $\bigoh(n+2^k)$.
\end{lemma}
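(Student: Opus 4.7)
The proof has three strands: establishing the base preservation $S' \better_A \cert(A)$, verifying that the filtering produces $S \better_A S'$, and bounding the size and runtime. For the base, note that $\cert(A) = \conc(\cert(V_a), \cert(V_b))$ by the second constraint on $\cert$, and by hypothesis $S_1 \better_{V_a} \cert(V_a)$ and $S_2 \better_{V_b} \cert(V_b)$. A short argument using the associativity of $\conc$ shows that $\better$ composes under $\conc$: given $s_1' \in \cert(V_a)$ and $s_2' \in \cert(V_b)$ with preservers $s_1 \in S_1$, $s_2 \in S_2$, one verifies that $\conc(s_1, s_2)$ preserves $\conc(s_1', s_2')$ in $\cert(A)$. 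Hence $S' := \conc(S_1, S_2) \better_A \cert(A)$, and by transitivity of $\better$ it suffices to prove $S \better_A S'$.

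For the split case, the structural fact I would exploit is that the edges of $G[A, \compl{A}]$ form a complete bipartite graph between $N_A := N(\compl{A}) \cap A$ and $N_B := N(A) \cap \compl{A}$. A direct count gives, for every $c \in \cert(A)$ and $y \in \cert(\compl{A})$, that the number of cross edges between $c \cup y$ and its complement equals
\[ |c \cap N_A|\,(|N_B| - |y \cap N_B|) + (|N_A| - |c \cap N_A|)\,|y \cap N_B|, \]
which depends on $c$ only through $|c \cap N_A|$. Thus if $c, c^* \in S'$ satisfy $|c \cap N_A| = |c^* \cap N_A|$ and $\cutsize_{G[A]}(c) \geq \cutsize_{G[A]}(c^*)$, then $\cutsize_G(c \cup y) \geq \cutsize_G(c^* \cup y)$ for every $y$, so $c \better_A c^*$. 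The procedure selects exactly such a maximizer per value of $|c \cap N_A| \in \{0, \ldots, n\}$, giving $S \better_A S'$ in this branch.

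For the non-split case, K\"onig's theorem gives a minimum vertex cover $C$ of the bipartite graph $G[A, \compl{A}]$ with $|C| = \mm(A) = \sm(A) < k$, bounding the outer loop at $2^{|C|} < 2^k$. The plan is to argue that iterating $S_C$ over $2^C$ enumerates all distinct intersection patterns of $c \cup y$ with $C$, and that such a pattern together with the graph structure determines the cross-cut contribution, since every edge of $E(A, \compl{A})$ has at least one endpoint in $C$. Preservation of the retained maximizer of $\cutsize_{G[A]}$ among consistent certificates then follows. I expect this non-split correctness to be the main obstacle: cross edges $uv$ with $u \in A \setminus C$ force $v \in C \cap \compl{A}$, and the contribution of $u$ is not directly determined by $c \cap C$. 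Handling these requires a careful case analysis on which endpoints of each cross edge lie in $C$, together with a rearrangement argument showing that among certificates consistent with a common $C$-pattern, one can be chosen that dominates the others with respect to every admissible $y$.

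Finally, the size bound $|S| = \bigoh(n + 2^k)$ follows because at most one certificate is added per iteration: at most $n+1$ iterations in the split branch, at most $2^{|C|} < 2^k$ in the non-split branch. The runtime is dominated by constructing $S'$ in $\bigoh^{*}(|S_1||S_2|)$, computing $C$ in polynomial time, and the $\bigoh(n + 2^k)$ filtering iterations each scanning $S'$ to find an argmax, for a total of $\bigoh^{*}(|S_1||S_2| \cdot 2^k)$.
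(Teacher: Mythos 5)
Your framework, size/runtime accounting, and split case all coincide with the paper's proof: the paper likewise reduces the task to showing $S \better_A S' = \conc(S_1,S_2)$ and, for a split, observes that the cross-cut contribution of $\conc(c,x_{\compl A})$ depends on $c$ only through $z = \card{N(\compl A)\cap c}$, so the per-$z$ maximizer of $\cutsize_{G[A]}$ preserves (your two-term count of the crossing edges is in fact more complete than the single product term the paper writes, but the conclusion is identical).

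The genuine gap is the non-split case, which you leave as a plan rather than a proof. You correctly identify the delicate point -- edges $uv$ with $u \in A\setminus C$ and $v \in C\cap\compl A$, whose crossing status is not determined by $c\cap C$ -- and then state that ``a careful case analysis'' and ``a rearrangement argument'' showing the retained maximizer dominates every certificate in its $C$-pattern class are required, without supplying either. That dominance claim \emph{is} the content of the lemma in this case, not a routine verification, so the proposal is incomplete exactly where the work lies. For comparison, the paper closes this case in one step: it asserts that, since $C$ is a vertex cover of $G[A,\compl A]$, the cross-cut term is the same for every $c$ with $c\cap C = x_A\cap C$ (written there as $\cutsize_{G[A,\compl A]}(\conc(c,x_{\compl A})) = \cutsize_{G[A,\compl A]}(x_C)$ with $x_C = x\cap C$), whence the within-class maximizer of $\cutsize_{G[A]}$ preserves $x_A$ and total cut size can only increase. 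To complete your argument along these lines you must either prove this invariance of the cross contribution over a trace class or actually carry out the dominance argument you allude to; note that your own observation about edges leaving $A\setminus C$ shows this is not automatic (whether such an edge crosses depends on whether its endpoint in $A\setminus C$ lies in $c$, which varies within a class), so the step you postponed is precisely the one needing substance -- and is also the point the paper's own one-line justification treats rather lightly.
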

\begin{qedproof}
  We see that $S' \better_A \cert(A)$, since $S' = \conc(S_1, S_2)
  \better_A \cert(A)$, and $S'$ can be calculated in time
  $\bigoh^*(\card{S_1} \card{S_2})$. Finding a vertex cover of a $G[A,
  \compl A]$ can be done in polynomial time, since $G[A, \compl A]$ is a
  bipartite graph. Also, when $\cut A$ is not a split, then
  $\mm(A)\leq k$ and $\card{ C } \leq k$. Combined with a polynomial
  amount of work for each iteration of the for loops, and loops
  iterating at most $\max\{n, 2^k\}$ times (making the size of $S$
  also bounded by $n + 2^k$), the total runtime is
  $\bigoh^*(\card{S}2^k)$.

  To show that $S \better_A S'$ (and thus also $S \better_A \cert(A)$)
  we have to make sure that if there exists a witness $x$ of
  $t$-MaxCut so that for $x_A \in S'$ and $x_{\compl A} \in
  \cert(\compl A)$ we have $\{x\} \subseteq \conc(x_A,x_{\compl A})$,
  then there must exist a certificate $x' \in S$ so that $x' \better_A
  x_A$. We assume there exists such a witness $x$ with $x_A$ and
  $x_{\compl A}$ defined as above. We have two cases to consider; when
  $\scut{A}$ is a split, and when it is not.

  We first consider the case when $\cut A$ is a split. Since $x$ is a
  witness, $\cutsize_G(x) \geq t$. Let $z = \card{ N(\compl{A}) \cap x
  }$.  We have
$    \cutsize_G(x) = \cutsize_{G[A]}(x_A) + \cutsize_{G[\compl
      A]}(x_{\compl A}) + \cutsize_{G[A,\compl A]}(x), \mbox{ and }\\
    \cutsize_{G[A, \compl A]}(x) = \card{ N(\compl A) \cap x_A } \times
    \card{ N(A) \setminus x_{\compl A}} = z \card{ N(A) \setminus
      x_{\compl A}}.$
  Since $S$ contains $c \in S'$ maximizing $\max_{c \in S'} \{
  \cutsize_{G[A]}(c) : \card{N(\compl{A} \cap c)} = z\}$, we have 
 $ \cutsize_G(\conc(c,x_{\compl A})) = \cutsize_G(x) +
  \cutsize_{G[A]}(c) - \cutsize_{G[A]}(x_{A}) \geq
  \cutsize_G(x)$ meaning $\conc(c,x_{\compl A})$ is a witness, and so $S \better_A S'$.

   Now, consider the case when $\cut A$ is not a split (this means
   $\mm(A) \leq k$). Let $C$ be the vertex cover used in the procedure
   and $x_{C}$ be $x \cap C$.  As $C$ disconnects $A$ and $\compl A$,
   we have $\cutsize_G(x) = \cutsize_{G[A]}(x_A) + \cutsize_{G[\compl
     A]}(x_{\compl A}) + \cutsize_{G[A, \compl A]}(x_{C})$. We notice
   that for all $c \in S'$ so that $c \cap C = x_A \cap C$, we have
   $\cutsize_{G[A, \compl A]}(\conc(c,x_{\compl A})) = \cutsize_{G[A, \compl A]}(x_{C})$. 
Therefore, as $S$ contains the certificate
   $c$ of $S'$ where $c$ maximizes $\max_{c \in
     S'}\left\{\cutsize_{G[A]}(c) : x_A \cap {C} = c \cap A \right\}$,
   we must have that $\cutsize_G(\conc(c,x_{\compl A})) \geq
   \cutsize_G(x)$.  So $\conc(c, x_{\compl A})$ is also a witness, and
   hence $S \better S'$.
\end{qedproof}

\begin{theorem} \label{thm:maxCut}
  Given a graph $G$ and branch decomposition $(T, \delta)$ of \SMW\
  $k$, we can solve %
  \textsc{MaxCut} in time $\bigoh^*(8^k)$.
\end{theorem}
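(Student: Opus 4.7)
The plan is to run the generic \textsc{Recursive} procedure on the rooted branch decomposition $(T,\delta)$ using \textsc{Join$_{maxcut}$} as the combine step, and then read off the answer from the set $S$ returned at the root. By correctness of \textsc{Recursive} together with Lemma~\ref{trim:cut}, we have $S \better_{V(G)} \cert(V(G))$. At the root, $\compl{V(G)} = \emptyset$ and $\cert(\emptyset)=\{\emptyset\}$, so a certificate $c \in \cert(V(G))$ is a witness for $t$-\textsc{MaxCut} precisely when $\cutsize_G(c) \geq t$. Hence $\max_{c \in S}\cutsize_G(c)$ equals the \textsc{MaxCut} value of $G$; no outer loop over $t$ is needed, since \textsc{Join$_{maxcut}$} preserves witnesses for every threshold simultaneously.

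For the runtime I would argue inductively, using the size bound in Lemma~\ref{trim:cut}, that the set returned at every node of $T$ has cardinality $\bigoh(n + 2^k)$. Given this, each invocation of \textsc{Join$_{maxcut}$} at an internal node with children $a,b$ costs $\bigoh^*(|S_1|\,|S_2|\,2^k) = \bigoh^*((n+2^k)^2 \cdot 2^k)$. Expanding $(n+2^k)^2 \cdot 2^k = n^2 \cdot 2^k + 2n\cdot 2^{2k} + 2^{3k}$ and absorbing polynomial factors in $n$ inside $\bigoh^*$, each join takes $\bigoh^*(2^{3k}) = \bigoh^*(8^k)$. Since $T$ has $\bigoh(n)$ internal nodes, also absorbed by $\bigoh^*$, the total running time is $\bigoh^*(8^k)$.

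Essentially all of the combinatorial content is already inside Lemma~\ref{trim:cut}, so I do not expect any serious obstacle. The only point warranting a line of care is the inductive step for the size bound $|S| = \bigoh(n + 2^k)$: in the split branch of \textsc{Join$_{maxcut}$} the outer loop runs $n+1$ times, giving $|S| \leq n+1$, while in the non-split branch the cut has $\mm$-value at most $k$, so $|C|\leq k$ and the loop over subsets of $C$ runs at most $2^k$ times, giving $|S| \leq 2^k$. Both cases fit the claimed $\bigoh(n + 2^k)$ bound, so the induction closes and the runtime analysis above is valid.
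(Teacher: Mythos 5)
Your proposal is correct and follows essentially the same route as the paper: run \textsc{Recursive} with \textsc{Join$_{maxcut}$}, use the size and time bounds of Lemma~\ref{trim:cut} to get $\bigoh^*(8^k)$ per join and hence overall. The only difference is cosmetic: the paper recovers \textsc{MaxCut} by running the $t$-\textsc{MaxCut} decision algorithm for all $t \leq n^2$, while you read off $\max_{c \in S}\cutsize_G(c)$ at the root using the fact that \textsc{Join$_{maxcut}$} is $t$-independent; both incur only polynomial overhead absorbed by $\bigoh^*$.
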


\begin{qedproof}
  In Lemma~\ref{trim:cut} we show {\sc Join$_{maxcut}$} is correct and
  produce a preserving set $S$ of size at most $\bigoh^*(2^{k})$ in
  time $\bigoh^*(|S_1||S_2|2^k)$. So, using {\sc Recursive} with {\sc
    Join$_{maxcut}$}, we know the size of both of the inputs of {\sc
    Join$_{maxcut}$} is at most the size of its output, i.e., $|S_1|,
  |S_2| \leq \bigoh^*(2^{k})$. So, each call to {\sc Recursive} has
  runtime at most $\bigoh^*(8^k)$. As there are linearly many calls to
  {\sc Recursive} and there is a polynomial time verifier for the
  certificates {\sc Recursive} produces, by the definition of
  $\better$, the total runtime for solving $t$-{\sc MaxCut}is also
  bounded by $\bigoh^*(8^k)$. To solve {\sc MaxCut}, we run the
  $t$-{\sc MaxCut} algorithm for all values of $t \leq n^2$, and hence
  have the same runtime when excluding polynomials of $n$.
\end{qedproof}
\subsection{Hamiltonian Cycle}

\subsubsection{The problem.}

For a graph $G$, a subgraph $G'$ of $G$ where $G'$ is a cycle, we say
that $G'$ is a \emph{hamiltonian cycle} of $G$ if $V(G') = V(G)$. The
decision problem \textsc{Hamiltonian Cycle} asks, for an input graph
$G$, whether there exists a hamiltonian cycle of $G$.

\subsubsection{The certificates and $\conc$.}
We notice for  a set $A \subseteq V(G)$ and  hamiltonian cycle $G'$ of
$G$  that $G'[A]$  is either  the hamiltonian  cycle itself  (if  $A =
V(G)$) or  a set of vertex  disjoint paths and  isolated vertices. For
ease  of notation,  we will  throughout this  section  regard isolated
vertices as paths (of length  zero). That is, a certificate $G'[A]$ is
always either a set of vertex disjoint paths or a cycle. Based on this
observation, it is natural to let $\cert(X)$ for $X \subseteq V(G)$ on
the problem \textsc{Hamiltonian Cycle} be all subgraphs $G'$ of $G$ so
that $V(G')  = X$  and $G'$ consists  only of  disjoint paths or  of a
cycle of  length $|V(G)|$. The witnesses of  $\cert(V(G))$ are exactly
the  certificates  that  are  hamiltonian  cycles of  $G$.  Clearly  a
polynomial  time  verifier  exists,  as  we  can  easily  confirm,  in
polynomial time,  that a  hamiltonian cycle in  fact is  a hamiltonian
cycle.  Also,  as $\cert(V(G))$  contains  all  hamiltonian cycles  of
$V(G)$, it must contain a witness if $G$ is a 'yes'-instance.

For disjoint  sets $A, B \subset  V(G)$, $G_x \in  \cert(A)$, and $G_y
\in  \cert(B)$,  we define  $\conc(G_x,G_y)$  to  be  the set  of  all
certificates $G_z  = (A \cup B,\,  E(G_x) \cup E(G_y)  \cup E')$ where
$E'$ is a subset of the edges crossing $(A,B)$. That is, $\conc(G_x,G_y)$ is
the set  of all graphs  generated by the  disjoint union of  $G_x$ and
$G_y$  and adding  edges from  $G$ with  one endpoint  in $A$  and one
endpoint  in $B$ that  are also  valid certificates.  This is  a valid
definition for $\conc$, as we  have $\cert(A \cup B) = \conc(\cert(A),
\cert(B))$.

\subsubsection{The \textsc{Join$_{HC}$} function.}
In  \textsc{Join$_{maxcut}(S_1,  S_2)$}   we  first  calculated  $S  =
\conc(S_1, S_2)$, and  later reduced the size of  $S$. However, by our
definition  of  $\conc$   for  \textsc{Hamiltonian  Cycle},  even  for
certificate  sets  $S_1,  S_2$  of  restricted  cardinality,  the  set
$\conc(S_1,S_2)$  might be  huge. Therefore,  in {\sc  Join$_{HC}$} we
cannot allow  to always run  $\conc$ inside our algorithm.  Instead we
will for each pair of certificates  in $S_1$ and $S_2$ construct a set
$S' \better  \conc(S_1, S_2)$ where  $\card{S'}$ is bounded by  an FPT
function of $n$ and $k$ while possibly $S' \subset \conc(S_1, S_2)$.

Before  we present  the  algorithm, we  need  to introduce  a few  key
observations and claims and give some new terminology.

For a  certificate $G' \in \cert(A)$  for $A \subset  V(G)$, each path
$P$ of $G'$ can be categorized by an unordered pair $(N_1, N_2)$ so
that for its two endpoints $v_1$  and $v_2$ (or single endpoint $v_1 =
v_2$ if $P$ is an isolated  vertex) we have $N_1 = N(v_1) \setminus A$
and $N_2  = N(v_2) \setminus  A$. We say  that two paths are  from the
same  \emph{class}  of paths  if  they  get  categorized by  the  same
unordered  pair.    Two  certificates  $G',  G''   \in  \cert(A)$  are
\emph{path equivalent}  if there  exists a bijection  $\sigma$ between
the paths  of $G'$ and  the paths  in $G''$ so  that for each  pair of
paths $P \in G'$ and $\sigma(P) \in  G''$, the path $P$ is in the same
path class as $\sigma(P)$.

\begin{claim}
  For  disjoint sets  $A, B  \subset V(G)$  and certificates  $G_A \in
  \cert(A)$, $G_B  \in \cert(B)$, where $G_A$ consists  of $z_A$ paths
  and $G_B$ consists of $z_B$  paths, we can compute $\conc(G_A, G_B)$
  in time $\bigoh^*(2^{4z_Az_B})$.
\end{claim}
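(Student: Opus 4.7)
The plan is to bound the number of candidate cross-edges that could appear in any member of $\conc(G_A, G_B)$, and then brute-force enumerate all subsets of this candidate set, checking each in polynomial time. First I would observe that every graph in $\conc(G_A, G_B)$ has maximum degree at most $2$, since by definition it is either a disjoint union of paths or a single cycle of length $|V(G)|$. In $G_A$ and $G_B$, any vertex that is internal to a path of positive length already has degree $2$, so no edge of $E'$ can be incident with such an internal vertex. Consequently every edge of $E'$ must join a path-endpoint in $A$ to a path-endpoint in $B$.

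Next I would count these endpoints. Since an isolated vertex is regarded as a path of length zero contributing a single endpoint, and a path of positive length contributes two endpoints, $G_A$ has at most $2z_A$ endpoints and $G_B$ has at most $2z_B$ endpoints. The set of candidate cross-edges---edges of $G$ with one endpoint among the path-endpoints on each side---therefore has cardinality at most $2z_A \cdot 2z_B = 4z_Az_B$, so there are at most $2^{4z_Az_B}$ subsets to consider.

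Finally I would iterate over every candidate subset $E'$ and, in polynomial time, test whether $(A \cup B,\, E(G_A) \cup E(G_B) \cup E')$ is a valid certificate in $\cert(A \cup B)$: that every vertex has degree at most $2$ and the graph is either a disjoint union of paths, or (when $A \cup B = V(G)$) a single cycle covering $V(G)$, which reduces to a degree check and, in the cyclic case, a connectedness check. Collecting the subsets that pass yields exactly $\conc(G_A, G_B)$ in time $\bigoh^*(2^{4z_Az_B})$. There is no serious obstacle beyond the initial degree-$2$ observation; once one notes that internal vertices cannot absorb further edges, the exponent $4z_Az_B$ follows immediately and the remaining work is routine verification.
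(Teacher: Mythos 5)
Your proposal is correct and follows essentially the same argument as the paper: restrict candidate cross-edges to path-endpoints (at most $2z_A$ and $2z_B$ of them, since internal path vertices already have degree $2$), enumerate the at most $2^{4z_Az_B}$ subsets, and verify each candidate certificate in polynomial time.
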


\begin{qedproof}
  From the definition $\conc(G_A, G_B)$ contains all valid
  certificates $G'$ where $G' = G_A \cup G_B + {E'}$ where ${E'}$ consists
  of edges crossing $(A,B)$. As $G'$ must be a valid certificate, each
  vertex must have degree at most $2$, so each vertex of $A \cup B$
  incident with an edge in ${E'}$ must have degree at most $1$. As $G_A$
  and $G_B$ consist of only paths, the vertices of degree at most $1$
  are exactly the vertices that occur as an endpoint of a path in
  either $G_A$ or $G_B$.  Therefore, ${E'}$ must be a subset of the edges
  going from the at most $2z_A$ endpoints in $G_A$ to the at most
  $2z_B$ endpoints in $G_B$.  The number of such subsets is bounded by
  $2^{(2z_A2z_B)}$, and finding such a set we can do with a runtime of
  no more than a polynomial factor larger than the size of this set.
\end{qedproof}

\begin{claim}
  For  any subset  $A \subset  V(G)$  and certificates  $G_1, G_2  \in
  \cert(A)$, we have  $G_1 \better_A G_2$ if $G_1$  is path equivalent
  to $G_2$.
\end{claim}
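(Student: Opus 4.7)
The plan is to unpack the definition of $\better_A$ and construct the required witness directly. Suppose $G_1$ is path equivalent to $G_2$ via a bijection $\sigma$ between their path sets, and let $G_3 \in \cert(\compl A)$ be such that $\conc(G_2, G_3)$ contains a witness, i.e.\ a Hamiltonian cycle $H$ of $G$. The goal is to produce a Hamiltonian cycle $H' \in \conc(G_1, G_3)$, which establishes $G_1 \better_A G_2$. The key structural observation is that $H$ decomposes, going around the cycle, into an alternating sequence of maximal subpaths in $G[A]$ (each of which must be a path of $G_2$) and maximal subpaths in $G[\compl A]$ (each a path of $G_3$), joined by edges of $G[A,\compl A]$ taken from the crossing-edge set $E'$ in the definition of $\conc$.

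Consider one such $G_2$-path $P_2$ appearing in $H$, with endpoints $v_1, v_2$ and external neighborhoods $N_1 = N(v_1) \setminus A$, $N_2 = N(v_2) \setminus A$. The two crossing edges of $H$ incident with $P_2$ attach at $v_1$ and $v_2$ and go into $N_1$ and $N_2$ respectively. I would then replace $P_2$ throughout $H$ by $P_1 := \sigma^{-1}(P_2) \in G_1$. Since $\sigma$ preserves path classes, the endpoints $v_1', v_2'$ of $P_1$ have external-neighborhood pair equal, as an unordered pair, to $\{N_1, N_2\}$. I can therefore reattach the two crossing edges formerly at $v_1, v_2$ to $v_1', v_2'$ in the unique way compatible with neighborhoods: if $N(v_i') \setminus A = N_j$, the crossing edge whose $\compl A$-endpoint lies in $N_j$ is placed at $v_i'$. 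Such edges exist in $G$ because they go from a vertex with the correct external neighborhood to a vertex that lies in that neighborhood, so they are legitimate elements of $E(G[A,\compl A])$ and hence of a valid $E'$ in $\conc(G_1, G_3)$.

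The remaining task is to check that the resulting graph $H'$ is a Hamiltonian cycle of $G$. Since $V(G_1) = V(G_2) = A$ and no vertex of $\compl A$ is touched, $V(H') = V(G)$. Every vertex has degree exactly $2$: internal vertices of each replacement path keep degree $2$ from the path, path endpoints gain a single incident crossing edge, and $\compl A$ vertices are unchanged. Connectedness follows because traversing $H$ and, at each $P_2$-segment, substituting a traversal of the corresponding $P_1$ in the direction that matches the neighborhood-to-endpoint assignment yields a single closed walk covering every vertex once. Hence $H'$ is Hamiltonian and, by construction, lies in $\conc(G_1, G_3)$. The one step requiring a little care is the reattachment when $N_1 = N_2$ or when $P_2$ is a single vertex (path of length zero): in the first case either assignment of the crossing edges yields a valid cycle, and in the second case both crossing edges attach to the same vertex $v_1' = v_2'$, so no choice is needed and the argument goes through unchanged. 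I expect this small bookkeeping at degenerate paths to be the only subtlety; the rest is direct substitution guided by the bijection $\sigma$.
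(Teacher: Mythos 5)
Your proposal is correct and follows essentially the same route as the paper's proof: take the witness $W = G_2 \cup G_3 + E_W$, use the path-equivalence bijection $\sigma$ to swap each $G_2$-path for its $G_1$-counterpart, and reroute the crossing edges to the corresponding endpoints, which is possible precisely because the unordered pair of external neighborhoods is preserved. Your extra bookkeeping on degrees, connectivity, and the degenerate cases ($N_1=N_2$, length-zero paths) only makes explicit what the paper leaves implicit.
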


\begin{qedproof}
  Suppose  there  is  a  certificate $G_3  \in  \cert(\compl{A})$  and
  witness $W \in \conc(G_3, G_2)$. That  means that for a set of edges
  $E_W \subseteq E(G[{A}, \compl {A}])$ we have $W = G_2 \cup G_3 + E_w$. From the
  definition  of path  classes and  path equivalence,  there  exists a
  bijection $\sigma$ from paths of $G_2$ to paths in $G_1$ so that for
  each path $P$  in $G_2$ and edges $a_1p_1, a_2,p_2  \in E_W$ so that
  $p_1,p_2$  are the  endpoints of  $P$,  there must  exist two  edges
  $a_1p_1', a_1p_2' \in E(G[A, \compl A])$ where $p_1',p_2'$ are the endpoints
  of $\sigma(P)$  in $G_1$.  Thus,  if replacing the edges  $E_W$ with
  these  edges, and replacing  each path  $P$ in  $G_2$ with  the path
  $\sigma(P)$  of  $G_3$,  we  have  an  hamiltonian  cycle.   So,  if
  $\conc(G_2, G_3)$ contains a witness, so must $\conc(G_1,G_3)$.
\end{qedproof}

\begin{lemma} \label{lemma:hc:numOfPathsAtMost_MM(A)}
  For $A \subset V(G)$, if $G'$ is an important certificate of
  $\cert(A)$, then the number of paths in $G'$ is at most $\mm(A)$.
\end{lemma}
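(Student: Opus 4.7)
The plan is to exhibit, from the very Hamiltonian cycle that certifies the importance of $G'$, a matching in $G[A, \compl{A}]$ of size equal to the number of paths of $G'$. By the definition of importance, there exist $G'' \in \cert(\compl{A})$ and a witness $H \in \conc(G', G'')$; unpacking the definitions of $\conc$ and $\cert$, this $H$ is a Hamiltonian cycle of $G$ whose edge set decomposes as $E(G') \cup E(G'') \cup E'$ for some $E' \subseteq E(G[A, \compl{A}])$. In particular, every edge of $H$ with both endpoints in $A$ already belongs to $G'$.

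First I would identify $G'$ with the ``$A$-part'' of $H$: since no edge of $H$ lies wholly inside $A$ except the edges of $G'$, each of the $p$ paths of $G'$ (treating isolated vertices as trivial length-zero paths) appears along the cycle $H$ as a maximal consecutive run of $A$-vertices. Second, I would use the cyclic alternation between $A$-runs and $\compl{A}$-runs around $H$ to conclude that there are exactly $p$ maximal $A$-runs and exactly $p$ maximal $\compl{A}$-runs. Third, I would build the matching by selecting, for each $A$-run, the single edge of $H$ that leaves it for the following $\compl{A}$-run; these $p$ edges lie in $E'$, their $A$-endpoints are pairwise distinct because they come from distinct $A$-runs (hence distinct paths of $G'$), and their $\compl{A}$-endpoints are pairwise distinct because they come from distinct $\compl{A}$-runs. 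The resulting $p$ edges therefore constitute a matching of $G[A, \compl{A}]$, yielding $\mm(A) \geq p$.

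There is no real obstacle; the only points that want care are (i) remembering that isolated vertices of $G'$ must be treated as trivial paths whose corresponding $A$-run in $H$ is a singleton (here both $H$-neighbours of the vertex lie in $\compl{A}$), and (ii) dismissing the trivial boundary case $A = \emptyset$, in which $p = 0$ and the bound is vacuous. Once these are handled, the cyclic alternation structure of $H$ yields the required matching essentially for free.
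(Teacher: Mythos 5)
Your proposal is correct and is essentially the paper's own argument: the paper directs the witness Hamiltonian cycle and observes that the edges of $E'$ oriented from $A$ to $\compl{A}$ (one per path of $G'$) form a matching of $G[A,\compl{A}]$, which is exactly your "edge leaving each maximal $A$-run" construction. The identification of paths of $G'$ with maximal $A$-runs and the handling of trivial paths are fine details the paper leaves implicit, but the underlying proof is the same.
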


\begin{qedproof}
  Since $G'$ is an important certificate, there must exist a
  certificate $G'' \in \cert(\compl{A})$ so that $\conc(G', G'')$
  contains a witness. This means the paths of $G'$ and $G''$ can be
  joined together by edges ${E'}$ from $G[A, \compl A]$ to form a hamiltonian
  cycle $C$.  If we direct the cycle $C$, each path of $G'$ and of
  $G''$ must be incident with exactly one in-edge and one out-edge. By
  looking at the edges in ${E'}$ going from $A$ to $\compl A$, we see
  that these edges make a matching of $G[A, \compl A]$, concluding that the
  number of paths in $G'$ is at most the size of a maximum matching,
  i.e., $mm(A)$.
\end{qedproof}

For a certificate $G' \in \cert(A)$  and path $P$ of $G'$, we say that
$P$ is an \emph{isolated} path if one of its endpoints is not incident
with an edge in $E( G[A, \compl A] )$. That is, $P$ is an isolated path if it is
categorized by a pair containing an empty set.

As each vertex of an hamiltonian cycle has degree exactly two, and for
two certificates  $G_1 \in \cert(A)$, $G_2 \in  \cert(\compl A)$, each
of  the edges in  certificate $G'  \in \conc(G_1,  G_2)$ is  either in
$E(G_1)$, $E(G_2)$, or $E( G[A, \compl A] )$, we get the following observation.

\begin{observation}\label{obs:hc:noIsolatedPaths}
  If $G'$ is an important certificate, $G'$ can not contain any
  isolated paths.
\end{observation}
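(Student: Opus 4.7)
The plan is to derive a contradiction directly from the definition of an important certificate combined with the degree constraint imposed by a Hamiltonian cycle. Suppose for contradiction that $G' \in \cert(A)$ is important and contains an isolated path $P$. By importance, there exists $G'' \in \cert(\compl{A})$ such that $\conc(G', G'')$ contains a witness $C$; unpacking the definition of $\conc$, we can write $C = G' \cup G'' + E'$ for some $E' \subseteq E(G[A, \compl{A}])$, and $C$ is a Hamiltonian cycle of $G$, so every vertex of $V(G)$ has degree exactly $2$ in $C$.

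Next I would focus on the endpoint $v$ of $P$ witnessing that $P$ is isolated. By the definition given immediately before the observation, $v$ is categorized by an empty side, meaning $N(v) \setminus A = \emptyset$; in particular, no edge of $G$ joins $v$ to $\compl{A}$, so no edge of $E' \subseteq E(G[A,\compl{A}])$ is incident with $v$. Hence every edge of $C$ at $v$ must lie in $G'$ (edges of $G''$ cannot touch $v$ either, since $V(G'') \subseteq \compl{A}$). But $v$ is an endpoint of a path in $G'$, so it has degree at most $1$ in $G'$: degree $0$ if $P$ is a single-vertex path and degree $1$ otherwise. This forces the degree of $v$ in $C$ to be at most $1$, contradicting that $C$ is a Hamiltonian cycle.

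Since the argument is a two-line degree count, there is no real obstacle; the only thing to be careful about is correctly interpreting "isolated path" via the immediately preceding definition (an endpoint with empty neighborhood in $\compl{A}$, rather than an endpoint that merely happens to be unused by some particular completion), so that the conclusion $N(v) \cap \compl{A} = \emptyset$ genuinely rules out any crossing edge at $v$ in every possible completion.
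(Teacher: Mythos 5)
Your proof is correct and follows essentially the same reasoning the paper uses: the paper justifies this observation by exactly the remark preceding it, namely that every vertex of a Hamiltonian cycle has degree two while any completion of $G'$ only adds edges of $G[A,\compl A]$, none of which can touch the isolated endpoint. Your degree count at the endpoint $v$, including the careful handling of single-vertex paths, is precisely that argument spelled out.
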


When computing $\conc(p,q)$ for certificates $p$ and $q$ of $A_1$ and
$A_2$ that are both splits, we know that the number of classes of
paths in $p$ and $q$ is constant, since all paths are categorized by
the same pair. This enables us to bound the number of ways needed to
combine $p$ and $q$ in order to represent $\conc(p,q)$, since we know
a lot of them will be redundant.  When, on the other hand, one of the
two sets, for instance $A_1$, has $\mm(A_1) \leq k$, then all paths
going from $A_1$ to $\compl{A_1}$ must go through a separator of size
$ \leq k$.  Together with Observation~\ref{obs:hc:noIsolatedPaths}
this implies that each important certificate in $\cert(A_1)$ can
contain at most $k$ paths.  Using this we will again be able to reduce
the number of possible combinations of $p$ and $q$ necessary to
compute in order to get a set $S \better_A \conc(p,q)$.

\begin{center}                          
\begin{tabbing} 
\rule{\textwidth}{0.25pt}\\ 
  {\bf Procedure {\sc Join$_{HC}$}}(\=on node $w$ with children $a,b$ and $A_1=V_a, A_2 =V_b$\\
                                   \> and $A=A_1 \cup A_2$ and given $S_1 \better_{A_1} cert(A_1)$ and \\
                                   \> $S_2 \better_{A_2} cert(A_2)$)\\
 \rule{\textwidth}{0.25pt}\\ 
  x\tabme{}
  
  \> // \emph{Generating $S \better_{A} \cert(A)$}\\
  \> $S \leftarrow \emptyset$\\
  \> \textbf{for} each pair $(G_1,G_2)$ in $S_1 \times S_2$ \textbf{do}\\
  \>  \> add $G_1 \cup G_2$ to $S$\\
  \>  \> $P_1,P_2 \leftarrow $ the sets of paths in $G_1$ and $G_2$, respectively\\
  \>  \> $V_1,V_2 \leftarrow $ the sets of endpoints of $P_1$ and $P_2$, respectively\\
  \>  \> \textbf{if} $\card{P_1} > \mm(A_1)$ \textbf{or} 
                     $\card{P_2} > \mm(A_2)$ 
                     \textbf{then} \textbf{continue}\\[6pt]%
  \>  \> \textbf{if} neither $\scut {A_1}$ nor $\scut {A_2}$ is a split \textbf{then}\\
  \>  \>  \> add to $S$ the set $\{G_1 \cup G_2 + E' : E' \subseteq E(G[V_1, V_2])\}$\\[6pt]
  \> \> \textbf{else if} both of $\scut{A_1}$ and $\scut{A_2}$ are splits \textbf{then}\\
  \> \> \>\textbf{for} integers $1 \leq  z \leq z' \leq \min\set{\card{P_1},\card{P_2}}$ \textbf{do}\\
  \> \> \> \> \> $P' \leftarrow$ \=result of connecting $z'$ paths in $P_1$ and $z'$ paths in $P_2$ \\
  \> \> \> \> \>                 \>together by edges crossing $(A_1, A_2)$ to form $z$ new paths\\
  \> \> \> \> \> add to $S$ the subgraph $P_1 \cup P_2 \cup P'$\\[8pt]
  \>  \> \textbf{else} \emph{//exactly one of $\scut {A_1}$ and $\scut {A_2}$ is a split}\\
  \>  \>  \> $s \leftarrow$ either $1$ or $2$, so that $\scut{A_{s}}$ is a split\\
  \>  \>  \> $r \leftarrow$ $3 - s$ \emph{// the index opposite of $s$}\\
  \>  \>  \> remove all but $2\card{P_r}$ of $G_s$'s paths from $P_s$\\
  \>  \>  \> $V_s' \leftarrow$ the set of endpoints of the now smaller set $P_1$ of paths\\
  \>  \>  \> add to $S$ all of $\set{(G_1 \cup G_2) + E': 
    E' \subseteq E(G[V_s, V_r])}$\\[6pt]

  \> // \emph{Reducing the size of $S$}\\ 
  \> remove from $S$ all certificates that  \=are invalid,
  contain isolated paths, or \\
  \> \> contain more than $\mm(A)$ paths   \\ \tabme{}
  \> \textbf{for} $G_1,G_2 \in S$ \textbf{do}\\
  \>  \> remove $G_2$ from $S$ if $G_1$ and $G_2$ are path equivalent\\
  \> \textbf{return} $S$\\
  \rule{\textwidth}{0.25pt}
\end{tabbing}
\end{center}

\begin{lemma} \label{trim:hc} Procedure {\sc Join$_{HC}$} is correct
  and runs in time $\bigoh^*(\card{S_1}^2 \card{S_2}^2 {2}^{16k^2})$,
  producing a set $S$ of cardinality $\bigoh(n + 4^{k^2})$.
\end{lemma}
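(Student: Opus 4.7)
The plan is to verify three claims about {\sc Join$_{HC}$}: that $S \better_A \cert(A)$, that $|S| = \bigoh(n + 4^{k^2})$, and that the runtime matches. Since $S_i \better_{A_i} \cert(A_i)$ by the input specification and $\cert(A) = \conc(\cert(A_1), \cert(A_2))$, preservation reduces to showing $S \better_A \conc(S_1, S_2)$, and we only need to preserve the \emph{important} elements of $\conc(S_1,S_2)$. By Lemma~\ref{lemma:hc:numOfPathsAtMost_MM(A)} and Observation~\ref{obs:hc:noIsolatedPaths} these have no isolated paths and at most $\mm(A)$ paths, which is exactly what the explicit cleanup at the bottom of the procedure enforces. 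The path-equivalence claim then lets each equivalence class be represented by a single element, so the final deduplication is justified.

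Correctness then reduces to verifying that the case distinction on the types of $\scut{A_1}$ and $\scut{A_2}$ captures at least one representative of every path-equivalence class of important certificates in $\conc(G_1, G_2)$ for each pair $(G_1, G_2) \in S_1 \times S_2$. If neither side is a split, then $\mm(A_1), \mm(A_2) \leq k$, so important $G_i$ have at most $k$ paths and thus at most $2k$ endpoints; every element of $\conc(G_1, G_2)$ differs from $G_1 \cup G_2$ only by a subset of the at most $4k^2$ crossing edges between these endpoint sets, and the algorithm enumerates all such subsets. If both sides are splits, every path in $G_i$ lies in a single path class, so up to path equivalence an element of $\conc(G_1,G_2)$ is determined by a pair $(z, z')$ recording how many paths from each side are joined into $z$ new paths, and the inner loop covers all these pairs. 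If exactly one side $A_s$ is a split, the non-split side has $\mm(A_r) \leq k$ and its important certificates have at most $2k$ endpoints; since all paths of $G_s$ share a single class, retaining any $2|P_r|$ of them suffices by path equivalence to realise every combination up to equivalence, after which all crossing-edge subsets on the reduced endpoint sets are enumerated.

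For the size bound, after deduplication $S$ has at most one representative per path-equivalence class of important certificates. If $\cut A$ is a split there is only one path class and the multiplicity of paths ranges over at most $n$ values, giving $\bigoh(n)$. Otherwise the minimum vertex cover of the cut has size at most $k$, so the number of path classes is at most $4^k$ (one per unordered pair of subsets of this vertex cover), and an important certificate is a multiset of size at most $\mm(A) \leq k$ over these classes, giving at most $\binom{4^k + k}{k} = \bigoh(4^{k^2})$ equivalence classes. Thus $|S| = \bigoh(n + 4^{k^2})$.

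For the runtime, the per-pair work is dominated by the one-split branch, which enumerates up to $2^{8k^2}$ crossing-edge subsets (the neither-split branch enumerates at most $2^{4k^2}$ and the both-splits branch only $\bigoh(k^2)$), so the intermediate set before cleanup has size at most $|S_1|\,|S_2| \cdot 2^{8k^2}$. The closing deduplication compares pairs of certificates for path equivalence, costing $\bigoh^*(|S_1|^2 |S_2|^2 \cdot 2^{16k^2})$ overall. The main obstacle I would scrutinise is the one-split-one-non-split case: one must justify that truncating $P_s$ to $2|P_r|$ paths loses no important certificate, using simultaneously that $\scut{A_s}$ being a split forces all paths of $G_s$ into one class and that, on the non-split side, any important joint certificate uses at most $2|P_r| \leq 2k$ endpoints of $P_s$.
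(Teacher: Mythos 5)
Your proposal is correct and follows essentially the same route as the paper's proof: the same case analysis on the split-types of $\scut{A_1}$ and $\scut{A_2}$, the same use of path classes and path equivalence (including the key truncation argument in the one-split case), the same counting of equivalence classes for the $\bigoh(n+4^{k^2})$ size bound, and the same runtime accounting dominated by the one-split branch and the final pairwise deduplication. The only minor slips — the both-splits loop ranges over $\bigoh(n^2)$, not $\bigoh(k^2)$, pairs (since $\mm(A_i)$ need not be bounded by $k$ on a split side), and the $z=0$ (no crossing paths) situation is covered by the certificate $G_1 \cup G_2$ added at the start rather than by the loop — are harmless under the $\bigoh^*$ notation and do not change the argument.
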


\begin{qedproof} We  will first  give a proof  of the correctness  of the
  algorithm, and  then give an analysis  of the runtime.  To prove the
  correctness, we must show  that for every important certificate $G_x
  \in \conc(S_1,  S_2)$, there  exists a certificate  $G_x' \in  S$ so
  that $G_x' \better_A G_x$, and give a bound on the size of $S$.

  As the algorithm iterates through all pairs $G_a, G_b \in S_1 \times
  S_2$, we know that for some  $G_a, G_b$, we have $G_x \in \conc(G_a,
  G_b)$. Now, let us look at the iteration of the algorithm where $G_1
  = G_a$ and $G_2 = G_b$. That is, $G_x \in \conc(G_1, G_2)$. The cuts
  $\cut  {A_1}$ and $\cut  {A_2}$ can  either both  be splits,  be one
  split and one non-split, or both be non-splits.

  The algorithm will set $V_1$ and $V_2$ to be the set of vertices in
  $G_1$ and $G_2$, respectively, of degree at most one. As each vertex
  in a (hamiltonian) cycle has degree exactly two, no vertex can have
  degree three or more in an important certificate. This means that
  for the case when neither $\scut{A_1}$ nor $\scut{A_2}$ is a split,
  when the algorithm adds $\{G_1 + G_2 + E': E' \subseteq E( G[V_1, V_2] )\}$, 
  it is clear that this set preserves $\conc(G_1, G_2)$.

  Now  suppose  exactly one  of  $\scut{A_1}$  and  $\scut{A_2}$ is  a
  split.  Without   loss  of  generality,  let   $\scut{A_2}$  be  the
  split. That is, in the algorithm we have $A_1 = A_s$ and $A_2 = A_r$
  in the  algorithm. Notice  that as each  path has two  endpoints, for
  $G_x$ to be  a valid and important certificate,  the total number of
  paths in $G_1$ where at least  one of its endpoints is incident with
  an edge in $E( G[V_1, V_2] )$  in $G_x$ is at most $2|P_2|$. Furthermore,
  as each path of $G_1$ is  in the same path class, since $\scut{A_1}$
  is  a split,  it  does  not matter  exactly  which particular  $\leq
  2|P_2|$  paths  of $G_1$  gets  incident  with  an edge  of  $E( G[V_1,V_2] )$.
  So,  when the  algorithm removes paths  from $P_1$,  since at
  least  $2|P_2|$ of  them  remain (or  all  of them,  if $|P_1|  \leq
  2|P_2|$), the set  $\{G_1 + G_2 + E' :  E' \subseteq E( G[V_1', V_2] )\}$
  where $V_2'$  is the set of  endpoints of the shrinked  set of paths
  $P_1$, preserves $\conc(G_1, G_2)$.

  For the final case, when both $\scut{A_1}$ and $\scut{A_2}$ are
  splits, we see that each path of $G_x$ can be one of at most three
  types of paths: %
  (1) $(N(A_1) \setminus A, N(A_1) \setminus A)$, %
  (2) $(N(A_2) \setminus A, N(A_2) \setminus A)$, or %
  (3) $(N(A_1) \setminus A, N(A_2) \setminus A)$.  %
  From this we can conclude that a valid certificate $G_x'$ of
  $\cert(A)$ preserves $G_x$ if $G_x'$ contains exactly the same
  number of paths as $G_x$ from each of the three path types
  mentioned. Let $\pi_1$, $\pi_2$ and $\pi_3$ be the number of paths
  in $G_x$ of type (1),(2), and (3), respectively.  The algorithm
  iterates through all integer values of $z$ and $z' \geq z$ between
  $1$ and $\min\{|P_1|,|P_2|\}$. In particular, at one iteration, $z =
  \pi_3$ and $z' = |P_1| - \pi_1 = |P_2| - \pi_2$.  Therefore,
  connecting $z'$ of the paths in $|P_1|$ and $|P_2|$ together to form
  $z = \pi_3$ new paths of type (3), we have in effect generated a
  certificate preserving $G_x$.  For the case when $\pi = 0$, the
  algorithm will not iterate through $z = 0$. However, in this case
  $G_x = G_1 \cup G_2$ which we have already added to $S$ in the very
  start of the main loop for $G_1$ and $G_2$.
  
  From this, we can conclude that the set $S$ generated before the
  last part of the algorithm, where we reduce its size, preserves
  $\conc(S_1, S_2)$. As we have already shown that path equivalent
  certificates preserve each other, the last step of reducing the size
  of $S$ is going to maintain the fact that $S \better_A \conc(S_1,
  S_2)$.
  
  At the end of the algorithm, there are no invalid certificates in
  $S$, or certificates containing isolated paths.  Also, no two
  certificates in $S$ are path equivalent, so we have the following
  regarding the size of $S$: When $\scut{A}$ is a split, there are at
  most $n$ path classes, so the size of $S$ is bounded by $n$. When
  $\scut{A}$ is not a split, we know each certificate contains at most
  $\mm(A)$ paths.  Furthermore, as the minimum vertex cover of a
  bipartite graph is of the same size as the maximum matching of the
  same graph, there is a vertex cover $C$ of $G[A, \compl A]$ of size
  at most $\mm(A)$.  This means that each neighbourhood over $A$
  ($N(S) \setminus A$ for some set $S \subseteq A$) is one of at most
  $2^{\mm(A)}$ possibilities.  This means that each path can be
  represented by one of at most $(2^{\mm(A)})^2$ pairs of
  neighbourhoods. As each certificate contain at most $\mm(A)$ paths,
  we can conclude that the cardinality of $S$ is no more than
  $((2^{\mm(A)})^2)^{\mm(A)} \leq 4^{k^2}$.

  For the runtime of the algorithm, we notice that the first main loop
  runs at most $\card{S_1} \times \card{S_2}$ times. For each
  iteration we do at most $\bigoh(2^{8k^2})$ operations (worst case is
  when exactly one of $\scut{A_1}$ and $\scut{A_2}$ is a split -- then
  $|V_s'| \leq 4k$ and $|V_r| \leq 2k$).  So, we do at most
  $\bigoh^*((\card{S_1}\cdot\card{S_2})2^{8k^2})$ operations on
  the first part. For the latter part, the size of $S$ is bounded by
  the runtime of creating it, i.e., $S$ is bounded by
  $\bigoh^*((\card{S_1}\cdot\card{S_2})2^{8k^2})$. So, as the
  algorithm in the end iterates through all pairs of elements in $S$,
  the final runtime is
  $\bigoh^*((\card{S_1}^2\card{S_2}^2)2^{16k^2})$.
\end{qedproof}

\begin{theorem} \label{thm:hc}
  Given a graph $G$ and branch decomposition $(T, \delta)$ of \SMW\
  $k$, we can solve %
  \textsc{Hamiltonian Cycle} in time $\bigoh^*(2^{24k^2})$.
\end{theorem}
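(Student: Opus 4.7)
\begin{qedproof}
The plan is to mirror the argument used for Theorem~\ref{thm:maxCut}, invoking \textsc{Recursive} on the rooted branch decomposition with \textsc{Join$_{HC}$} as the join procedure, and then applying a polynomial-time verifier (checking whether some element of the output set is a Hamiltonian cycle) at the root. Correctness follows from the generic framework for \textsc{Recursive}: by Lemma~\ref{trim:hc}, each call to \textsc{Join$_{HC}$} on inputs $S_1 \better_{V_a} \cert(V_a)$ and $S_2 \better_{V_b} \cert(V_b)$ returns $S \better_{V_a \cup V_b} \conc(S_1,S_2)$, and by the definitions from Section~4 this composes all the way up so that at the root we obtain a set preserving $\cert(V(G))$, which therefore contains a witness whenever one exists.

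What remains is the runtime bound. First I will establish, by induction on the height of the node processed by \textsc{Recursive}, that every returned set has cardinality $\bigoh^*(2^{2k^2})$. At a leaf the returned set is a singleton, and at an inner node Lemma~\ref{trim:hc} bounds the output size by $\bigoh(n + 4^{k^2}) = \bigoh^*(2^{2k^2})$ regardless of the input sizes, which establishes the induction.

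Next, plugging this bound on $|S_1|$ and $|S_2|$ into the runtime estimate from Lemma~\ref{trim:hc} gives, for a single invocation of \textsc{Join$_{HC}$},
\[
\bigoh^*\bigl(|S_1|^2|S_2|^2 \cdot 2^{16k^2}\bigr) \;=\; \bigoh^*\bigl((2^{2k^2})^4 \cdot 2^{16k^2}\bigr) \;=\; \bigoh^*\bigl(2^{24k^2}\bigr).
\]
Since $T$ has $\bigoh(n)$ inner nodes, there are linearly many invocations of \textsc{Join$_{HC}$}, and the verification step at the root runs in polynomial time in the size $\bigoh^*(2^{2k^2})$ of the final set, which is absorbed by $\bigoh^*$. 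The total runtime is therefore $\bigoh^*(2^{24k^2})$.

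There is no genuine obstacle here; the heavy lifting was already done in Lemma~\ref{trim:hc}, and the only thing to check is that the inductive output-size bound from that lemma feeds back into its runtime bound cleanly. The slight subtlety worth noting is that the output bound $\bigoh(n + 4^{k^2})$ holds unconditionally (it does not depend on $|S_1|,|S_2|$), which is precisely what makes the induction on sizes go through and prevents any exponential blow-up along the branch decomposition tree.
\end{qedproof}
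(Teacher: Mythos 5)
Your proposal is correct and follows essentially the same route as the paper's own proof: invoke \textsc{Recursive} with \textsc{Join$_{HC}$}, use the unconditional output-size bound $\bigoh(n+4^{k^2})$ from Lemma~\ref{trim:hc} to bound $|S_1|,|S_2|$ at every node, and plug this into the per-call runtime $\bigoh^*(|S_1|^2|S_2|^2 2^{16k^2})$ to get $\bigoh^*(2^{24k^2})$ over linearly many calls plus a polynomial-time verifier. No gaps; the explicit induction on node height is just a slightly more careful phrasing of the same argument.
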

\begin{qedproof}
  In Lemma~\ref{trim:hc} we show the procedure {\sc Join$_{HC}$} is
  correct and runs in time $\bigoh^*(\card{S_1}^2 \card{S_2}^2
  {2}^{16k^2})$, producing a set $S$ of cardinality $\bigoh(n +
  4^{k^2})$. So, using {\sc Recursive} with {\sc Join$_{HC}$}, we know
  the size of both of the inputs of {\sc Join$_{HC}$} is at most the
  size of its output, i.e., $|S_1|, |S_2| \leq \bigoh^*(4^{k^2})$. So,
  each call to {\sc Recursive} has runtime at most
  $\bigoh^*(2^{24k^2})$. As there are linearly many calls to {\sc
    Recursive} and there is a polynomial time verifier for the
  certificates {\sc Recursive} produces, by the definition of
  $\better$, the total runtime is also bounded by
  $\bigoh^*(2^{24k^2})$.
\end{qedproof}

\subsection{$t$-Coloring}
\subsubsection{The problem.}

The decision problem \textsc{$t$-Coloring} asks for an input graph
$G$, whether there exists a labelling function $c$ of the vertices of
$V(G)$ using only $t$ colors in such a way that no edge has its
endpoints labelled with the same color. Equivalently, it asks whether
there exists a $t$-partitioning of the vertices so that each part
induces an independent set. For simplicity, we will allow empty parts
in a partition (e.g., $\{\{x_1,x_2,x_3\}, \{x_4\}, \emptyset,
\emptyset, \emptyset\}$ is a $5$-partition of $\{x_1,x_2,x_3,x_4\}$).

For a set $A$ and partition $p = \{p_1, p_2, \ldots, p_j\}$, we denote
by $A  \cap p$ the  partition $\{p_1 \cap  A, p_2 \cap A,  \ldots, p_j
\cap A\}$.

\subsubsection{The certificates and $\conc$.}

For \textsc{$t$-Coloring}, we define $\cert(X)$ for $X \subseteq V(G)$
to be all $t$-partitions of $X$ and $\conc(p,q)$ for
$p=p_1,\ldots,p_t$ and $q=q_1,\ldots,q_t$ to be the following set of
partitions %
\[%
  \set{ \bigcup_{i \leq t} \set{  \set{p_{i} \cup q_{\sigma(i)}}  } 
        : 
        \sigma \mbox{ is a permutation of } \set{1,\ldots, t} 
  } \enspace.
\]%
This satisfies the constraint $\conc(\cert(X),\cert(Y)) = \cert(X \cup
Y)$, so it is a valid definition of $\conc$.

We can easily construct a polynomial time algorithm that given a
$t$-partition $p$ of independent sets (which there must exist at least
one of if $G$ is a 'yes'-instance) is able to confirm that $G$ is a
'yes'-instance. So $t$-partitions of $V(G)$ forming independent sets
will be our witnesses.

\subsubsection{The \textsc{Join$_{col}$} function.}

The main observation for the design of this procedure is that whenever
$\cut A$ is a split, there exists a single element $x \in \cert(A)$ so that $\set{x} \better_{A} \cert(A)$.  Also,
when $\mm(A) < k$, there is a separator of $A$ and $\compl A$ that
intersect by less than $k$ parts of any witness.
In the procedure {\sc Trim$_{\textsc{col}}$} below we use this to trim the number of certificates to store at each step
of the algorithm; if two certificates ``projected'' to the separator
of $A$ and $\compl A$ is the same partition, we only store one of
them.
This results in less than $k^k$ certificates to store.

For two $t$-partitions  $P$, $Q$, we say that  we \emph{merge} $P$ and
$Q$ when  we generate  a new partition  $R$ by pairwise  combining the
parts (by union) of  $P$ with the parts of $Q$ in  such a way that $R$
is a $t$-partition  where each part is an independent  set. If $P$ and
$Q$ can be merged, we say that $P$ and $Q$ are \emph{mergeable}.

\begin{lemma}\label{lemma:col:merging}
  Given two $t$-partitions  $P$ and $Q$, deciding whether  $P$ and $Q$
  are mergeable,  and merging $P$ and $Q$  if they are, can  be done in
  polynomial time.
\end{lemma}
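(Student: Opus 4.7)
The plan is to reduce the mergeability question to bipartite perfect matching. Given two $t$-partitions $P = \{p_1, \ldots, p_t\}$ and $Q = \{q_1, \ldots, q_t\}$ (where each $p_i$ and $q_j$ is already an independent set, since otherwise neither $P$ nor $Q$ would be a valid certificate), merging $P$ and $Q$ amounts to finding a permutation $\sigma$ of $\{1,\ldots,t\}$ such that $p_i \cup q_{\sigma(i)}$ is an independent set for every $i$. Since each $p_i$ and each $q_j$ is already independent, $p_i \cup q_{\sigma(i)}$ is independent if and only if $G$ contains no edge with one endpoint in $p_i$ and the other in $q_{\sigma(i)}$.

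Based on this observation, I would construct the bipartite compatibility graph $H$ with vertex set $\{p_1, \ldots, p_t\} \cup \{q_1, \ldots, q_t\}$ and an edge between $p_i$ and $q_j$ exactly when there is no edge of $G$ between $p_i$ and $q_j$. This graph can be built in polynomial time by inspecting, for each pair $(i,j)$, the edges of $G$ incident to $p_i$ and checking whether any lands in $q_j$ (empty parts are automatically adjacent to all parts on the other side, reflecting the fact that empty parts combine trivially with anything). A valid merge $\sigma$ is then precisely a perfect matching in $H$: each part on the $P$-side is paired with exactly one part on the $Q$-side, and compatibility of every pair is guaranteed by the edges of $H$.

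I would then apply any standard polynomial-time bipartite perfect matching algorithm (for instance the Hopcroft--Karp algorithm) to $H$. If no perfect matching exists, then by the above equivalence no valid $\sigma$ exists and $P$ and $Q$ are not mergeable. If a perfect matching $M$ is returned, reading off the bijection $\sigma$ defined by $M$ and outputting $R = \{p_i \cup q_{\sigma(i)} : 1 \leq i \leq t\}$ gives the desired merged $t$-partition into independent sets. Both the construction of $H$ and the matching computation run in time polynomial in $t$ and $|V(G)|$, establishing the stated bound.

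The only subtlety worth double-checking in a full write-up is the handling of empty parts and of the case $|P| \neq |Q|$ as lists (they can always be padded with empty parts up to size $t$ since the certificates are $t$-partitions by definition), together with verifying that a perfect matching really does correspond to a permutation in the sense required by $\conc$; both issues are routine given the definitions fixed earlier, so no genuine obstacle is expected.
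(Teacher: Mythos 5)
Your reduction to bipartite perfect matching is exactly the first half of the paper's own proof (compatibility graph on the parts, edge when the union of two parts is independent, merge read off a perfect matching), and for the case where $P$ and $Q$ partition \emph{disjoint} vertex sets it is correct as written. However, the lemma as used in this paper must also cover the case where the two $t$-partitions have intersecting ground sets: in {\sc Trim$_{\textsc{col}}$}, in {\sc Join$_{col}$}, and in Lemma~\ref{lemma:col:separator} one merges a certificate, which is a $t$-partition of $A$, with a $t$-partition of a vertex cover $C$ of $G[A,\compl{A}]$, and $C \cap A$ is in general nonempty. In that situation your construction has a genuine hole: a shared vertex $v \in p_i \cap q_j$ forces the parts $p_i$ and $q_j$ to be paired with each other in any merge (otherwise $v$ would land in two different parts and $R$ would not be a partition), but nothing in your compatibility graph records this constraint, so a perfect matching of $H$ may pair $p_i$ with some other compatible part and produce an $R$ that is not a $t$-partition at all. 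This is precisely the subtlety you wave off at the end as ``verifying that a perfect matching really does correspond to a permutation''; it is not routine bookkeeping but a case that needs its own treatment.

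The paper handles it by a short preprocessing step: whenever a part of $P$ shares an element with a part of $Q$, those two parts are combined (and if the combined part is not independent, or if some part intersects more than one part on the other side, the algorithm reports that $P$ and $Q$ are not mergeable); only the remaining, pairwise-disjoint parts are then fed into the perfect-matching reduction you describe. Adding this preprocessing to your argument closes the gap, and with it your proof coincides with the paper's.
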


\begin{qedproof}
  We can  check whether  $P$ and  $Q$ are mergeable  by reducing  it to
  deciding  whether  a bipartite  graph  has  a  perfect matching:  We
  generate a bipartite graph $B = (P,Q)$ where each vertex/part $p \in
  P$  is adjacent  to $q  \in  Q$ if  and only  if  $p \cup  q$ is  an
  independent  set.  When  $P$  and  $Q$ partition  sets  that do  not
  intersect, then  we can merge  $P$ and $Q$  by for each edge  in the
  matching, combine the respective parts the edge is incident with. If
  there is no perfect matching in $B$, then that must mean there is no
  way of  pairwise combining the  parts of $P$  with the parts  $Q$ so
  that all combined parts are independent.

  If $P$ contain parts that share  elements with parts of $Q$, then we
  know these  parts must be combined  in all merged  partitions, so we
  combine all these  sets and then run the  above reduction to perfect
  matching (on the parts that  do not have intersecting elements).  If
  a part  intersect with more  than one other  part, then $P$  and $Q$
  cannot be merged.
\end{qedproof}

\begin{lemma}  \label{lemma:col:separator}  Let  $A$  be a  subset  of
  $V(G)$, $C  \subseteq V(G)$ a separator  of $A$ and  $\compl A$, and
  $P_A$ and $P_C$ be $t$-partitions  of $A$ and $C$, respectively.  If
  $P_C$ and $P_A$ merge to a $t$-partition $P_A'$, then for any set $B
  \subseteq (C \cup \compl A)$ and $t$-partition $P_B$ of $B$, $P_B$ is
  mergeable with $P_A'$ if and only if $P_C$ is mergeable with $P_B$.
\end{lemma}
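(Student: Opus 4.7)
The plan is to reformulate a merge of two $t$-partitions $P$ and $Q$ as a bijection $\sigma \colon P \to Q$ such that $p \cup \sigma(p)$ is independent for every $p \in P$, subject to the consistency requirement that any vertex lying in both underlying sets is placed in matched parts (this is exactly the reduction to perfect matching used in Lemma~\ref{lemma:col:merging}). Fix once and for all a bijection $\rho \colon P_A \to P_C$ witnessing the hypothesised merge into $P_A'$, so that each part of $P_A'$ has the form $p_A^i \cup p_C^{\rho(i)}$.

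For the forward direction, assume $\sigma \colon P_A' \to P_B$ witnesses that $P_A'$ and $P_B$ are mergeable. I would define $\tau \colon P_C \to P_B$ by $\tau(p_C^j) = \sigma(p_{A'}^{\rho^{-1}(j)})$. Since $p_C^j \subseteq p_{A'}^{\rho^{-1}(j)}$, the set $p_C^j \cup \tau(p_C^j)$ is a subset of an independent set, and therefore independent. Consistency of $\tau$ on $B \cap C$ is inherited from the consistency of $\sigma$, because any vertex in $B \cap C$ lies in the same $P_{A'}$-part as the $P_C$-part containing it.

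For the backward direction, assume $\tau \colon P_C \to P_B$ witnesses the merge of $P_C$ and $P_B$. I would define $\sigma \colon P_A' \to P_B$ by $\sigma(p_A^i \cup p_C^{\rho(i)}) = \tau(p_C^{\rho(i)})$ and check that
\[
  p_A^i \cup p_C^{\rho(i)} \cup \tau(p_C^{\rho(i)})
\]
is independent. Edges inside $p_A^i \cup p_C^{\rho(i)}$ and inside $p_C^{\rho(i)} \cup \tau(p_C^{\rho(i)})$ are excluded by the two given merges, so the only remaining candidates are edges between $p_A^i \setminus p_C^{\rho(i)}$ and $\tau(p_C^{\rho(i)}) \setminus p_C^{\rho(i)}$. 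This is where the separator property enters: the merge consistency of $\rho$ forces every vertex of $p_A^i \cap C$ into $p_C^{\rho(i)}$, so $p_A^i \setminus p_C^{\rho(i)} \subseteq A \setminus C$; symmetrically, since $B \subseteq C \cup \compl A$, the consistency of $\tau$ forces $\tau(p_C^{\rho(i)}) \setminus p_C^{\rho(i)} \subseteq \compl A \setminus C$. Because $C$ separates $A$ from $\compl A$, no edges exist between these two sets, so none of the remaining candidate edges can occur.

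The main obstacle I expect is keeping the consistency conditions straight for vertices shared between the three sets $A$, $B$, and $C$, and ensuring that the two ``projections'' into $P_{A'}$ and into $P_B$ agree whenever such a vertex appears. Once that bookkeeping is pinned down, the separator property does all the real work by killing precisely those edges that are not already ruled out by the two given merges, and both directions follow.
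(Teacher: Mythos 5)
Your proof is correct and follows essentially the same route as the paper: the forward direction restricts the merged partition to $C \cup B$ (your composition of bijections is just this projection phrased explicitly), and the backward direction glues $P_A'$ and $P_B$ part-by-part along their common alignment with $P_C$, with the separator property ruling out the only remaining candidate edges, namely those between $A \setminus C$ and $\compl{A} \setminus C$. The consistency bookkeeping you flag as the remaining obstacle does go through, since every vertex shared by $A \cup C$ and $B$ lies in $C$ and is therefore handled by the alignment with $P_C$.
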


\begin{qedproof}
  For the forward direction, we notice that for any $t$-partition $P'$
  resulting from  merging $P_A'$ with $P_C$, the  $t$-partition $P^* =
  P' \cap (C \cup  B)$ is a result of merging $P_A'  \cap (C \cup B) =
  P_C$ with $P_B \cap (C \cup B) = P_B$, and hence $P_C$ and $P_B$ are
  mergeable.

  We will  prove the backwards  direction by constructing  a partition
  $P'$ which  is the  result of merging  $P_A'$ with $P_B$.   As $P_C$
  partitions exactly the set $C$, and both $P_A$ merge with $P_C$ to a
  $t$-partition $P_A'$  and $P_B$ merge with $P_C$  to a $t$-partition
  $P_B'$, there  is an ordering of  the parts of $P_A'$  and $P_B'$ so
  that  the $i$-th  part of  $P_A'$  intersected with  $C$ equals  the
  $i$-th  part of  $P_B'$ intersected  with $C$.  We let  $P'$  be the
  multiset resulting from pairwise combining the $i$-th part of $P_A'$
  with the $i$-th part of $P_B'$.  As the only vertices that occur in
  parts of  both $P_A'$ and of $P_B'$  is the set of  vertices $C$, by
  combining the parts based on the ordering we described, we have made
  sure that each  vertex appear in exactly on part  of $P'$, and hence
  $P'$ is a $t$-partition.
  
  To show that  each part in $P'$ is an independent  set, we assume by
  contradiction that two adjacent vertices $x$ and $y$ are in the same
  part  of  $P'$. By  how  we constructed  $P'$,  no  two vertices  in
  different parts in either $P_A'$ or  $P_B'$ will be in the same part
  in $P'$. Therefore, as $P_A'$  partitions $A \cup C$ into parts that
  are independent  sets, and $P_B'$ does  the same for $B  \cup C$, if
  $x$ and $y$ are  adjacent and in the same part, one  of them must be
  in $A  \setminus C$ and the  other in $B \setminus  C$. However, $C$
  disconnects $A$ and  $B$, and hence no edge  exists between vertices
  in $A \setminus  C$ and $B \setminus C$,  contradicting that $x$ and
  $y$ are adjacent.
\end{qedproof} 

\begin{tabbing} 
  \rule{\textwidth}{0.25pt}\\ 
  {\bf Procedure {\sc Trim$_{\textsc{col}}$}} (on input $S \subseteq \cert(A)$, with $A \subseteq V(G)$)\\
  \rule{\textwidth}{0.25pt}\\ x\tabme{}

  \> remove from $S$ the partitions containing parts that are not independent sets\\ 
  \> \textbf{if} $\cut A$ is a split \textbf{then}\\
  \>  \> mark one certificate $P' \in S$ where \\
  \>  \> \> \>the number of non-empty parts in $N(\compl A) \cap P'$ is minimized\\
  \> \textbf{else} \\
  \>  \> $C \leftarrow $ minimum vertex cover of $A$\\
  \>  \> \textbf{for} each $t$-partition $P_C$ of $C$ \textbf{do}\\
  \>  \> \> mark one certificate $P_C' \in S$ which is mergeable with $P_C$\\
  \> \textbf{return} the marked certificates of $S$\\
  \rule{\textwidth}{0.25pt}\\ 
\end{tabbing}

\begin{lemma} \label{lemma:col:trim} The procedure
  \textsc{Trim$_{\textsc{col}}$} with input $S \subseteq A$ for $A
  \subseteq V(G)$ returns a set $S' \better_{A} S$ of size at most $\sm(A)^{\sm(A)}$, and has a runtime of
  $\bigoh^*(\card{S}\sm(A)^{\sm(A)}).$
\end{lemma}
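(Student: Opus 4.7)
The plan is to analyse the two branches of \textsc{Trim$_{\textsc{col}}$} separately: the split branch and the non-split branch. In each, I need to verify preservation ($S'\better_{A} S$), the size bound $\card{S'}\le\sm(A)^{\sm(A)}$, and the runtime $\bigoh^*(\card{S}\sm(A)^{\sm(A)})$.

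\textbf{Split branch.} Here $\sm(A)=1$, so $\sm(A)^{\sm(A)}=1$ and a single marked certificate is all that is needed. The key observation is that in a split every vertex of $N(\compl A)$ (the boundary vertices on the $A$ side) has exactly the same neighbourhood $N(A)$ in $\compl A$. Hence for any $P\in\cert(A)$, the obstruction to merging with some $P_{\compl A}\in\cert(\compl A)$ is determined solely by the number $j(P)$ of non-empty parts of $P$ meeting $N(\compl A)$: a valid merge exists iff $P_{\compl A}$ contains at least $j(P)$ parts disjoint from $N(A)$, to be paired with those ``contaminated'' parts. Therefore the $P'\in S$ minimising $j$ satisfies $\{P'\}\better_{A} S$, and finding it takes time $\bigoh^*(\card{S})$.

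\textbf{Non-split branch.} Here $C$ is a minimum vertex cover of $G[A,\compl A]$ and, by K\"onig's theorem, $\card{C}=\mm(A)=\sm(A)=:k$; since $C$ covers every edge of $G[A,\compl A]$ it separates $A$ from $\compl A$ in the sense required by Lemma~\ref{lemma:col:separator}. The number of unordered $t$-partitions of $C$ is bounded by the Bell number $B_{\card{C}}\le\card{C}^{\card{C}}\le k^{k}=\sm(A)^{\sm(A)}$, which in turn bounds $\card{S'}$. For preservation, given any $P\in S$ I would let $P_C:=P\cap C$ viewed as a $t$-partition of $C$; then $P$ is mergeable with $P_C$ (the merge is $P$ itself), so in the iteration for $P_C$ the procedure marks some $P'\in S$ mergeable with $P_C$. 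Because mergeability with $P_C$ forces $P'$ to induce the same partition on $C$ as $P_C$ does, the merge of $P'$ with $P_C$ is $P'$ itself. Applying Lemma~\ref{lemma:col:separator} twice with $B=\compl A$, once with $P_A=P,\,P_A'=P$ and once with $P_A=P',\,P_A'=P'$, I obtain that for every $P_B\in\cert(\compl A)$ the three conditions ``$P_B$ mergeable with $P$'', ``$P_C$ mergeable with $P_B$'', and ``$P_B$ mergeable with $P'$'' are equivalent; hence $P'\better_{A} P$ and $S'\better_{A} S$. Using Lemma~\ref{lemma:col:merging} to test mergeability in polynomial time, and iterating over at most $\sm(A)^{\sm(A)}$ partitions of $C$, each against $S$, the total runtime is $\bigoh^*(\card{S}\sm(A)^{\sm(A)})$.

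\textbf{Main obstacle.} The delicate step is the preservation argument in the non-split branch: I must justify that remembering only the projection $P\cap C$ of a certificate to the vertex cover captures everything relevant about how $P$ can be extended by partitions of $\compl A$. This is the content of Lemma~\ref{lemma:col:separator}, but applying it correctly requires both noting that $C$ genuinely separates $A$ from $\compl A$ and observing the small technical fact that mergeability with a $t$-partition of $C$ forces agreement on $C$; it is that fact which lets me instantiate $P_A'=P$ and $P_A'=P'$ in the lemma's hypotheses so that the chain of equivalences closes.
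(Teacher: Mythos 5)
Your split branch is fine and matches the paper's argument. The gap is in the non-split branch, at the very step you flag as delicate: you set $P_C := P\cap C$ ``viewed as a $t$-partition of $C$''. But $C$ is a minimum vertex cover of the bipartite graph $G[A,\compl A]$, and in general it contains vertices of $\compl A$; then $P\cap C$ only partitions $C\cap A$, not $C$, so ``the iteration for $P_C$'' is not well-defined, and the statement ``mergeability with $P_C$ forces $P'$ to induce the same partition on $C$'' is false (it can at best force agreement on $C\cap A$). The information you discard — how the vertices of $C\cap\compl A$ are coloured — is exactly what coordinates a certificate of $A$ with a certificate of $\compl A$, and without it the chain of equivalences does not close. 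Concretely, take $t=2$, $A=\set{a_1,a_2}$, $\compl A=\set{b_1,b_2}$, crossing edges $a_1b_1$ and $a_2b_2$, no other edges; then $C=\set{b_1,b_2}$ is a legitimate minimum vertex cover, every $P\in\cert(A)$ has the same (empty) trace on $C\cap A$, yet $P_2=\set{\set{a_1},\set{a_2}}$ merges with $P_w=\set{\set{b_1},\set{b_2}}$ while $P_1=\set{\set{a_1,a_2},\emptyset}$ does not; so a rule that keeps one certificate per trace on $C\cap A$ is not preserving. (Note also that you cannot repair this by insisting $C\subseteq A$: a vertex cover of $G[A,\compl A]$ inside $A$ must contain all of $N(\compl A)$, which is not bounded by $\sm(A)$.) A secondary problem is your use of Lemma~\ref{lemma:col:separator} with $P_A'=P$ and $P_A'=P'$: once $C\setminus A\neq\emptyset$, the merge of $P'$ with a genuine $t$-partition of $C$ is a partition of $A\cup C$, not $P'$ itself, so the lemma cannot be instantiated the way you do.

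The paper's proof avoids all of this by extracting $P_C$ from a witness rather than from $P$: given $P\in S$ and $P_w\in\cert(\compl A)$ with a witness $W\in\conc(P,P_w)$, it sets $P_C=W\cap C$, a $t$-partition of all of $C$ (both sides). Then $P$ is mergeable with $P_C$, so in the iteration for this $P_C$ the procedure marks some $P'$ mergeable with $P_C$; letting $P_A'$ be the merge of $P'$ with $P_C$ (a partition of $A\cup C$) and noting that $P_C$ is mergeable with $P_w$ (both are restrictions of $W$), Lemma~\ref{lemma:col:separator} with $B=\compl A$ gives that $P_w$ is mergeable with $P_A'$, and the resulting partition of $V(G)$ lies in $\conc(P',P_w)$, so it is a witness. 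Your size and runtime accounting (Bell-number bound $\le\sm(A)^{\sm(A)}$ on the unordered partitions of $C$, polynomial mergeability tests via Lemma~\ref{lemma:col:merging}) is fine, but the preservation argument needs to be rebuilt along these lines.
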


\begin{qedproof}
  The  runtime  is  correct as  a  result  of  the fact  that  merging
  (Lemma~\ref{lemma:col:merging})  takes polynomial  time  to execute,
  and it produces a set $S' \subseteq S$ of cardinality at most $1 = \sm(A)^{\sm(A)}$ if
  $\scut{A}$ is  a split, and  cardinality at most  $\mm(A)^{\mm(A)} =
  \sm(A)^{\sm(A)}$ otherwise. We now show that $S' \better_{A} S$:

  For contradiction, let us assume there exists a certificate $P_w \in
  \cert(\compl A)$ so that for a certificate $P \in S$ the set $\conc(P_w, P)$
  contains a  witness, while $\conc(\set{P_w},  S')$ does not  contain a
  witness. Let us first assume $\cut A$ is a split:
  
  As $\cut A$ is a split, each part of $P$ is either adjacent to all
  of $N(A)$, or not adjacent to $\compl A$ at all. Let $z$ be the
  number of parts in $P$ that are adjacent to all of $N(A)$. This also
  means $z$ is a lower bound to the number of parts in $P_w$ that do
  not have neighbours in $A$.  In \textsc{Trim$_{\textsc{col}}$} we
  mark (and thus output) one certificate $P'$ where at most $z$ parts
  have neighbours $\compl A$. Thus for each of the at most $z$ parts
  in $P'$ that are adjacent to $A$, there is a part in $P_w$ not
  adjacent to any vertex in $A$. So, we can conclude that $P'$ and
  $P_w$ can be merged together to form a witness.  This contradicts
  that $\conc(\set{P_w}, S')$ does not contain a witness when $\cut A$
  is a split.

  Now, let us assume $\cut A$ is not a split.  Let $W$ be a witness in
  $\conc(\set{P_w},  S)$.   For the  vertex  cover  $C$,  we have  the
  following smaller $t$-partitions of $W$; $P_C = W \cap C$ and $P_W =
  W  \cap C$. By  Lemma~\ref{lemma:col:separator}, as  $C$ disconnects
  $A$ from  the rest of the  graph, any $t$-partition  of $A$ mergeable
  with $P_C$ is  mergeable with $P_W$.  The algorithm  assures that for
  all $t$-partitions $P_C$ of $C$,  whenever a $t$-partition in $S$ is
  mergeable with $P_C$, at  least one $t$-partition mergeable with $P_C$
  exists  in $S'$. From  this we  can conclude  that $\conc{\set{P_w},
    S'}$ preserves $\conc(\set{P_w}, S)$.
\end{qedproof}

\begin{center}                          
\begin{tabbing} 
\rule{\textwidth}{0.25pt}\\ 
{\bf Procedure {\sc Join$_{col}$}}(  \= on node $w$ with children $a,b$ and $A_1=V_a, A_2 =V_b$\\
                                    \> and $A=A_1 \cup A_2$ and given $S_1 \better_{A_1} cert(A_1)$ and\\
                                    \> $S_2 \better_{A_2} cert(A_2)$)\\
 \rule{\textwidth}{0.25pt}\\ 
    x\tabme{}

  \>  $S \leftarrow \emptyset$\\
  \>  $C_{1}, C_{2} \leftarrow$ minimum vertex cover of $G[{A_1}, \compl {A_1}]$ and $G[{A_2}, \compl {A_2}]$, respectively\\[4pt]    
  \>  \textbf{for} $P_1 \in S_1$ and $P_2 \in S_2$ \textbf{do}\\
  \>\>  $N_1 \leftarrow$ parts of $P_1$ not adjacent to $N(A_1)$\\
  \>\>  $N_2 \leftarrow$ parts of $P_2$ not adjacent to $N(A_2)$\\[6pt]

  \>\>  \textbf{if} both $\scut{A_1}$ and $\scut{A_2}$ are splits  \textbf{then}\\ 
  \>\>\>  \textbf{for} $0 \leq z \leq \min\set{\card{N_1},\card{N_2}}$ \textbf{do} \\
  \>\>\>\>  add to $S$ a $t$-partition generated (if possible) by \\
  \>\>\>\>\>\>\>       merging the two partitions $P_1$ and $P_2$ together \\
  \>\>\>\>\>\>\>       in such a way that exactly $z$ of the parts of $N_1$  \\
  \>\>\>\>\>\>\>       is merged with parts of $N_2$\\[6pt]

  \>\>  \textbf{else if} neither $\scut{A_1}$ nor $\scut{A_2}$ is a split \textbf{then}\\ 
  \>\>\>  \textbf{for} each $t$-partition $P_c$ of $C_1 \cup C_2$ \textbf{do}\\
  \>\>\>\>  \textbf{if} both $P_1$ and $P_2$ are mergeable with $P_c$ \textbf{then}\\
  \>\>\>\>\>  $P_1^c \leftarrow $ merge $P_c$ and $P_1$\\
  \>\>\>\>\>  $P' \leftarrow$ merge $P_1^c$ and $P_2$\\
  \>\>\>\>\>  add $P' \cap (A \cup B)$ to $S$\\[6pt]

  \>\>  \textbf{else}\\
  \>\>\>  assign $s,r \in \set{1,2}$ so that $\scut{A_s}$ is the split and $r \not= s$\\ 
  \>\>\>  \textbf{for} each $t$-partition $P_c$ of $C_r$ mergeable with $P_r$ \textbf{do}\\
  \>\>\>\>  \textbf{for} each subset $Q$ of the non-empty parts of $P_c$ \textbf{do}\\
  \>\>\>\>\>  $P_q \leftarrow$   \=  merge (if possible) $P_c$ with $P_s$ so that $Q$ is exactly the \\
  \>\>\>\>\>                     \>  non-empty parts of $P_c$ that get combined with $N_s$\\
  \>\>\>\>\>  $P' \leftarrow $ the $t$-partition generated by merging $P_q$ with $P_s$\\
  \>\>\>\>\>  add to $S$ the $t$-partition $P' \cap (A \cup B)$\\[6pt]

  \>  \textbf{return} \textsc{Trim$_{col}$}($S$)\\
 \rule{\textwidth}{0.25pt}\\
\end{tabbing}
\end{center}

\begin{lemma}  \label{lemma:col:join} Procedure {\sc  Join$_{col}$} is
  correct  and runs in  time $\bigoh^*(\card{S_1}\card{S_2}{k^{3k}})$,
  producing a set $S$ of cardinality $\bigoh(k^k)$.
\end{lemma}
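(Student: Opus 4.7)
The plan is to verify correctness of each of the three cases inside the main loop of \textsc{Join$_{col}$}, then invoke Lemma~\ref{lemma:col:trim} on the final call to \textsc{Trim$_{col}$} to obtain both the claimed size bound and the preservation of $\cert(A)$. Fix any important certificate $P \in \conc(S_1, S_2)$ together with some $P_w \in \cert(\compl A)$ for which $\conc(P, P_w)$ contains a witness $W$. Because $S_1 \better_{A_1} \cert(A_1)$ and $S_2 \better_{A_2} \cert(A_2)$, I may assume $P$ arises from some pair $(P_1, P_2) \in S_1 \times S_2$ processed by the outer loop, and the task is to show that the iteration for this pair adds to $S$ some $t$-partition $P'$ with $P' \better_A P$.

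I would then case-split on whether $\scut{A_1}$ and $\scut{A_2}$ are splits. In the \textbf{both-splits} branch every non-empty part of $P_i$ is adjacent either to all of $N(A_i)\setminus A_i$ or to none of it, so the only datum that determines the external behaviour of a merged certificate is the number $z$ of parts of $N_1$ paired with parts of $N_2$; the loop over $z$ enumerates exactly this. In the \textbf{neither-split} branch, $C_1 \cup C_2$ separates the interiors of $A$ and $\compl A$, so two applications of Lemma~\ref{lemma:col:separator} (once across $\scut{A_1}$ and once across $\scut{A}$) reduce extendability to finding a $t$-partition $P_c$ of $C_1 \cup C_2$ mergeable with both $P_1$ and $P_2$, and the inner loop enumerates all such $P_c$. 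The \textbf{one-split} branch is a hybrid: on the non-split side the vertex cover $C_r$ of size at most $k$ plays the separator role for Lemma~\ref{lemma:col:separator}, while on the split side only the counts, not identities, of parts paired with each $C_r$-part matter, so iterating over subsets $Q$ of the non-empty parts of the $C_r$-partition $P_c$ suffices.

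Once the set $S$ assembled before trimming is shown to preserve $\conc(S_1, S_2)$, Lemma~\ref{lemma:col:trim} yields $|\textsc{Trim$_{col}$}(S)| \leq \sm(A)^{\sm(A)} \leq k^k$, establishing the output-size claim. For runtime, the outer loop contributes $|S_1|\cdot|S_2|$ iterations, and the inner work per pair is dominated by enumerating $t$-partitions of a set of size at most $2k$, which is $k^{O(k)}\leq k^{2k}$ (the one-split branch adds only a factor $2^{|P_c|} \leq 2^k$ for the subset $Q$, still within this bound). Trimming then costs $|S|\cdot k^k$, for a total of $\bigoh^*(|S_1|\cdot|S_2|\cdot k^{3k})$. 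The main obstacle I expect is the one-split case: one must argue carefully that enumerating subsets $Q$ of the non-empty parts of $P_c$ captures \emph{all} extendable merges, since the split side may contain many parts sharing the same external neighbourhood and hence the exact identity of which split-side parts get paired with which parts of $P_c$ is irrelevant for extendability via $P_w$ -- only the count per $Q$ is.
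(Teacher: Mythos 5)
Your proposal is correct and follows essentially the same route as the paper's proof: the same three-way case analysis on which of $\scut{A_1},\scut{A_2}$ are splits, with preservation in the neither-split and one-split branches argued via the vertex-cover separator and Lemma~\ref{lemma:col:separator}, counts of externally-equivalent parts handling the split sides, and Lemma~\ref{lemma:col:trim} supplying the final $k^k$ size bound and the $k^{3k}$ factor in the runtime. The only cosmetic difference is that you invoke the separator lemma twice in the neither-split case where the paper applies it once to the combined separator $C_1 \cup C_2$, which changes nothing of substance.
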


\begin{qedproof}
  We will go through the three  cases in the algorithm (when zero, one
  or  two  out of  the  two  cuts  $\scut{A_1}$ and  $\scut{A_2}$  are
  splits), and show that for each pair $(P_1, P_2)$ of certificates in
  $S_1 \times S_2$, the  output of the algorithm preserves $\conc(P_1,
  P_2)$. As  the last  line of the  algorithm assures that  the output
  preserves $S$ (by Lemma~\ref{lemma:col:trim}) and the cardinality of
  the output  is of the  correct size, we  only need to show  that $S$
  preserves $\conc(P_1, P_2)$ and that the runtime is correct.
  
  Suppose neither of  the two cuts are splits.  In this case, whenever
  there  exists a  certificate of  $\conc(P_1, P_2)$  mergeable  with a
  $t$-partition $P_C$ of  the separator $C = C_1 \cup  C_2$ of $A$ and
  $\compl A$, the algorithm assures that there is going to be at least
  one   $t$-partition  of  $A$   in  $S$   mergeable  with   $P_C$.  By
  Lemma~\ref{lemma:col:separator}, this means $S$ is going to preserve
  $\conc(P_1, P_2)$.

  If both of the two cuts are splits, then for each part $p_i$ of
  every $t$-partition of $A$ the neighbourhood $N(p_i) \setminus A$,
  must either be empty, $N(A_1)$, $N(A_2)$, or $N(A)$. In this case
  the algorithm generates, for each possible $z$ so that there exists
  an important certificate $P^* \in \conc(P_1, P_2)$ where the number
  of parts with empty neighbourhoods is exactly $z$ (and thus, the
  number of parts with neighbourhood equal $N(A_1)$, $N(A_2)$ and
  $N(A)$, is $\card{N_2} - z$, $\card{N_1} - z$, and $t -
  \card{N_1}+\card{N_2} + z$, respectively), generates a $t$-partition
  of independent sets where there the same number of parts with each
  of the particular four neighbourhoods is equal to the number of
  parts of each neighbourhood for $P^*$.  We can observe that when two
  $t$-partitions over the same set both consist of only independent
  sets, and there is a correspondence between the parts of both
  partitions so that the neighbourhood in $\compl{A}$ of each pair of
  corresponding parts is the same, the two partitions are mergeable to
  exactly the same $t$-partitions of $\compl{A}$.  Therefore, the set
  of $t$-partitions the algorithm generates in the case when both cuts
  are splits, preserves $\conc(P_1, P_2)$.

  If exactly one of the cuts are splits, let us assume without loss of
  generality that $\scut{A_1}$ is the split. As above, each part of
  $P_1$ is one of two types; adjacent to $N(A_1)$, or not adjacent to
  $N(A_1)$.  When $C$ is a vertex cover of $G[{A_2}, \compl {A_2}]$, for each
  important certificate $P^* \in \conc(P_1, P_2)$, we have a
  $t$-partition $P_C = P^* \cap C$ of $C$. Of the parts in $P_c$, some
  subset $Z$ of the non-empty parts get combined with the parts of
  $P_1$ that do not have any neighbours in $N(A_1)$ and the rest of
  the non-empty parts get combined with the other type of parts in
  $P_1$. Since $C$ disconnects $A_2$ from the rest of the graph, there
  must be a bijection from each part in $P^*$ to parts with the exact
  same neighbourhood in $N(A)$ for the $t$-partition generated by the
  algorithm in the last if/else case, for $Q = Z$. This, in turn,
  means that the two partitions preserve each other, and so we can
  conclude that the set of certificates the algorithm produces
  preserves $\conc(P_1, P_2)$.

  The runtime of the algorithm we get as follows: Excluding
  polynomials of $n$, we get the worst case runtime when neither $A_1$
  nor $A_2$ are splits. Then $S$ grows to be as large as $k^{2k}$ (the
  number of $t$-partitions of $C_1 \cup C_2$) for each pair of
  certificates in $S_1 \times S_2$.  This implies a runtime of
  $\bigoh^*(\card{S_1}\card{S_2}k^{2k})$ for the entire part before
  the execution of \textsc{Trim$_{\textsc{col}}$}, which by
  Lemma~\ref{lemma:col:trim} takes
  $\bigoh^*(\card{S_1}\card{S_2}k^{3k})$-time given the size of $S$.
\end{qedproof}

\begin{theorem} \label{thm:col} Given a graph $G$ and branch
  decomposition $(T, \delta)$ of \SMW\ $k$, we can solve %
  {\sc Chromatic Number} in time $\bigoh^*(k^{5k})$.
\end{theorem}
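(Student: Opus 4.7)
The plan is to chain together the certificate-preservation framework and Lemma~\ref{lemma:col:join} in the same way the earlier theorems (Theorem~\ref{thm:maxCut}, Theorem~\ref{thm:hc}) were proved, and then reduce {\sc Chromatic Number} to $t$-{\sc Coloring}.

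First I would invoke {\sc Recursive} with {\sc Join$_{col}$} on the rooted branch decomposition to solve $t$-{\sc Coloring} for a fixed $t \leq n$. By Lemma~\ref{lemma:col:join}, each call to {\sc Join$_{col}$} returns a set of certificates of cardinality $\bigoh(k^k)$ that preserves the full set $\cert(V_w)$. Hence, in every internal call, both inputs $S_1$ and $S_2$ fed to {\sc Join$_{col}$} have size at most $\bigoh^*(k^k)$, and so the per-call runtime is $\bigoh^*(|S_1|\cdot|S_2|\cdot k^{3k}) = \bigoh^*(k^k\cdot k^k\cdot k^{3k}) = \bigoh^*(k^{5k})$. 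Since $T$ has $\bigoh(n)$ internal nodes, {\sc Recursive} itself runs within $\bigoh^*(k^{5k})$ time, and running the polynomial-time verifier on the $\bigoh(k^k)$ certificates returned at the root decides $t$-{\sc Coloring} in total time $\bigoh^*(k^{5k})$.

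To obtain the chromatic number, I would simply run the $t$-{\sc Coloring} algorithm for $t = 1, 2, \ldots, n$ (or binary-search on $t$), returning the smallest $t$ for which the instance is a `yes'-instance; this adds only a polynomial factor and is absorbed into the $\bigoh^{*}$-notation. Since $\chi(G) \leq n$, correctness is immediate.

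I do not anticipate any genuine obstacle here: the heavy lifting (defining the certificates, proving {\sc Join$_{col}$} correct, bounding the output size, and bounding the {\sc Trim$_{\textsc{col}}$} step) has already been carried out in Lemma~\ref{lemma:col:separator}, Lemma~\ref{lemma:col:trim} and Lemma~\ref{lemma:col:join}. The only thing to be careful about is the bookkeeping: confirming that the size bound on the output of {\sc Join$_{col}$} indeed controls the size of \emph{every} subsequent input, so that the per-node cost stays at $\bigoh^*(k^{5k})$ rather than blowing up across the tree, and noting that allowing empty parts in a $t$-partition makes $t$-{\sc Coloring} monotone in $t$, so that the outer search over $t$ is sound.
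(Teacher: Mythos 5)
Your proposal is correct and follows essentially the same route as the paper's own proof: invoke {\sc Recursive} with {\sc Join$_{col}$}, use Lemma~\ref{lemma:col:join} to bound each input set by the $\bigoh(k^k)$ output size so that every call costs $\bigoh^*(k^{5k})$, multiply by the linearly many calls, and then run $t$-{\sc Coloring} for all $t \leq n$ to get {\sc Chromatic Number} at no extra cost beyond polynomial factors. Your added remarks on monotonicity in $t$ and on the input-size bookkeeping are fine but not needed beyond what the paper already argues.
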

\begin{qedproof}
  In Lemma~\ref{lemma:col:join} we show {\sc Join$_{col}$} is correct
  and runs in time $\bigoh^*(\card{S_1}\card{S_2}{k^{3k}})$, producing
  a set $S$ of cardinality $\bigoh(k^k)$. So, using {\sc
    Recursive} with {\sc Join$_{col}$}, we know the size of both of
  the inputs of {\sc Join$_{col}$} is at most the size of its output,
  i.e., $|S_1|, |S_2| \leq \bigoh^*(k^{k})$. So, each call to {\sc
    Recursive} has runtime at most $\bigoh^*(k^{5k})$. As there are
  linearly many calls to {\sc Recursive} and there is a polynomial
  time verifier for the certificates {\sc Recursive} produces, by the
  definition of $\better$, the total runtime is also bounded by
  $\bigoh^*(k^{5k})$. To solve {\sc Chromatic Number} when we have an
  algorithm for $t$-{\sc Coloring}, we simply run the $t$-{\sc
    Coloring} algorithm for each value of $t \leq n$, giving the same
  runtime for {\sc Chromatic Number} when excluding polynomial factors
  of $n$.
\end{qedproof}

\subsection{Edge Dominating Set} 

\subsubsection{The problem.}                       
The  decision problem  \textsc{$t$-Edge  Dominating Set}  asks for  an
input graph  $G$, whether  there exists a  set $E' \subseteq  E(G)$ of
cardinality at most $t$, so that  for each edge $e \in E(G)$ either $e
\in E'$  or $e$ shares an  endpoint with an  edge $e' \in E'$.  We say
that an edge $e' \in E'$ \emph{dominates} $e \in E(G)$ if $e$ and $e'$
share an endpoint.
                    
\subsubsection{The certificates and $\conc$.}
For  \textsc{$t$-Edge Dominating  Set},  we define  $\cert(X)$ for  $X
\subseteq V(G)$ to be all subgraphs of $G[X]$. This might seem odd, as
we are looking for a set  of edges. However, for a certificate $G' \in
\cert(X)$, the set $V(G')$ is  going to make the algorithm simpler.  A
witness  will  be a  subgraph  $G_w$ such  that  $E(G_w)$  is an  edge
dominating set  of size  at most  $t$ and each  vertex of  $V(G_w)$ is
incident with an  edge of $E(G_w)$. Checking the  latter can obviously
be  done in  polynomial time  and checking  that $E(G_w)$  is  an edge
dominating set of size at most $t$ can also be done in polynomial time
since {\sc $t$-Edge Dominating Set} is in NP.

For  disjoint sets  $A, B  \subseteq V(G)$  and certificates  $G_A \in
\cert(A)$ and  $G_B \in \cert(B)$,  we define $\conc(G_A, G_B)$  to be
the set %

\[
\set{G_A+G_B+E' : E' \subseteq E( G[V(G_A), V(G_B)] )}.
\] %

We  can  see  that  this  definition  satisfies  $\cert(A  \cup  B)  =
\conc(\cert(A), \cert(B))$.

\subsubsection{The \textsc{Join$_{\textsc{eds}}$} function.}

Given a  graph $G'$ and vertex  $v \in V(G')$,  we say that $v$  is an
\emph{isolated} vertex of $G'$ if it  is not incident with any edge of
$E(G')$. If a  vertex is not isolated, it  is \emph{non-isolated}.  We
say  a set  of edges  $E'$ \emph{span}  a set  of vertices  $X$  if $X
\subseteq V(E')$. We denote by $I(G')$ the isolated vertices of $G'$.

We say that a certificate  $G' \in \cert(A)$ is \emph{locally correct}
if all edges in $G[A]$ have an endpoint in $V(G')$ and all the isolated
vertices  of   $V(G')$  are  in  $N(\compl  A)$.    A  certificate  is
\emph{locally incorrect} if it is  not locally correct.  We see that a
certificate in $\cert(A)$ which is locally incorrect cannot also be an
important certificate as  there exists an edge in  $G[A]$ which is not
dominated  by edges in  $E(G')$ and  cannot be  dominated by  edges in
$G[A, \compl A]$.

We observe  that for  two locally correct  certificates $G_1,  G_2 \in
\cert(A)$, if  $N(\compl A) \cap G_1  = N(\compl A) \cap  G_2$ and the
isolated vertices of  $G_1$ equal those of $G_2$,  then $G_1 \better_A
G_2$ if $\card{E(G_1)} \leq \card{E(G_2)}$.

\begin{lemma}  \label{lemma:EDS_matching_argument} Given  a  graph $G$
  without isolated vertices and a subset $A \subseteq V(G)$, a minimum
  cardinality set  $X \subseteq E(G)$  of edges spanning  the vertices
  $A$ can be found in polynomial time.
\end{lemma}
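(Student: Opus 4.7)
The plan is to reduce the problem to maximum matching in the induced subgraph $G[A]$. First I would compute a maximum matching $M$ of $G[A]$ in polynomial time (e.g.\ via Edmonds' blossom algorithm). Then, for each vertex $v \in A \setminus V(M)$, I would pick one arbitrary edge of $G$ incident with $v$; such an edge exists precisely because $G$ has no isolated vertices. The returned set is $X = M \cup \{e_v : v \in A \setminus V(M)\}$. By construction $X$ spans $A$, and $\card{X} \leq \card{M} + (\card{A} - 2\card{M}) = \card{A} - \card{M}$.

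The substantive part is the matching lower bound: for every feasible $X \subseteq E(G)$ spanning $A$ one has $\card{X} \geq \card{A} - \mm(G[A])$. To see this, let $X_A \subseteq X$ be the edges with both endpoints in $A$, and let $M^{*}$ be a maximum matching inside $X_A$ (so $\card{M^*} \leq \mm(G[A])$). The $2\card{M^*}$ endpoints of $M^*$ lie in $A$, and each of the remaining $\card{A} - 2\card{M^*}$ vertices of $A$ must be covered by some edge of $X \setminus M^*$. The key observation is that each edge of $X \setminus M^*$ covers \emph{at most one} such remaining vertex: edges with only one endpoint in $A$ trivially cover at most one vertex of $A$, and an edge in $X_A \setminus M^*$ whose both endpoints were unmatched by $M^*$ would contradict maximality of $M^*$ inside $X_A$. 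Thus $\card{X \setminus M^*} \geq \card{A} - 2\card{M^*}$ and therefore $\card{X} \geq \card{A} - \card{M^*} \geq \card{A} - \mm(G[A])$, matching the size achieved by the algorithm.

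Putting these together, the algorithm produces an optimal set of size $\card{A} - \mm(G[A])$ in polynomial time. The main obstacle is the maximality argument in the lower bound: one has to be careful that every edge in $X_A \setminus M^*$ is incident with at least one already-matched vertex of $M^*$, which is exactly what maximality of $M^*$ as a matching inside $X_A$ (not just inside $G[A]$) supplies. Everything else is either invocation of Edmonds' algorithm or simple counting.
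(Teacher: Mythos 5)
Your proposal is correct and follows essentially the same route as the paper: take a maximum matching $M$ of $G[A]$, add one arbitrary incident edge for each $M$-unmatched vertex of $A$ (possible since $G$ has no isolated vertices), and argue that no spanning set can beat $\card{A} - \card{M}$. The only difference is that you spell out the lower bound via a counting argument with a maximum matching $M^*$ inside the feasible solution, whereas the paper merely asserts that a smaller spanning set would yield a larger matching in $G[A]$; your version is a careful elaboration of that same step, not a different method.
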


\begin{qedproof}
  Let $M$ be  a maximum matching of $G[A]$. Any  set spanning $A$ must
  be of size at  least $|M| + (|A| - 2|M|) =  |A| - |M|$, as otherwise
  there must  exist a matching in  $G[A]$ larger than $M$.  Let $R$ be
  the set of vertices in $A$ not incident with any edge in $M$.  As no
  vertex of  $G$ is isolated, each  vertex in $R$ is  incident with at
  least one edge in $G$. Thus, we can easily find a set $E_R$ of $|R|$
  edges spanning  $R$. We claim  that the  set $X =  M \cup E_R$  is a
  minimum  cardinality set of  edges spanning  $A$; Clearly  $X$ spans
  $A$, as each  vertex not spanned by $M$ is  by definition spanned by
  $E_R$. Furthermore, the size of $R$  is exactly $|A| - 2|M|$, so the
  size of $X$ is $|M| + (|A| -  2|M|) = |A| - |M|$. Hence, the set $X$
  is a minimum cardinality set spanning $A$.
\end{qedproof}

As usual, our join-procedure will consist of a part where we join
together pairs of certificates, imitating $\conc$, and a part where
the size of the output is reduced using a trim-procedure that ensures
the output is preserving. For the \textsc{Edge Dominating
  Set}-problem, the trimming part consists of two procedures; one for
when the cut in question is a split, and one for when it is not a
split. We first present the simplest of the two, namely the one used
when the cut is a split (\textsc{Trim$_{\textsc{eds}}$-split}).

\begin{center}                          
\begin{minipage}{1.0\linewidth}
\begin{tabbing} 
  \rule{\textwidth}{0.25pt}\\ 
  {\bf Procedure {\sc Trim$_{\textsc{eds}}$-split}} 
  (on $S \subseteq \cert(A)$ where $\scut A$ is a split)\\
  \rule{\textwidth}{0.25pt}\\ 
  x\tabme{}
  
  \>  remove from $S$ all locally incorrect certificates\\
  \>  \textbf{for} each $0 \leq x_1,x_2 \leq n$ \textbf{do}\\
  \>\>  mark one certificate $G' \in S$ of 
        minimized $\card{E(G')}$ where\\
  \>\>\>  $x_1$ equals $\card{I(G')}$, and\\
  \>\>\>  $x_2$ equals $\card{V(G') \cap N(\compl A))}$\\
  \>  return all the marked certificates in $S$\\

 \rule{\textwidth}{0.25pt}
\end{tabbing}
\end{minipage}
\end{center}

\begin{lemma} \label{lemma:eds:trim:split} The procedure
  \textsc{Trim$_\textsc{eds}$-split} on $S \subseteq \cert(A)$ for $A
  \subseteq V(G)$ where $\scut A$ is a split, returns a set $S'
  \subseteq S$ so that $S' \better_A S$ and $\card{S'} \in
  \bigoh(n^2)$ in $\bigoh^*(\card{S})$-time.
\end{lemma}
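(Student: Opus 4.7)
The plan is to verify three things: (i) $\card{S'} = \bigoh(n^2)$, (ii) the runtime is $\bigoh^*(\card{S})$, and (iii) $S' \better_A S$. Parts (i) and (ii) are bookkeeping: the marking loop ranges over at most $(n+1)^2$ pairs $(x_1,x_2)$, and for each $G' \in S$ the three quantities $\card{I(G')}$, $\card{V(G') \cap N(\compl A)}$ and $\card{E(G')}$ are computable in polynomial time, so updating the current best certificate for the corresponding pair is $\bigoh(1)$ per element of $S$.

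The content of the lemma is (iii), and is where the split property is essential. Fix $G_2 \in S$. If $G_2$ is locally incorrect it is unimportant, so $G' \better_A G_2$ holds vacuously for any $G'$ and we are done; assume $G_2$ is locally correct. Let $x_1 = \card{I(G_2)}$ and $x_2 = \card{V(G_2) \cap N(\compl A)}$; the marking step then provides some $G_1 \in S'$ with the same parameters and $\card{E(G_1)} \leq \card{E(G_2)}$. I claim $G_1 \better_A G_2$. Suppose $G_2 + G_w + E'$ is a witness for some $G_w \in \cert(\compl A)$ and cross edges $E' \subseteq E(G[V(G_2),V(G_w)])$. Let $X = N(\compl A) \cap A$ and $Y = N(A) \cap \compl A$; by the split property $G[X,Y]$ is complete bipartite. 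Local correctness forces $I(G_1), I(G_2) \subseteq X$, and by choice of parameters $\card{V(G_1) \cap X} = \card{V(G_2) \cap X}$ and $\card{I(G_1)} = \card{I(G_2)}$. Pick any bijection $\tau : X \to X$ sending $V(G_2) \cap X$ onto $V(G_1) \cap X$ and $I(G_2)$ onto $I(G_1)$, and define $E'' = \set{(\tau(u),v) : (u,v) \in E',\, u \in X}$; each such pair is an actual edge of $G$ because $G[X,Y]$ is complete bipartite.

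It remains to verify that $G_1 + G_w + E''$ is a witness. The size bound is immediate since $\card{E''} = \card{E'}$ and $\card{E(G_1)} \leq \card{E(G_2)}$. For the ``no isolated vertex in the witness'' condition, every $v \in I(G_2)$ must be incident to an edge of $E'$ (it cannot be covered by $E(G_2)$ or $E(G_w)$), and the corresponding $E''$-edge covers $\tau(v) \in I(G_1)$; isolated vertices of $G_w$ are handled identically on the $Y$-side, where $\tau$ acts as the identity. For the dominating-set condition, $\tau$ preserves at each $x \in X$ the trichotomy ``non-isolated in $G_i$ / isolated in $G_i$ / outside $V(G_i)$'', so cross edges $(x,y) \in X \times Y$ are dominated under the new witness iff they were under the old; edges of $G[\compl A]$ are unaffected; and edges of $G[A]$ are dominated by local correctness of $G_1$, since an isolated endpoint in $V(G_1)$ must lie in $X$ and hence is incident to an $E''$-edge by the previous step. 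The main obstacle is precisely this last verification: the whole argument goes through only because the split guarantees that every $\tau$-relabeling of $E'$ remains a subset of genuine edges of $G$ and hence transfers all domination relations simultaneously.
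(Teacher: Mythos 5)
Your proof is correct and takes essentially the same approach as the paper's: the paper likewise argues that, by the split property, only the cardinalities $\card{I(G')}$ and $\card{V(G') \cap N(\compl A)}$ (together with minimizing $\card{E(G')}$) matter, because cross edges can be re-routed to any equinumerous set of vertices of $N(\compl A)$. Your explicit bijection $\tau$ and the relabelled edge set $E''$ just spell out in detail the re-routing argument that the paper states in one sentence.
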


\begin{qedproof}
  The  algorithm clearly does  $(n+1)^2$ number  of iterations  in the
  for-loop, and for  each of these iterations it  iterates through the
  list $S$.  For each element in  $S$, it does a  polynomial amount of
  work, so the total  runtime is $\bigoh^*(|S|)$. Furthermore, at most
  one element is marked to be  put in the output at each iteration, so
  the output of the algorithm is of size $\bigoh(n^2)$.

  For  the correctness  of  the  algorithm, notice  that  for any  two
  locally correct certificates $G_1, G_2 \in \cert(A)$, if $|I(G_1)| =
  |I(G_2)|$  and $|V(G_1) \cap  N(\compl A)|  = |V(G_2)  \cap N(\compl
  A)|$, then  $G_1$s preserves  $G_2$. This is  because each  vertex in
  $N(\compl A)$  is adjacent to  exactly the same vertices  of $\compl
  A$, and  so for any set $X_2  \subseteq N(\compl A)$, if  there is a
  set  of edges  spanning  $X \subset  \compl{A}$  and $X_1  \subseteq
  N(\compl A)$, there  is also a set of edges  of the same cardinality
  spanning $X_2$ and $X$ as long as $\card{X_2} = \card{X_1}$.
\end{qedproof}

Now we describe the trim-procedure for when the respective cut is not
a split ({\sc Trim$_{\textsc{eds}}$-non-split}).  This procedure is
more complicated than {\sc Trim$_{\textsc{eds}}$-split}. The core
idea of the procedure is that when $G_a + G_{\compl a} + E_w \in \conc(G_a,
G_{\compl a})$ is a witness, then also for any locally correct certificate
$G_i$, the certificate $G_i + G_{\compl a} + E_w' \in \conc(G_i, G_{\compl a})$ is
also a witness, as long as $V(G_i) \cup V(G_{\compl a})$ is a vertex cover of
$G[A, \compl A]$, $E_w'$ spans $I(G_i)$ and the vertices in $V(E_w)
\cap \compl A$, and $|E(G_i)|+|E_w'| \leq |E(G_a)|+|E_w|$.

\begin{center}                          
\begin{minipage}{1.0\linewidth}
\begin{tabbing} 
  \rule{\textwidth}{0.25pt}\\ 
  {\bf Procedure {\sc Trim$_{\textsc{eds}}$-non-split}} 
  (on $S \subseteq \cert(A)$)\\
  \rule{\textwidth}{0.25pt}\\ 
  x\tabme{}
  
  \> remove from $S$ all locally incorrect certificates\\
  \> $C \leftarrow $ minimum vertex cover of $G[A, \compl A]$\\
  \> \textbf{for} all $Q \subseteq R \subseteq C$ \textbf{do}\\
  \> \> $R_a \leftarrow R \cap A$\\
  \> \> $R_{\compl a} \leftarrow R \cap \compl A$\\
  \> \> mark one certificate $G_i \in S$ 
        minimizing $\card{E_{i}} + \card{E(G_i)}$ where:\\
  \> \> \> $V(G_i) \cap C = R_a$,\\
  \> \> \> $C \setminus (R_{\compl a} \cup A) \subseteq N(V(G_i))$, and\\
  \> \> \> $E_{i} \leftarrow$ a minimum subset of 
           $E( G[V(G_i), R_{\compl a}])$ spanning 
           $(I(G_i) \cup R_{\compl a}) \setminus Q$\\
  \> \> (if no such $G_i$ exists, 
         then don't mark any certificate)\\
  \> return the set of all the marked certificates in $S$\\

 \rule{\textwidth}{0.25pt}
\end{tabbing}
\end{minipage}
\end{center}

\begin{figure}[h!]
  \centering
  \includegraphics[scale=0.9]{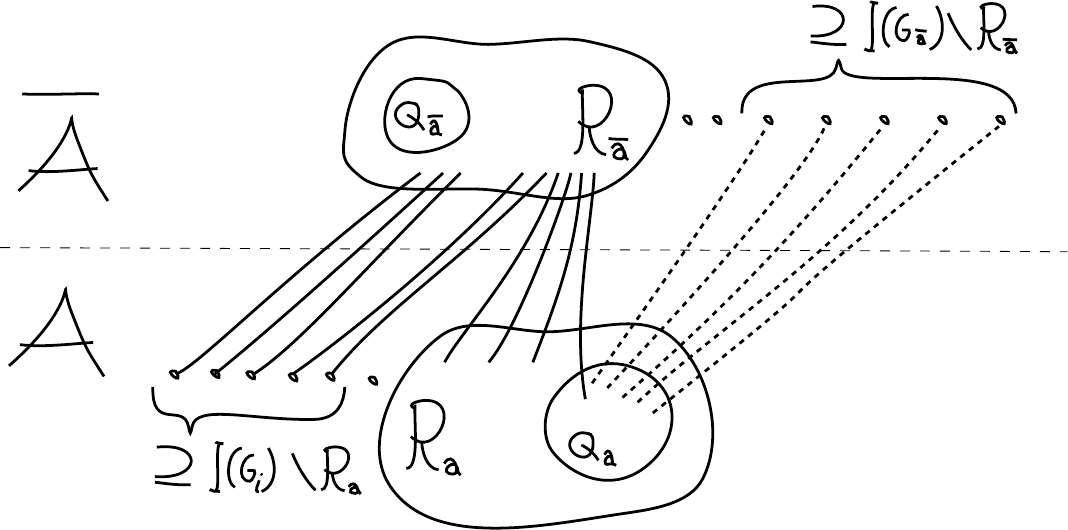}
  \caption{As described in the proof of
    Lemma~\ref{lemma:eds:trim:non-split}. The dotted lines constitute
    the set $E_I$.}
  \label{fig:eds:trim}
\end{figure}

\begin{lemma}    \label{lemma:eds:trim:non-split}    Procedure    {\sc
    Trim$_{\textsc{eds}}$-non-split} produces  a set $S'  \better_A S$
  of size at most $3^{\mm(A)}$ in time $\bigoh^*(3^{\mm(A)}|S|)$.
\end{lemma}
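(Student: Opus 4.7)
The proof has three parts: the size bound $|S'| \leq 3^{\mm(A)}$, the runtime bound $\bigoh^*(3^{\mm(A)}|S|)$, and the preservation property $S' \better_A S$. The first two parts are quick. The outer for-loop ranges over pairs $Q \subseteq R \subseteq C$, of which there are exactly $3^{|C|}$ (each element of $C$ is in $Q$, in $R \setminus Q$, or outside $R$). Since $C$ is a minimum vertex cover of the bipartite graph $G[A, \compl A]$, K\"onig's theorem gives $|C| = \mm(A)$. Each iteration marks at most one certificate, so $|S'| \leq 3^{\mm(A)}$. For the runtime, each iteration scans $S$ with polynomial work per $G_i$: the two structural conditions on $V(G_i)$ are checked in polynomial time, and the minimum $E_i$ in $E(G[V(G_i), R_{\compl a}])$ spanning $(I(G_i) \cup R_{\compl a}) \setminus Q$ is found in polynomial time via Lemma~\ref{lemma:EDS_matching_argument} (applied to the induced bipartite subgraph after discarding vertices outside the span requirement), yielding $\bigoh^*(3^{\mm(A)}|S|)$.

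For preservation, fix an important $G_a \in S$ with witness $W = G_a + G_{\compl a} + E_w$ where $E_w \subseteq E(G[V(G_a), V(G_{\compl a})])$. I need to exhibit $R$ and $Q$ such that the marked $G'$ at that iteration yields a witness $W'$ of at most $|E(W)|$ edges. I set $R_a := V(G_a) \cap C$ and $R_{\compl a} := V(E_w) \cap C \cap \compl A$, and observe that edges of $E_w$ with $\compl A$-endpoint in $C$ all lie in $E(G[V(G_a), R_{\compl a}])$. The plan is to let $E_I$ be the subset of $E_w \cap E(G[V(G_a), R_{\compl a}])$ chosen to span $(I(G_a) \cup R_{\compl a}) \setminus Q$ for the $Q$ defined next (the dotted edges in Figure~\ref{fig:eds:trim}). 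I place into $Q$ every vertex of $R$ whose coverage in $W$ does \emph{not} rely on $E_I$: namely, $Q \cap R_a$ contains every $v \in R_a$ that is non-isolated in $G_a$, together with every $v \in I(G_a) \cap R_a$ whose $E_w$-covering edge has $\compl A$-endpoint outside $C$; $Q \cap R_{\compl a}$ contains the vertices of $R_{\compl a}$ that are non-isolated in $G_{\compl a}$ or incident to an edge of $E_w \setminus E_I$.

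With these choices I verify that $G_a$ itself qualifies as a candidate at iteration $(R, Q)$: the equality $V(G_a) \cap C = R_a$ is immediate; the neighborhood condition $C \setminus (R_{\compl a} \cup A) \subseteq N(V(G_a))$ holds because a vertex $v \in C \cap \compl A \setminus R_{\compl a}$ is not incident to $E_w$, so its incident crossing edges are dominated in $W$ only via their $A$-endpoints in $V(G_a)$, forcing $v \in N(V(G_a))$; and $E_I$ spans $(I(G_a) \cup R_{\compl a}) \setminus Q$ by construction. Hence the minimum spanner $E_{G_a}$ computed by the procedure satisfies $|E_{G_a}| \leq |E_I|$, and the $G'$ marked at this iteration obeys $|E(G')| + |E_{G'}| \leq |E(G_a)| + |E_I|$. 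I then form $W' := G' + G_{\compl a} + E_w'$ with $E_w' := E_{G'} \cupdot (E_w \setminus E_I)$, and check domination case by case: internal edges of $G[A]$ are covered by the vertex cover $V(G')$ of $G[A]$ together with the $E_{G'}$-spanning of $I(G')$; internal edges of $G[\compl A]$ are untouched; crossing edges at $R_a$-vertices transfer via the preserved $C$-interface $R_a = V(G') \cap C$; crossing edges at $R_{\compl a}$-vertices are covered by $E_{G'}$ or by $E_w \setminus E_I$; all remaining crossing edges are handled exactly as in $W$. Summing, $|E(W')| \leq |E(G_a)| + |E(G_{\compl a})| + |E_I| + |E_w \setminus E_I| = |E(W)| \leq t$.

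The main obstacle is making the choices of $R$, $Q$, and $E_I$ precise enough that simultaneously (i) $G_a$ qualifies at $(R, Q)$, (ii) $E_I$ serves both as a spanner witnessing $|E_{G_a}| \leq |E_I|$ and as the ``discardable'' portion in the swap $E_w \to E_w'$, and (iii) $E_w'$ still dominates every crossing edge of $G$ and spans every vertex of $V(W')$. The delicate subcase is an $R_a$-vertex that is isolated in $G_a$ but non-isolated in $G'$ (or vice versa), where the resulting difference in $I(G')$ versus $I(G_a)$ must be absorbed by the choice of $Q$; Figure~\ref{fig:eds:trim} makes this decomposition $E_w = E_I \cupdot (E_w \setminus E_I)$ visually clear and is what the formal accounting follows.
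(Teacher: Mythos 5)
Your size and runtime accounting is fine and your overall plan is the same as the paper's (pick an iteration $(R,Q)$ determined by the witness $W=G_a+G_{\compl a}+E_w$, check that $G_a$ itself qualifies there so the marked $G'$ satisfies $\card{E(G')}+\card{E_{G'}}\le\card{E(G_a)}+\card{E_I}$, then swap $G_a$ for $G'$). But your concrete choices of $R_{\compl a}$ and $Q$ break the two steps that carry the argument. First, with $R_{\compl a}:=V(E_w)\cap C\cap\compl A$, your justification of the neighborhood condition for $G_a$ is false: a vertex $v\in C\cap\compl A$ that is incident to no edge of $E_w$ may nevertheless lie in $V(G_{\compl a})$ and be covered by an edge of $E(G_{\compl a})$; then every crossing edge at $v$ is dominated through $v$ itself, and no $A$-neighbour of $v$ need lie in $V(G_a)$, so $C\setminus(R_{\compl a}\cup A)\subseteq N(V(G_a))$ can fail. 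Then $G_a$ need not be a candidate at your iteration, the marked certificate (if any) is not comparable to $G_a$, and the inequality you rely on has no basis. The paper avoids this by setting $R=V(G_w)\cap C$, so that $R_{\compl a}=V(G_{\compl a})\cap C$, and by moving the vertices of $R_{\compl a}$ not touched by $E_w$ into $Q_{\compl a}$ rather than out of $R$; with that choice, $v\in C\setminus(R_{\compl a}\cup A)$ really is outside $V(G_w)$ and the forcing argument you attempted does go through.

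Second, your $Q\cap R_a$ (``every $v\in R_a$ non-isolated in $G_a$'') ties $Q$ to the isolation pattern of $G_a$, but the marked $G'$ can have a different pattern: if $v\in R_a$ is non-isolated in $G_a$ but isolated in $G'$, and $v$ is incident to no kept edge of $E_w\setminus E_I$, then $v$ is exempt from the spanning requirement (it lies in $Q$) yet nothing in $E_w'=E_{G'}\cupdot(E_w\setminus E_I)$ covers it, so $v$ is an isolated vertex of $W'$ and $W'$ is not a witness. This is exactly the ``delicate subcase'' you flag at the end, and your choice of $Q$ does not absorb it. The paper's resolution is to make membership in $Q\cap A$ mean ``guaranteed to be covered by a kept crossing edge'': its kept set is precisely the edges of $E_w$ from $C\cap A$ to $\compl A\setminus C$, and $Q_a$ is exactly the set of $A$-vertices incident with these edges, so for any marked $G_i$ every vertex of $I(G_i)$ is covered either by $E_i$ (if outside $Q$) or by a kept edge (if inside $Q$), independently of how $I(G_i)$ compares to $I(G_a)$. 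Relatedly, your kept set $E_w\setminus E_I$ may contain edges whose $A$-endpoint lies in $V(G_a)\setminus C$ and hence possibly outside $V(G')$, so $W'$ need not even belong to $\conc(G',G_{\compl a})$; the paper's orientation of the split (keep only edges leaving $C\cap A$, whose $A$-endpoints lie in $R_a=V(G_i)\cap C$, and discard all edges into $C\cap\compl A$, replacing them wholesale by $E_i$) is what makes the swap well defined. As it stands, the preservation part of your proof does not go through without redefining $R_{\compl a}$, $Q$ and the kept/discarded decomposition along these lines.
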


\begin{qedproof}
  The size of  $S'$ is apparent from the fact  that we iterate through
  all sets $Q \subseteq R \subseteq C$ (at most $3^{\mm(A)}$ triples),
  and  mark one  certificate in  $S$  to later  be put  in the  output
  set.  When deciding which  certificate to  mark, we  possibly search
  through the  entire set $S$, and  do a polynomial amount  of work on
  each  certificate in  $S$ to  check if  it is  the  certificate that
  should be  marked in this iteration.  This gives a  total runtime of
  $\bigoh^*(3^{\mm(A)}|S|)$.
  
  By the definition of the $\better_A$-relation, we have $S' \better_A
  S$ if for every important  certificate $G_a \in S$ and $G_{\compl a}
  \in \cert(\compl  A)$ so that $\conc(G_a, G_{\compl  a})$ contains a
  witness, the set  $\conc(S', \set{ G_{\compl a} })$  also contains a
  witness.

  Let $G_w = G_a + G_{\compl a} + E_w$ be a witness of $\conc(G_a,
  G_{\compl a})$.  %
  We show that the algorithm will mark a certificate $G_i$ so that the
  set $\conc(G_i, G_{\compl a})$ also contains a witness, thereby
  establishing the lemma.
  By our definition of  $\conc$, the set  $E_w$ is a
  subset of edges  from $E( G[V(G_a), V(G_{\compl a})] )$. Let  $R'$ be the
  vertices of  $V(G_w) \cap  C$ and  let $E_I$ be  the edges  of $E_w$
  where one  endpoint is in  $C \cap A$  and the other endpoint  is in
  $\compl  A  \setminus  C$.  Note   that  $E_I$  must  span  the  set
  $I(G_{\compl a}) \setminus R'$. We  let $Q_a$ be the set of vertices
  in  $R' \cap  A$ that  are  incident with  edges of  $E_I$, and  let
  $Q_{\compl a}$  be the vertices of  $R' \cap \compl  A$ not incident
  with any edge in $E_w$ (see Figure~\ref{fig:eds:trim}).
  
  At the iteration of the algorithm  where $Q = Q_a \cup Q_{\compl a}$
  and $R = R'$, the algorithm marks a certificate $G_i$ to be put into
  $S'$.  This means  for the  certificate $G_i$  there is  a  set $E_i
  \subseteq E( G[V(G_i),  R_{\compl a}] )$ spanning  $I(G_i) \setminus Q_a$
  and  $R_{\compl  a}\setminus  Q_{\compl  a}$. Furthermore,  we  have
  $|E_i| + |E(G_i)| \leq|E_w  \setminus E_I| + |E(G_a)|$, as otherwise
  the algorithm would prefer marking $G_a$ over marking $G_i$.

  We will now show that the certificate $G_w' = G_i + G_{\compl a} +
  E_I + E_i \in \conc(G_i, G_{\compl a})$ is a witness. To do this, it
  will be enough to show the following three things: (1) $V(G_w')$ is
  a vertex cover of the original graph $G$, (2) $|E(G_w')| \leq |E(G_w)|$,
  and (3) $I(G_w')$ is empty.

  The first point, that $V(G_w')$ is a vertex cover, we get from the
  fact that $G_w'$ is a vertex cover of $G[A, \compl A]$ and that as
  both $G_i$ and $G_{\compl a}$ are locally correct $V(G_i)$ and
  $V(G_{\compl a})$ are vertex covers of $G[A]$ and $G[\compl A]$,
  respectively. So, $G_w'$ is a vertex cover of $G[A] + G[\compl A] +
  G[A, \compl A] = G$.

  The second point, $|E(G_w')| \leq |E(G_w)|$, holds by the following
  inequality.
  \begin{align*}
    |E(G_w')| &= |E(G_i)|  + |E_i| + |E(G_{\compl a})| + |E_I|\\
    &\leq |E(G_a)| + |E_w \setminus E_I| + |E(G_{\compl a})| + |E_I| =
    |E(G_w)|\enspace .
  \end{align*}

  What remains to prove is the third point, that $I(G_w')$ is
  empty. To do this, we will show that each vertex in $I(G_i)$ and
  $I(G_{\compl a})$ is spanned by the edges $E_I + E_i$.
  We defined the vertices of $Q_{\compl a}$ to not be incident with
  any edges of $E_w$, and we know $I(G_w)$ is empty, so we can
  conclude that $Q \cap I(G_{\compl a})$ is empty also.  This means
  that as $E_i$ spans the set $R \cap \compl A \setminus Q$, in fact
  $E_i$ spans all the vertices of $I(G_{\compl a}) \cap R$. We defined
  $E_I$ to be all the edges of $E_w$ with endpoints in $\compl A
  \setminus R$, and $E_w$ spans $I(G_{\compl a})$, so $E_I$ must span
  $I(G_{\compl a}) \setminus R$.  From this we conclude that $I(G_w')
  \setminus A$ is empty.
  Now we must show that also  $I(G_w') \cap A$ is empty. We know $E_i$
  spans  all the  vertices  of  $I(G_w') \cap  A  \setminus Q_{a}$  by
  definition, and the set $Q_a$ is  exactly the set of vertices in $A$
  that are incident with edges of $E_I$. So, $E_I \cup E_i$ also spans
  $I(G_w') \cap A$, and hence $I(G_w') = \emptyset$.
\end{qedproof}

Next, we give the following lemma, which will be used directly in the
join operation of our algorithm solving $t$-{\sc Edge Dominating
  Set}. It will be helpful in reducing the number of certificates
needed to ensure we have a set preserving $\conc(G_1, G_2)$ for given
certificates $G_1$ and $G_2$. 
The idea behind this lemma is that for any witness $G_1 + G_2 + G_3 +
E_w$, we can substitute the set $E_w$ with another set $E_w'$ spanning
all the isolated vertices of $G_1, G_2$ and $G_3$ to produce a new
witness, as long as $|E_w'| \leq |E_w|$. This means that a lot of
different certificates $G_1 + G_2 + E_t \in \conc(G_1, G_2)$ will
generate a witness when combined with the same certificate $G_3$ as
long as there is a set $E_t'$ so that $E_t' + E_t$ spans the isolated
vertices of $G_1, G_2$ and $G_3$. Having this in mind, we are able to
limit the number of certificates needed to preserve $\conc(G_1, G_2)$
greatly.

\begin{lemma} \label{lemma:eds:rep}
  For any disjoint sets $A_1, A_2 \subseteq V(G)$ and certificates
  $G_1 \in \cert(A_1)$ and $G_2 \in \cert(A_2)$, we can in
  $\bigoh^*(2^{\sm(A_1)} + 2^{\sm(A_2)})$-time compute two set
  families $\edsc(G_1, G_2) \subseteq 2^{V(G_1)}$ and $\edsc(G_2,G_1)
  \subseteq 2^{V(G_2)}$ where the following holds:
  \begin{enumerate}
  \item $|\edsc(G_1,G_2)| \leq \max{\set{2^{\sm(A_1)},n}}$ and $|\edsc(G_2,G_1)| \leq \max{\set{2^{\sm(A_2)},n}}$, and
  \item there is a set $S \subseteq \conc(G_1, G_2)$ preserving
    $\conc(G_1, G_2)$, where for each certificate $G_1 + G_2 + E_s \in
    S$ the set $E_s$ has $V(E_s) \cap A_1 \in \edsc(G_1 , G_2)$ and
    $V(E_s) \cap A_2 \in \edsc(G_2 , G_1)$.
  \end{enumerate}
\end{lemma}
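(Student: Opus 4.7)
My plan is to construct $\edsc(G_1,G_2)$ (and symmetrically $\edsc(G_2,G_1)$) by a case split on whether $\scut{A_1}$ is a split, and then establish the existence of the preserving set $S$ via a substitution argument acting independently on the two sides.

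For the split case, note that $\sm(A_1)=1$ and all vertices in $V(G_1)\cap N(\compl{A_1})$ share the same neighborhood in $\compl{A_1}$, so any two non-isolated border vertices of $V(G_1)$ are perfectly interchangeable as endpoints of $E_s$-edges. I would therefore store one representative set $I(G_1)\cup Z$ in $\edsc(G_1,G_2)$ for each cardinality $0\le |Z|\le |V(G_1)\cap N(\compl{A_1})\setminus I(G_1)|$, giving at most $n$ sets. For the non-split case, I would compute a minimum vertex cover $C_1$ of $G[A_1,\compl{A_1}]$ (of size exactly $\mm(A_1)=\sm(A_1)$ by K\"onig's theorem) and enumerate all subsets $X\subseteq C_1\cap V(G_1)$. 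For each $X$ I would add to $\edsc(G_1,G_2)$ a canonical set $Y_X$ containing $X$, all of $I(G_1)$, and the minimum additional $A_1$-endpoints needed to span $I(G_1)$ via edges of $G[V(G_1),V(G_2)]$, as furnished by Lemma~\ref{lemma:EDS_matching_argument}. This yields at most $2^{\sm(A_1)}$ sets, meeting the stated size bound in all cases, and the enumeration plus per-iteration polynomial work yields the time bound $\bigoh^*(2^{\sm(A_1)}+2^{\sm(A_2)})$.

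To prove the preservation claim, I would start with any important certificate $G_1+G_2+E_s$ together with a choice of $G_3$ and edges $E_w$ extending it to a witness. I would rewire on the $A_1$-side: in the split case, exchange the ``extra'' non-isolated border vertices of $V(E_s)\cap A_1$ for the chosen representative of the same cardinality, rerouting the corresponding endpoints of $E_w$ using the split's symmetry; this leaves the total edge count unchanged. In the non-split case, set $X=V(E_s)\cap C_1\cap V(G_1)$ and substitute $Y_X$, exploiting the fact that every edge of $E_s\cup E_w$ with an $A_1$-endpoint already hits $C_1$ at one side, while the minimality of the spanning supplement in $Y_X$ prevents a size increase. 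An analogous substitution on the $A_2$-side then yields an $E_s'$ of no larger size with $V(E_s')\cap A_i \in \edsc(G_i,G_{3-i})$ for $i=1,2$, so that $G_1+G_2+E_s'$ lies in the desired preserving set $S$.

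The main difficulty will be making the rewiring argument precise in the non-split case, where after the substitution one must exhibit actual edge sets $E_s'$ and $E_w'$ of total cardinality at most $|E_s|+|E_w|$ that correctly span all isolated vertices of $G_1$, $G_2$, $G_3$ and maintain the edge dominating property. I expect to carry this out via a matching/flow argument in the bipartite graph between $Y_X$ and $C_1\cap(V(G_2)\cup V(G_3))$, combined with the minimum-spanning guarantee of Lemma~\ref{lemma:EDS_matching_argument}, in the spirit of the vertex-cover bookkeeping in the proof of Lemma~\ref{lemma:eds:trim:non-split}.
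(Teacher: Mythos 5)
Your overall plan (case split on whether $\scut{A_1}$ is a split, enumeration of traces on a minimum vertex cover of $G[A_1,\compl{A_1}]$, canonical minimum spanning supplements, and a rewiring argument on a given witness) is the same in spirit as the paper's, but both of your family constructions have a genuine flaw: every set you place in $\edsc(G_1,G_2)$ contains all of $I(G_1)$, i.e.\ you force every isolated vertex of $G_1$ to be an endpoint of a cross edge into $A_2$. That is not sound, because in a witness $G_1+G_2+G_3+E_s\cup E_w$ an isolated vertex of $G_1$ may be spanned only by an edge of $E_w$ going into $\compl{A_1\cup A_2}$, and that edge may simultaneously span an isolated vertex of $G_3$. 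Concretely, suppose $I(G_1)=\set{v_1,v_2}$, $I(G_2)=\emptyset$, $I(G_3)=\set{u}$, the witness uses $E_s=\set{v_1w}$ with $w\in A_2$ and $E_w=\set{v_2u}$, and the budget $t$ equals the size of this witness. Any certificate whose $A_1$-endpoint set lies in your family must have at least two edges of $E_s'$ (one at $v_1$, one at $v_2$, since each cross edge has only one endpoint in $A_1$), while $u$ still forces $\card{E_w'}\geq 1$; so every admissible substitute exceeds the budget, no set $S$ restricted to your families preserves this important certificate, and condition (2) fails. The paper never insists that $I(G_1)$ be spanned through $A_2$: in the split case it stores, for each cardinality $i$, a set of $i$ vertices maximizing its intersection with $I(G_1)$ (so small endpoint sets omitting part of $I(G_1)$ remain available), and in the non-split case the canonical replacement edges may span $I(G_1)\setminus R_1$ toward the complement rather than toward $A_2$.

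A second, related shortcoming is your choice of index set in the non-split case: you enumerate only $X\subseteq C_1\cap V(G_1)$ plus a supplement spanning $I(G_1)$ via edges to $V(G_2)$. This forgets where the witness's edges leaving $A_1\setminus C_1$ land on the cover. Those edges are exactly the ones the rewiring must delete (their $A_1$-endpoints are not canonical), and they may have been the only edges spanning cover vertices in $A_2$ or in $\compl{A_1\cup A_2}$; moreover, which isolated vertices of $G_1$ can be re-spanned, and on which side, depends on these attachment sets. This is why the paper parametrizes by the three traces $R_1\subseteq C\cap A_1$, $R_2\subseteq C\cap A_2$, $R_3\subseteq C\cap \compl{A_1\cup A_2}$ (at most $2^{\card{C}}$ triples) and takes a minimum set of edges from $I(G_1)\setminus R_1$ to $R_2\cup R_3$ spanning $(I(G_1)\setminus R_1)\cup R_2\cup R_3$; with your coarser index the canonical endpoint set cannot be defined so that the rewired witness is guaranteed to span everything without extra edges. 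To repair your proof you would need to adopt both corrections: drop the requirement $I(G_1)\subseteq Y_X$ and enrich the index to record the cover traces on both other sides (and, in the split case, replace ``$I(G_1)$ plus $Z$'' by cardinality-$i$ sets maximizing the number of isolated vertices they contain).
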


\begin{qedproof}
  Let $A_3 = \compl {A_1 \cup A_2}$.  We will give a construction of
  the set families $\edsc(G_1 , G_2)$ and show that for any
  certificate $G_z = G_1 + G_2 + E_z$ in $\conc(G_1, G_2)$, there is a
  certificate $G_z' = G_1 + G_2 + E_z'$ in $\conc(G_1, G_2)$ so that
  $G_z' \better_{\compl {A_3}} G_z$ and that $V(E_z') \cap A_2 =
  V(E_z) \cap A_2$ and $V(E_z') \cap A_1 \in \edsc(G_1 , G_2)$. By
  similar construction and argument for $\edsc(G_2 , G_2)$, we can
  conclude that constraint (\emph{2.})  holds for the two constructed
  set families. That the two set families can be constructed within
  the proposed time bound and that the size of the sets are as stated
  in constraint (\emph{1.}) is evident from how we are going to
  construct them.

  Suppose for certificate $G_z = G_1 + G_2 + E_z \in \conc(G_1, G_2)$
  there is some certificate $G_3 \in \cert(A_3)$ so
  that we have a witness $G_w = G_1 + G_2 + G_3 + E_w$ in $\conc(G_3,
  G_z)$. Let $C$ be a vertex cover of $G[{A_1}, \compl {A_1}]$. We are
  particularly interested in the following three parts of $E_w$
  incident with $A_1$ (see Figure~\ref{fig:eds:setFamily}).
  \begin{itemize}
  \item $E_1$: the subset of edges that go from $A_1 \cap C$ to $A_2$,
  \item $E_2$: the subset of edges between $A_1 \setminus C$ and $A_2
    \cap C$, and
  \item $E_3$: the edges that go from $A_1 \setminus C$ to $A_3 \cap
    C$.
  \end{itemize}
  We denote by $R_1$, $R_2$ and $R_3$ the endpoints $A_1 \cap V(E_1)$,
  $A_2 \cap V(E_2)$, and $A_3 \cap V(E_3)$, respectively.

  As $G_w$ is a witness, the edges in $E_w$ must span all the isolated
  vertices of $G_1$, $G_2$, and $G_3$. We notice that the edges
  $E(G_w) \setminus (E_2 \cup E_3)$ span all the vertices of $V(G_w)$
  except possibly some vertices in $I(G_1) \cup R_2 \cup
  R_3$. Therefore, for any subset $E' \subseteq E(I(G_1)\setminus R_1,
  \;R_2 \cup R_3)$ spanning $(I(G_1)\setminus R_1) \cup R_2 \cup R_3$,
  it will be the case that $E(G_w) + E' - E_2 - E_3$ spans
  $V(G_w)$. Furthermore, if $\card{E'} \leq \card{E_2 \cup E_3}$, then
  the certificate $G_z' = G_w - E_2 - E_3 + E'$ will be a witness.  To
  ensure that constraint (\emph{2.}) is satisfied, it thus suffice to
  ensure that the vertices in $A_1$ adjacent to vertices in $A_2$ in
  $G_z'$ constitute a set in $\edsc(G_1 , G_2)$. For $G_z'$, these
  vertices are exactly $R_1 \cup (V(E') \cap A_1)$. What we notice is
  that $E'$ only depended on $G_1, R_1, R_2$ and $R_3$. So, if for
  each choice of $R_1$, $R_2$, and $R_3$, we compute the $E'$ as we
  did above, we will have made a set family satisfying constraint
  (\emph{2.}) and which can be computed within a polynomial factor in
  $n$ of its size.

  What we may now observe, is that since $R_1 \subseteq C \cap A_1$,
  $R_2 \subseteq C \cap A_2$ and $R_3 \subseteq C \cap A_3$, the
  different possibilities for $R_1, R_2$ and $R_3$ combined is at most
  $2^{|C|}$. This means the size of the set family is at most
  $2^{\mm(A_1)}$ and the time to compute it is
  $\bigoh^*(2^{\mm(A_1)})$.

  If $\scut{A_1}$ is a split, however, it might be the case that
  $\sm(A_1) < \mm(A_1)$. But when it is a split, as the neighbourhood
  of each vertex in $I(G_1)$ is the same, as long as there is a set
  $S_i \in \edsc(G_1,{G_2})$ of $i$ vertices maximizing
  $\max\{\card{S_i \cap I(G_1)}\}$ for each $i \in \{0, \ldots,
  \card{V(G_1)}\}$, the set family $\edsc(G_1,{G_2})$ satisfies constraint
  (\emph{2.}). The size constraint follows from the fact that we only
  need at most $n$ sets $S_i$, and clearly we can compute the set
  family in the runtime stated, as it only takes polynomial amount of
  time to generate $S_i$ greedily for each $i$.
\end{qedproof}

\begin{figure}[h!]
  \centering
  \includegraphics{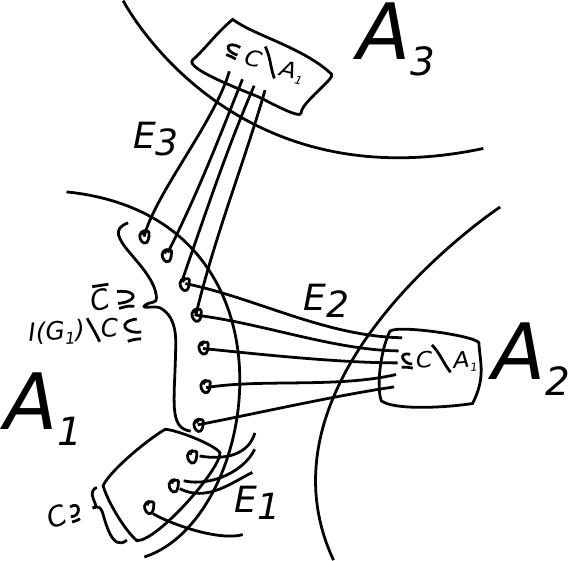}
  \caption{As described in proof of Lemma~\ref{lemma:eds:rep}. The
    vertices in each of the three rectangles are subsets of the vertex
    cover $C$ of $A_1$, so these sets can be of at most $2^{|C|}$
    possibilities.}
  \label{fig:eds:setFamily}
\end{figure}

\begin{center}
\begin{minipage}{1.0\textwidth}
\begin{tabbing} 
  \rule{\textwidth}{0.25pt}\\ 
  {\bf Procedure {\sc Join$_{\textsc{eds}}$}} 
  (\= on node $w$ with children $a,b$ and $A_1=V_a, A_2 =V_b$\\
   \> and $A=A_1 \cup A_2$ and given $S_1 \better_{A_1} \cert(A_1)$ and\\
   \> $S_2 \better_{A_2} \cert(A_2)$)\\
  \rule{\textwidth}{0.25pt}\\ 
  x\tabme{}
  \> $S \leftarrow \emptyset$ \\
  \> \textbf{for} each $G_1 \in S_1$, $G_2 \in S_2$ \textbf{do}\\
  \> \> $\mathcal{V}_1 \leftarrow \edsc(G_1 , G_2)$ from Lemma \ref{lemma:eds:rep}\\
  \> \> $\mathcal{V}_2 \leftarrow \edsc(G_2 , G_1)$ from Lemma \ref{lemma:eds:rep}\\
  \> \> \textbf{for} each $X_1 \in \mathcal{V}_1$, $X_2 \in \mathcal{V}_2$ 
        \textbf{do}\\
  \> \> \> $E' \leftarrow$ minimum sized subset of $E( G[X_1, X_2] )$ 
           spanning $X_1 \cup X_2$\\
  \> \> \> add to $S$ the certificate $G_1 + G_2 + E'$\\[6pt]
  
  \> \textbf{if} $\scut {A}$ is a split \textbf{then}
  \textbf{return} \textsc{Trim$_{\textsc{eds}}$-split}($S \subseteq \cert(A)$)\\
  \> \textbf{else} \textbf{return} 
     \textsc{Trim$_{\textsc{eds}}$-non-split}($S \subseteq \cert(A)$)\\
  \rule{\textwidth}{0.25pt}
\end{tabbing}
\end{minipage}
\end{center}

\begin{lemma}\label{lemma:eds:join}
  The algorithm \textsc{Join$_{\textsc{eds}}$} is correct and runs in
  time $\bigoh^*(\card{S_1}\card{S_2}12^{k})$, producing a set $S$ of
  cardinality $\bigoh(n^2 + 3^k)$ when both $\sm(A_1)$, $\sm(A_2)$,
  and ${\sm(A_1 \cup A_2)}$ is at most $k$.
\end{lemma}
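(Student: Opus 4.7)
The plan is to split the argument into three pieces: establishing that the pre-trim set produced by the main loop preserves $\cert(A)$, applying the appropriate trim procedure to shrink it while maintaining preservation, and then bounding runtime and size via the supporting lemmas.

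For the main-loop preservation, fix a pair $(G_1, G_2) \in S_1 \times S_2$. By Lemma~\ref{lemma:eds:rep} there exist families $\mathcal{V}_1 = \edsc(G_1, G_2)$ and $\mathcal{V}_2 = \edsc(G_2, G_1)$ together with a subset $S^* \subseteq \conc(G_1, G_2)$ that preserves $\conc(G_1, G_2)$, each of whose members $G_1 + G_2 + E_s$ satisfies $V(E_s) \cap A_i \in \mathcal{V}_i$ for $i = 1,2$. For such a member with $X_i := V(E_s) \cap A_i$, I would argue that the certificate $G_1 + G_2 + E'$ produced by the algorithm, where $E'$ is a minimum-cardinality edge set in $E(G[X_1, X_2])$ spanning $X_1 \cup X_2$, preserves $G_1 + G_2 + E_s$: both certificates have the same vertex set $V(G_1) \cup V(G_2) \cup X_1 \cup X_2$, so for any $G_3 \in \cert(\compl{A})$ they admit the same sets of bridge edges to $V(G_3)$; replacing $E_s$ by $E'$ inside any witness weakly reduces the total edge count while keeping all vertices non-isolated (since $E'$ spans $X_1 \cup X_2 = V(E_s)$) and preserving the edge-domination property. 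Existence and polynomial-time computation of $E'$ come from Lemma~\ref{lemma:EDS_matching_argument}. Since the algorithm enumerates all $(X_1, X_2) \in \mathcal{V}_1 \times \mathcal{V}_2$, the pre-trim $S$ preserves $\conc(G_1, G_2)$ for every pair, hence preserves $\conc(S_1, S_2) = \cert(A)$. Applying \textsc{Trim$_{\textsc{eds}}$-split} or \textsc{Trim$_{\textsc{eds}}$-non-split} then yields a preserving output by Lemma~\ref{lemma:eds:trim:split} or Lemma~\ref{lemma:eds:trim:non-split}.

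For runtime and size, each of the $|S_1| \cdot |S_2|$ outer iterations builds the two families in $\bigoh^*(2^{\sm(A_1)} + 2^{\sm(A_2)}) = \bigoh^*(2^k)$ and enumerates at most $|\mathcal{V}_1| \cdot |\mathcal{V}_2| \leq (\max\{2^k, n\})^2 = \bigoh^*(4^k)$ pairs, each contributing polynomial work via Lemma~\ref{lemma:EDS_matching_argument}. So the pre-trim $|S|$ is $\bigoh^*(|S_1| |S_2| \cdot 4^k)$. If $\scut{A}$ is a split, Lemma~\ref{lemma:eds:trim:split} gives output of size $\bigoh(n^2)$ in time $\bigoh^*(|S|)$; otherwise $\mm(A) = \sm(A) \leq k$ and Lemma~\ref{lemma:eds:trim:non-split} gives output of size $\bigoh(3^k)$ in time $\bigoh^*(3^k |S|) = \bigoh^*(|S_1||S_2| \cdot 12^k)$. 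Together these yield runtime $\bigoh^*(|S_1| |S_2| \cdot 12^k)$ and output size $\bigoh(n^2 + 3^k)$.

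The main obstacle I expect is the substitution step in the correctness argument: verifying carefully that swapping $E_s$ for the minimum spanning $E'$ preserves every witness that $E_s$ helped produce. The delicate point is that a witness $G_1 + G_2 + G_3 + E_s + E_b$ (with $E_b$ the bridge edges to $\compl{A}$) must remain a witness after the swap to $G_1 + G_2 + G_3 + E' + E_b$: non-isolation of $V(E_s) = V(E') = X_1 \cup X_2$ is maintained because $E'$ spans the same vertices, edge-domination survives because any edge previously dominated by an edge of $E_s$ is still dominated by an edge of $E'$ incident to the same vertex in $X_1 \cup X_2$, and the cardinality bound is preserved since $|E'| \leq |E_s|$. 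Once this substitution is in place, the runtime and size calculations and the invocation of Lemmas~\ref{lemma:eds:trim:split} and \ref{lemma:eds:trim:non-split} reduce to routine bookkeeping.
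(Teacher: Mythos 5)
Your proposal is correct and follows essentially the same route as the paper: invoke Lemma~\ref{lemma:eds:rep} to restrict attention to crossing-edge sets whose endpoint projections lie in the two families, argue that replacing such an edge set by a minimum-cardinality spanning subset on the same endpoints (Lemma~\ref{lemma:EDS_matching_argument}) preserves witnesses, and then apply Lemmas~\ref{lemma:eds:trim:split} and \ref{lemma:eds:trim:non-split} for the $\bigoh(n^2+3^k)$ size and $\bigoh^*(\card{S_1}\card{S_2}12^k)$ runtime bookkeeping. (One harmless nit: $\conc(S_1,S_2)$ need not equal $\cert(A)$, but preserving $\conc(S_1,S_2)$ is exactly what the framework requires, so your argument is unaffected.)
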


\begin{qedproof}
  For $G_1 \in \cert(A_1)$ and $G_2 \in \cert(A_2)$, if $G^* = G_1 +
  G_2 + E^* \in \conc(G_1, G_2)$ and $G' = G_1 + G_2 + E'$ where
  $V(E') = V(E^*)$, then $G' \better_{A_1 \cup A_2} G^*$ as long as
  $\card{V(E')} \leq \card{V(E^*)}$. Therefore, the set of
  certificates $S$ generated in the first part of the algorithm must
  preserve the set ``S'' from point 2.\ in the statement of
  Lemma~\ref{lemma:eds:rep}, which in turn implies that $S
  \better_{A_1 \cup A_2} \conc(S_1, S_2)$. By
  Lemma~\ref{lemma:eds:trim:split} and
  Lemma~\ref{lemma:eds:trim:non-split} this means the output of the
  algorithm is a set of size at most $n^2 + 3^{\sm(A_1 \cup A_2)}$
  that preserves $\conc(S_1, S_2)$.

  From Lemma~\ref{lemma:eds:rep}, for each pair of certificates, it
  takes $\bigoh^*(2^{k})$ time to compute $\edsc(G_1,G_2)$ and $\edsc(G_2,G_1)$, and for
  each pair we generate at most
  $\card{\edsc(G_1,G_2)}\card{\edsc(G_2,G_1)}$ certificates. So, before
  the call to one of the \textsc{Trim$_{\textsc{eds}}$}-procedures,
  the algorithm uses $\bigoh^*((2^{2k})\card{S_1}\card{S_2})$-time and
  $S$ contains at most $(2^{2k})\card{S_1}\card{S_2}$ certificates.
  The total runtime, including the call to the respective
  \textsc{Trim$_{\textsc{eds}}$}-procedure, must then by
  Lemma~~\ref{lemma:eds:trim:split} and
  Lemma~\ref{lemma:eds:trim:non-split} be
  $\bigoh^*(3^{k}2^{2k}\card{S_1}\card{S_2})$.
\end{qedproof}

\begin{theorem} \label{thm:eds} Given a graph $G$ and branch
  decomposition $(T, \delta)$ of \SMW\ $k$, we can solve %
  \textsc{Edge Dominating Set} in time $\bigoh^*(3^{5k})$.
\end{theorem}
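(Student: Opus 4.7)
The plan is to mimic the pattern used in the proofs of Theorems~\ref{thm:maxCut}, \ref{thm:hc}, and \ref{thm:col}: combine the {\sc Recursive} framework with the {\sc Join$_{\textsc{eds}}$} procedure, extract the size bound on certificate sets from Lemma~\ref{lemma:eds:join}, and then push that bound back into the join's runtime bound to get a per-node cost, and finally sum over all nodes of the decomposition tree. The only additional step compared to {\sc Hamiltonian Cycle} is that we are really solving the decision variant $t$-{\sc Edge Dominating Set} and must sweep over values of $t$.

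More concretely, first I would observe that by Lemma~\ref{lemma:eds:join}, every call of {\sc Join$_{\textsc{eds}}$} at an inner node of $(T,\delta)$ returns a preserving set $S$ of size $\bigoh(n^2 + 3^k)$, which is $\bigoh^{*}(3^k)$ after absorbing the polynomial factor into the $\bigoh^{*}$. Since {\sc Recursive} calls {\sc Join$_{\textsc{eds}}$} on the outputs of recursive calls on its two children, the sizes $|S_1|$ and $|S_2|$ supplied to any join are themselves bounded by $\bigoh^{*}(3^k)$. Plugging these into the runtime bound $\bigoh^{*}(|S_1||S_2| \cdot 12^k)$ from Lemma~\ref{lemma:eds:join} gives a per-node runtime of $\bigoh^{*}(3^{2k} \cdot 12^k)$, and since $12^k = 3^k \cdot 4^k \leq 3^{3k}$, this is bounded by $\bigoh^{*}(3^{5k})$.

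Next I would note that $T$ has only linearly many nodes, so the total cost of running {\sc Recursive} at the root is still $\bigoh^{*}(3^{5k})$. At the root we obtain a set $S \better_{V(G)} \cert(V(G))$, and a polynomial-time verifier decides for each $G' \in S$ whether $E(G')$ is an edge dominating set of size at most $t$ and whether every vertex of $V(G')$ is incident with an edge of $E(G')$; by the definition of $\better$ and of a witness, the instance is a 'yes'-instance of $t$-{\sc Edge Dominating Set} exactly when some $G' \in S$ passes the verifier. To obtain {\sc Edge Dominating Set} itself, I would simply run the $t$-{\sc Edge Dominating Set} algorithm for each $t \in \{1,2,\ldots,|E(G)|\}$ and return the smallest successful $t$; this adds only a polynomial overhead and hence preserves the $\bigoh^{*}(3^{5k})$ runtime.

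I do not expect any conceptual obstacle, since all the heavy lifting is already done by Lemma~\ref{lemma:eds:join} and the {\sc Recursive}/$\better$ framework; the only subtlety worth spelling out carefully is the arithmetic that $3^{2k} \cdot 12^k \leq 3^{5k}$ and the observation that the input sizes of a join are dominated by the output size of the preceding join, so that the geometric growth in certificate sizes does not occur.
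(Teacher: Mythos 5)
Your proposal is correct and follows essentially the same route as the paper's own proof: invoke Lemma~\ref{lemma:eds:join} for correctness, the $\bigoh(n^2+3^k)$ output bound, and the $\bigoh^{*}(|S_1||S_2|12^k)$ join cost, feed the output bound back in as the input bound, multiply by the linear number of nodes, and sweep over polynomially many values of $t$. If anything, your explicit arithmetic ($3^{2k}\cdot 12^k \leq 3^{5k}$) is spelled out more carefully than in the paper, whose proof text garbles the intermediate bounds (it states $|S_1|,|S_2| \leq \bigoh^{*}(3^{5k})$ and a per-call cost of $\bigoh^{*}(3^k)$, apparently swapping the two).
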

\begin{qedproof}
  In Lemma~\ref{lemma:eds:join} we showed that the procedure
  \textsc{Join$_{\textsc{eds}}$} is correct and runs in time
  $\bigoh^*(\card{S_1}\card{S_2}12^{k})$, producing a set $S$ of
  cardinality $\bigoh(n^2 + 3^k)$. So, using {\sc Recursive} with {\sc
    Join$_{\textsc{eds}}$}, we know the size of both of the inputs of
  {\sc Join$_{\textsc{eds}}$} is at most the size of its output, i.e.,
  $|S_1|, |S_2| \leq \bigoh^*(3^{5k})$. So, each call to {\sc
    Recursive} has runtime at most $\bigoh^*(3^{k})$. As there are
  linearly many calls to {\sc Recursive} and there is a polynomial
  time verifier for the certificates {\sc Recursive} produces, by the
  definition of $\better$, the total runtime is also bounded by
  $\bigoh^*(3^{5k})$. To solve {\sc Edge Dominating Set}, we run the
  $t$-{\sc Edge Dominating Set} algorithm for all values of $t \leq n$
  and hence this is also solvable in $\bigoh^*(3^{5k})$ time as we
  exclude polynomials of $n$.
\end{qedproof}
\section{Graphs of bounded sm-width}

In this section we discuss graph classes of bounded sm-width.

\newcounter{props}\setcounter{props}{\value{theorem}}
\begin{proposition}\label{prop:1}
  If treewidth is bounded then sm-width is bounded $(\smw(G) \leq
  tw(G)+1)$ which in turn means that clique-width is bounded.
\end{proposition}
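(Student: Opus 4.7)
The plan is to derive the inequality $\smw(G) \leq \tw(G)+1$ by reducing to Vatshelle's bound from Theorem \ref{vatshelle}, via a pointwise comparison of the cut functions $\sm$ and $\mm$. The key observation is that for every graph $G$ and every $A \subseteq V(G)$, we have $\sm(A) \leq \mm(A)$. Indeed, if $(A, \overline{A})$ is not a split, then by definition $\sm(A) = \mm(A)$. If it is a split, then $\sm(A) = 1$; on the other hand, a split of a connected graph has at least one crossing edge (otherwise the two sides form distinct connected components), and any single such edge gives a matching of size one in $G[A, \overline{A}]$, so $\mm(A) \geq 1$. (If $G$ is disconnected one argues per component, or simply notes that a cut with no crossing edges has both $\sm$ and $\mm$ bounded by the same trivial value.)

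Once the pointwise inequality is established, it transfers directly to the width parameters: for any branch decomposition $(T, \delta)$ of $G$, the maximum of $\sm$ over the cuts induced by edges of $T$ is at most the corresponding maximum of $\mm$, so the $\sm$-width of $(T, \delta)$ is at most its $\mm$-width. Taking the minimum over all branch decompositions gives $\smw(G) \leq \mmw(G)$. Combining this with Theorem \ref{vatshelle} yields
\[
\smw(G) \;\leq\; \mmw(G) \;\leq\; \tw(G) + 1,
\]
which is the stated bound.

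For the remark that bounded treewidth also implies bounded clique-width, I would invoke the classical result of Corneil and Rotenberg \cite{CR} cited in the introduction (giving, e.g., $\operatorname{cw}(G) \leq 3 \cdot 2^{\tw(G)-1}$). Together with the $\smw \leq \tw+1$ bound, this gives the chain of implications claimed in the proposition. There is no real obstacle: the argument is a definitional chase plus an appeal to two earlier cited results, with the only minor subtlety being the edge case that a split actually has matching number at least one.
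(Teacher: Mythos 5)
Your derivation of $\smw(G) \le \tw(G)+1$ is exactly the paper's: the pointwise inequality $\sm(A) \le \mm(A)$ (which the paper simply declares clear, and which you justify correctly, including the split case) gives $\smw(G) \le \mmw(G)$, and Theorem~\ref{vatshelle} finishes that half.

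The gap is in the second clause. In the paper, ``which in turn means that clique-width is bounded'' is the assertion that bounded \emph{sm-width} implies bounded clique-width --- this is precisely what places sm-width below clique-width, the point of the proposition --- and proving it requires a real argument, which the paper supplies: by a result of Rao \cite{Rao}, the clique-width of $G$ is at most twice the maximum clique-width of the prime graphs in a split decomposition of $G$; every prime graph is an induced subgraph of $G$, so by Observation~\ref{obs:inducedSubgraph} its sm-width is at most $\smw(G)$; on a prime graph sm-width coincides with MM-width (no non-trivial splits), so Theorem~\ref{vatshelle} bounds its treewidth in terms of $\smw(G)$, and then \cite{CR} bounds its clique-width; Rao's bound lifts this to a clique-width bound for $G$ itself. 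Your appeal to \cite{CR} alone only shows that bounded treewidth implies bounded clique-width, i.e.\ you re-derive the final conclusion directly from the hypothesis of the proposition, never using the sm-width bound at all. Under that reading the clause is vacuous, and the statement the paper actually establishes --- that clique-width is stronger than sm-width, so that sm-width genuinely sits between the two parameters --- is left unproved. To repair the proposal you need the split-decomposition argument sketched above (or some other proof that bounded sm-width forces bounded clique-width); note that this step is not a definitional chase, since it is exactly where split decompositions and the prime-graph structure enter.
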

\begin{proposition}\label{prop:2}
  If twin-cover is bounded then clique-width and sm-width is bounded
  ($\smw(G) \leq tc(G)$).
\end{proposition}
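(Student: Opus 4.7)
The plan is to use the structure imposed by a twin-cover directly to build a caterpillar-shaped branch decomposition whose cuts are either splits (because they live inside a twin-class) or have a small vertex cover (namely the twin-cover itself). Let $S$ be a twin-cover of $G$ with $|S| = k = tc(G)$. By definition of twin-cover, every edge of $G - S$ lies between true twins, so each connected component of $G - S$ is a set $C_i$ of pairwise true twins; in particular $C_i$ is a clique in $G$ and every vertex in $C_i$ has the same neighborhood in $V(G) \setminus C_i$ (namely the common neighbors of $C_i$ in $S$). Let $C_1, \dots, C_r$ be the components of $G - S$.

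I would construct $(T, \delta)$ as follows. Take a caterpillar whose spine has $r + k$ nodes; to one spine node attach a single leaf for each vertex $s \in S$, and to another spine node attach a small binary subtree $T_i$ whose leaves are mapped bijectively to the vertices of $C_i$. I would then classify each cut $(A, \overline{A})$ induced by an edge of $T$ into exactly one of three types and verify that in each case $\sm(A) \le k$ (treating $k \ge 1$; the case $tc(G) = 0$ is handled separately since then $G$ is a disjoint union of cliques and $\smw(G) \le 1$).

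The three cases are: (i) the cut lies inside some $T_i$, so $A$ or $\overline{A}$ is a proper non-empty subset of $C_i$; here, when both sides have at least two vertices, the fact that the vertices of $C_i$ are true twins means all vertices in $A$ share the same neighborhood in $\overline{A}$, so $(A,\overline{A})$ is a split and $\sm(A) = 1$, while if one side is a singleton then $\mm(A) \le 1$; (ii) the cut separates a single vertex of $S$, giving $\mm(A) \le 1$; (iii) the cut is along the spine, so each $C_i$ sits entirely on one side and the only edges crossing the cut have at least one endpoint in $S$. In case (iii), $S$ is a vertex cover of the bipartite graph $G[A, \overline{A}]$, so by K\"onig's theorem $\mm(A) \le |S| = k$, hence $\sm(A) \le k$.

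Putting the three cases together gives $\sm(A) \le k$ for every cut induced by $(T,\delta)$, and therefore $\smw(G) \le tc(G)$. The main obstacle I expect is purely bookkeeping around the degenerate configurations (singletons, $|C_i| = 1$, or $r = 0$), where the formal inequality forces one to verify that the cuts peeling off a single leaf or separating $C_i$ as a whole really do fit in cases (i)--(iii); there is no deeper combinatorial difficulty, because once each $C_i$ is kept in a contiguous subtree the only non-split cuts are exactly those whose crossing edges all hit $S$.
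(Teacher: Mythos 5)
Your construction is essentially the paper's own proof: the paper likewise takes a twin-cover $S$, builds a branch decomposition in which each twin-class component $C_i$ occupies its own subtree and $S$ is kept separate, and observes that cuts inside a $C_i$-subtree are splits while every other cut has maximum matching at most $|S| \leq tc(G)$; your caterpillar layout, the explicit K\"onig step, and the extra care with singleton and $tc(G)=0$ degeneracies are only cosmetic refinements. The one piece you do not address is the clique-width half of the statement, which the paper gets by citing Gajarsk{\'y} et al.'s bound $mw(G) \leq 2^{tc(G)}+tc(G)$, and which alternatively follows from your sm-width bound combined with the argument in the proof of Proposition~\ref{prop:1} that bounded sm-width implies bounded clique-width.
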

\begin{proposition}\label{prop:3}
  Cographs, the graphs of clique-width at most two, have sm-width one,
  while distance-hereditary graphs have clique-width at most three and
  sm-width one.
\end{proposition}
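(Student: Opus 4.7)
The clique-width bounds --- two for cographs and three for distance-hereditary graphs --- are classical, so the work is to establish the sm-width bound in both cases. In each case I would exhibit a branch decomposition $(T, \delta)$ of $G$ in which every internal-edge cut is a split of $G$; leaf-edge cuts of the form $(\{v\}, V(G) \setminus \{v\})$ automatically have $\mm$-value at most one, so this suffices.

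For cographs, my plan is to use the cotree. Binarize it so each internal node has exactly two children, preserving the union/join labels, and take the result as $T$ with the natural leaf identification. For any internal edge of $T$ with bipartition $(A, B)$, we have $|A|, |B| \geq 2$, and every vertex $v \in A$ has the same neighbourhood in $B$: by the recursive structure of the cograph, this neighbourhood is exactly the union, over each join-labelled ancestor of the $A$-subtree in $T$, of the leaves of the sibling subtree at that ancestor. Since this set does not depend on $v$, the cut $(A, B)$ is a split of $G$.

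For distance-hereditary graphs, I would invoke the well-known characterisation that every prime graph in the split decomposition is either a clique or a star. Taking a split decomposition $G_1, \ldots, G_q$, pick an arbitrary binary branch decomposition $(T_i, \delta_i)$ of each $G_i$ on $V(G_i)$ and combine them into $(T, \delta)$ by identifying marker leaves as in step 4 of Section 3. For any edge inside some $T_i$ inducing a cut $(X, Y)$ of $V(G_i)$ with $|X|, |Y| \geq 2$, the bipartition $(X, Y)$ is a split of $G_i$: in a clique this is immediate, and in a star the only vertex with any neighbour across the cut is the centre, so the split condition is vacuously satisfied. For any edge of $T$ that was originally incident to a marker-leaf in some $T_i$, the induced cut of $G$ is $(\tot(m{:}G_i), V(G) \setminus \tot(m{:}G_i))$ for the corresponding marker $m$, which is precisely the split of $G$ represented by $m$.

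The remaining step is a lifting lemma: if $(X, Y)$ is a split of $G_i$ with $|X|, |Y| \geq 2$, then $(\tot(X{:}G_i), \tot(Y{:}G_i))$ is a split of $G$. This follows from the standard observation that splitting $G_i$ along $(X, Y)$ produces a finer split decomposition of $G$ whose new marker pair induces precisely this cut of $G$. The main obstacle I anticipate is giving this lifting lemma a clean self-contained proof, since it requires tracing adjacencies of $G$ through chains in the split decomposition tree; the cotree argument and the clique/star case analysis are routine.
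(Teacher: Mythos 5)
Your proof is correct, but it takes a genuinely different route from the paper's. The paper handles both claims in a few lines: cographs are a subclass of distance-hereditary graphs, distance-hereditary graphs have clique-width at most three (cited), and by Oum's theorem every distance-hereditary graph has a branch decomposition of cut-rank one, whose induced cuts are splits (or have a singleton side), giving sm-width one. You instead construct the decompositions by hand: a binarized cotree for cographs, where each subtree's leaf set is a module of the graph and hence each internal cut a split, and, for distance-hereditary graphs, the glued-together branch decompositions of the clique/star components of a split decomposition, plus a lifting lemma saying that a split $(X,Y)$ of a component $G_i$ lifts to the split $(\tot(X{:}G_i),\tot(Y{:}G_i))$ of $G$. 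Your route buys self-containedness (no appeal to rank-width) at the cost of more work. Two remarks. First, under this paper's terminology a prime graph has no splits at all, so cliques and stars on four or more vertices are not prime; either phrase your argument in terms of Cunningham's canonical decomposition, or note that distance-hereditary graphs are totally decomposable, so the paper-style split decomposition bottoms out in trivial prime graphs on at most three vertices, whereupon the trivial-prime case in the proof of Theorem~\ref{thm:PrimBranch} already shows that every induced cut of $G$ is a split or has a singleton side -- this bypasses your lifting lemma entirely. Second, the lifting lemma you flag as the main obstacle is short with the $\tot/\act$ machinery: for distinct $u,v\in V(G_i)$, the edges of $G$ between $\tot(u{:}G_i)$ and $\tot(v{:}G_i)$ are exactly those between $\act(u{:}G_i)$ and $\act(v{:}G_i)$, and they are present precisely when $uv\in E(G_i)$; hence every vertex of $\tot(X{:}G_i)$ with a neighbour across the lifted cut has neighbourhood $\bigcup_{v\in N^{*}}\act(v{:}G_i)$ there, where $N^{*}$ is the common neighbourhood in $Y$ guaranteed by the split of $G_i$, and both sides of the lifted cut have size at least two.
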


\begin{proof}[Proof of Proposition~\ref{prop:1},\ref{prop:2} and \ref{prop:3}]
Firstly, by  the definition of sm-width
and MM-width it is clear that  for any graph $G$ we have $\smw(G) \leq
\mmw(G)$ and it then  follows by Theorem \ref{vatshelle} (by Vatshelle
\cite{Vatshelle}) that $\smw(G) \leq  \tw(G)+1$.  Let us argue that if
sm-width is  bounded on  a class of  graphs, then so  is clique-width.
Rao \cite{Rao} shows that the clique-width of a graph is at most twice
the maximum clique-width of all  prime graphs in a split decomposition
of the  graph. By  Observation \ref{obs:inducedSubgraph} and  the fact
that every prime graph is an induced subgraph we know that sm-width of
a graph  is at  least as large  as the  maximum sm-width of  all prime
graphs.  Theorem \ref{vatshelle} and the fact that sm-width of a prime
graph is equal  to the MM-width of the prime graph,  tells us that the
sm-width  of  a prime  graph  is  bounded  whenever its  treewidth  is
bounded.  Thus, since clique-width  is stronger  than treewidth  it is
also stronger than sm-width.

Secondly, twin-cover $tc(G)$ is a graph parameter introduced by Ganian
\cite{GanianIPEC2011}  as a  generalization  of vertex  cover that  is
bounded also  for some  dense classes of  graphs.  Gajarsk{\'y}  et al
\cite{IPEC2013} in their study  of the modular-width parameter $mw(G)$
showed that $mw(G) \leq 2^{tc(G)}+tc(G)$ which in turn implies that if
twin-cover is bounded for a  class of graphs then also clique-width is
bounded.  We show that $\smw(G)  \leq tc(G)$.  Using the definition of
$tc(G)$ from  \cite{GanianIPEC2011} it follows  that $G$ has a  set $S
\subseteq V(G)$ of at most  $tc(G)$ vertices such that every component
$C$ of  $G \setminus S$ induces a  clique and every vertex  in $C$ has
the same neighborhood in $S$.   Let $C_1,...,C_q$ be the components of
$G \setminus S$.  Take any branch decomposition of $G$ having for each
component  $C_i$ a subtree  $T_i$ such  that the  leaves of  $T_i$ are
mapped to the vertices of $C_i$  and also having a subtree $T_S$ whose
leaves are mapped to  $S$. The cuts of $G$ induced by  an edge of this
branch decomposition  are of three  types depending on where  the edge
is: if it is inside the tree  of $S$ the cut has a maximum matching of
size at most $|S|\leq tc(G)$; if it  is inside a tree $T_i$ the cut is
a split; otherwise the cut has a maximum matching of size at most $|S|
\leq tc(G)$.

Thirdly,  any cograph  is  also a  distance-hereditary  graph and  any
distance-hereditary  graph $G$  has  clique-width at  most three,  see
e.g.  \cite{HOSG}. Also, $G$  has a  branch decomposition  of cut-rank
one, as shown by Oum \cite{Oum},  which means that all cuts induced by
edges  of   this  branch  decomposition  are  splits   and  hence  any
distance-hereditary graph has sm-width one. 
\end{proof}

There are several classes of graphs of bounded sm-width where no previous results
implied FPT algorithms for the considered problems. 
We now show a class of such graphs, constructed by combining a graph of 
clique-width at most 3, with a graph of treewidth $k$ and thus
clique-width at most $2^{k/2}$, as follows.  Let $G_1$ be a
distance-hereditary graph and let $G_2$ be a graph of treewidth $k$.
Let $X \subseteq V(G_1)$ with $|X| \leq k+1$ and $(X, \overline{X})$ a
split of $G_1$, and let $Y \subseteq V(G_2)$ be a bag of a tree
decomposition of $G_2$ of treewidth $k$.
Add  an arbitrary set  of edges  on the  vertex set  $X \cup  Y$.  The
resulting  graph will  have  sm-width  at most  $k+1$,  a result  that
basically follows  by taking branch decompositions of  $G_1$ and $G_2$
where $X$ and $Y$  each are mapped as the set of  leaves of a subtree,
subdividing each of  the two edges above these  subtrees and adding an
edge on the subdivided vertices  to make a single branch decomposition
of the combined graph.

Note that we can also construct new tractable classes of graphs by
combining several graphs in a tree structure.

\section{Conclusions} \label{sec:conclusion}

We have shown that four basic problems, that cannot be FPT
parameterized by clique-width unless FPT~=~W[1], are FPT when
parameterized by the split-matching-width of the graph, a parameter
whose modelling power is weaker than clique-width but stronger than
treewidth.  This was accomplished using the theory of split
decompositions and the recently introduced MM-width,
combined with slightly non-standard dynamic programming algorithms on
the resulting decompositions. Graph classes of bounded sm-width will
consist of graphs having low treewidth in local parts, with these
parts connected together in a dense manner.  We have not found
references to such graph classes in the litterature.  Nevertheless,
the appealing algorithmic properties of these graph classes should
merit an interest in their further study.

\bibliography{sm}
\end{document}